\def\unity{\mathbbm 1}
\def\tr{\mathrm{tr}}
\def\U{\mathrm{U}}
\def\SU{\mathrm{SU}}
\def\A{\mathsf A}
\def\B{\mathsf B}
\def\C{\mathsf C}
\def\E{\mathsf E}
\def\D{\mathcal N}
\def\M{\mathcal M}
\def\V{\mathcal V}
\def\F{\mathcal F}
\def\u{\mathbf u}
\def\f{\mathbf f}
\def\g{\mathbf g}
\def\h{\mathbf h}
\def\complex{\mathbb C}
\newcommand{\proj}[1]{|#1\rangle\! \langle#1|}
\newtheorem{Definition}{Definition}
\newtheorem{Lemma}[Definition]{Lemma}
\newtheorem{Theorem}[Definition]{Theorem}
\newtheorem{Corollary}[Definition]{Corollary}
\begin{document}

\title{The measurement postulates of quantum mechanics are operationally redundant}

\author{Llu\'\i s Masanes}
\affiliation{Department of Physics and Astronomy, University College London, Gower Street, London WC1E 6BT, United Kingdom}
\author{Thomas D.~Galley} 
\email{tgalley1@perimeterinstitute.ca}
\affiliation{Department of Physics and Astronomy, University College London, Gower Street, London WC1E 6BT, United Kingdom}
\affiliation{Perimeter Institute for Theoretical Physics, Waterloo, ON N2L 2Y5, Canada}
\author{Markus P.~M\"uller}
\affiliation{Institute for Quantum Optics and Quantum Information, Austrian Academy of Sciences, Boltzmanngasse 3, A-1090 Vienna, Austria}
\affiliation{Perimeter Institute for Theoretical Physics, Waterloo, ON N2L 2Y5, Canada}

\date{\today}

\begin{abstract}\noindent
Understanding the core content of quantum mechanics requires us to disentangle the hidden logical relationships between the postulates of this theory.
Here we show that the mathematical structure of quantum measurements, the formula for assigning outcome probabilities (Born's rule) and the post-measurement state-update rule, can be deduced from the other quantum postulates, often referred to as ``unitary quantum mechanics", and the assumption that ensembles on finite-dimensional Hilbert spaces are characterised by finitely many parameters.
This is achieved by taking an operational approach to physical theories, and using the fact that the manner in which a physical system is partitioned into subsystems is a subjective choice of the observer, and hence should not affect the predictions of the theory. 
In contrast to other approaches, our result does not assume that measurements are related to operators or bases, it does not rely on the universality of quantum mechanics, and it is independent of the interpretation of probability.
\end{abstract}
\maketitle

\section{Introduction}

\noindent
What sometimes is postulated as a fundamental law of physics is later on  understood as a consequence of more fundamental principles.
An example of this historical pattern is the rebranding of the symmetrization postulate as the spin-statistics theorem \cite{PhysRev.82.914}.
Another example, according to some authors, is the Born rule, the formula that assigns probabilities to quantum measurements. 
The Born rule has been derived within the framework of quantum logic \cite{Gleason_measures_1957, Cooke_elementary_1985, Pitowsky_infinite_1998, Wilce_quantum_2017}, taking an operational approach \cite{Saunders_derivation_2004, Busch_quantum_2003, Caves_Gleason_2004,wright_gleason-type_2018}, and using other methods \cite{Logiurato_born_2012, auffeves_simple_2015, Han_Quantum_2016,Frauchiger_nonprobabilistic_2017,shrapnel_updating_2018}.
But all these derivations assume, among other things, the mathematical structure of quantum measurements, that is, the correspondence between measurements and orthonormal bases, or more generally, positive-operator valued measures \cite{Holevo_statistical_1973, Helstrom76}.

Taking one step further, the structure of measurements together with the Born rule can be jointly derived within the many-worlds interpretation of quantum mechanics (QM) \cite{Deutsch_quantum_1999, Wallace_how_2010} and the framework of entanglement-assisted invariance \cite{Zurek_probabilities_2005, Zurek20180107,Barnum_no_2003,Schlosshauer_zureks_2005}. But these derivations involve controversial uses of probability in deterministic multiverse scenarios, which have been criticized by a number of authors \cite{Barnum_quantum_2000, Kent_one_2010, Baker_measurement_2006, Hemmo_quantum_2007, Lewis_Peter_2010, Price_decisions_2010, Albert_probability_2010, Caves_note_2004, Schlosshauer_zureks_2005, Barnum_no_2003, mohrhoff_probabilities_2004}.
Also, these frameworks require the universality of QM, meaning that the measurement apparatus and/or the observer has to be included in the quantum description of the measuring process.
While this is a meaningful assumption, it is interesting to see that it is not necessary, as proven in the present article.

In this work we take an operational approach, with the notions of measurement and outcome probability being primitive elements of the theory, but without imposing any particular structure on them.
We use the fact that the subjective choices in the description of a physical setup in terms of operational primitives must not affect the predictions of the theory.
For example, deciding to describe a tripartite system $\A\cdot \B\cdot \C$ as either the bipartite system $\A\B\cdot \C$ or as $\A\cdot \B\C$ must not modify the outcome probabilities.
Using these constraints we characterize all possible alternatives to the mathematical structure of quantum measurements and the Born rule, and we prove that there is no such alternative to the standard measurement postulates.
This theorem has simple and precise premises, it does not require unconventional uses of probability theory, and it is independent of the interpretation of probability.
A further interesting consequence of this theorem is that the post-measurement state-update rule must necessarily be that of QM.

The structure of this article is the following. Section~\ref{sec:main result} reviews the postulates of QM, introduces a new formalism that allows to specify any alternative to the measurement postulates, and uses this formalism to state the main result of this work: the measurement theorem.
Section~\ref{sec:discussion} illustrates this theorem with two interesting examples, and contrasts our result with Gleason's theorem \cite{Gleason_measures_1957}.
Section~\ref{sec:methods} provides a bird's eye view of the proof of the theorem, which is fully detailed in the appendicies.
Finally, Section~\ref{sec:conclusions} concludes with some important remarks.

\section{Results}\label{sec:main result}

\noindent

\subsection{The standard postulates of QM}

Before presenting the main result we prepare the stage appropriately.
This involves reviewing some of the postulates of QM, reconstructing the structure of mixed states from them, and introducing a general characterization of measurements that is independent of their mathematical structure.

\medskip\noindent \textbf{Postulate (states).}
\emph{To every physical system there corresponds a complex and separable Hilbert space $\complex^d$, and the pure states of the system are the rays $\psi \in \mathrm P\complex^d$.}

\medskip\noindent
It will be convenient to use the notation $\complex^d$ both for Hilbert spaces of finite dimension $d$, and also for countably infinite-dimensional Hilbert spaces which we denote by $\complex^\infty$. This notation is justified, since all countably infinite-dimensional Hilbert spaces are isomorphic~\cite{rudin1991functional}.
Analogously we use $U(\infty)$ to denote the unitary transformations of $\complex^\infty$. In this document we represent states (rays) by normalized vectors $\psi \in \complex^d$.

\medskip\noindent \textbf{Postulate (transformations).}
\emph{The reversible transformations (for example, possible time evolutions) of pure states of $\complex^d$ are the unitary transformations $\psi \mapsto U\psi$ with $U\in \U(d)$.}

\medskip\noindent \textbf{Postulate (composite systems).}
\emph{The joint pure states of systems $\complex^a$ and $\complex^b$ are the rays of the tensor-product Hilbert space  $\complex^a \otimes \complex^b$.}

\medskip\noindent \textbf{Postulate (measurement).}
\emph{Each measurement outcome of system $\complex^d$ is represented by a linear operator $Q$ on $\complex^d$ satisfying $0\leq Q \leq \unity$, where $\unity$ is the identity.}
The probability of outcome $Q$ on state $\psi \in \complex^d$ is 
\begin{equation}
  \label{def:Born Rule}
  P(Q|\psi)
  = 
  \langle\psi| Q |\psi\rangle\ .
\end{equation}
A (full) measurement is represented by the operators corresponding to its outcomes $Q_1,\ldots, Q_n$, which must satisfy the normalization condition $\sum_{i=1}^n Q_i = \unity$.

\medskip\noindent
The more traditional formulation of the measurement postulate in terms of (not necessarily positive) Hermitian operators is equivalent to the above. 
But we have chosen the above form because it is closer to the formalism used in the presentation of our results.

\medskip\noindent \textbf{Postulate (post-measurement state-update).}
\emph{Each outcome is represented by a completely-positive linear map $\Lambda$ related to the operator $Q$ via
\begin{equation}
  \tr \Lambda (\proj\psi)
  =
  \langle\psi| Q |\psi\rangle
  \ ,
\end{equation}
for all $\psi$.
The post-measurement state after outcome $\Lambda$ is
\begin{equation}
  \rho
  =
  \frac {\Lambda (\proj\psi)}
  {\tr \Lambda (\proj\psi)}\ .
\end{equation}
A (full) measurement is represented by the maps corresponding to its outcomes $\Lambda_1, \ldots, \Lambda_n$ whose sum $\sum_{i=1}^n \Lambda_i$ is trace-preserving.}

\medskip
If the measurement is repeatable and minimally disturbing~\cite{Ozawa_quantum_1984, Chiribella_Sharpness_2014} then $Q_1,\ldots, Q_n$ are projectors and the above maps are of the form $\Lambda_i (\rho) = Q_i \rho Q_i$, which is the standard textbook ``projection postulate". 
Below we prove that the ``measurement" and ``post-measurement state-update" postulates are a consequence of the first three postulates.

\subsection{The structure of mixed states}
\label{SubsecMixed}

\noindent
Mixed states are not mentioned in the standard postulates of QM, but their structure follows straightaway from the measurement postulate \eqref{def:Born Rule}.
Recall that a mixed state is an equivalence class of indistinguishable ensembles, and an ensemble $(\psi_r, p_r)$ is a probability distribution over pure states. 
Note that the notion of distinguishability depends on what the measurements are.
For the particular case of quantum measurements \eqref{def:Born Rule}, the probability of outcome $Q$ when a source prepares state $\psi_r$ with probability $p_r$ is
\begin{equation}\label{eq:ensemb prob}
  P\big(Q\big|(\psi_r, p_r) \big)
  =
  \sum_r p_r P\big(Q\big|\psi_r \big)
  =
  \tr\big( Q \rho \big)\ ,
\end{equation}
where we define the density matrix
\begin{equation}
  \label{def:density matrix}
  \rho = \sum_r p_r \proj{\psi_r}\ .
\end{equation}
This matrix contains all the statistical information of the ensemble.
Therefore, two ensembles with the same density matrix are indistinguishable.

The important message from the above is that a different measurement postulate would give different equivalence classes of ensembles, and hence, a different set of mixed states. Thus, in proving our main result, we will not assume that mixed states are of the form~(\ref{def:density matrix}).
An example of mixed states for a non-quantum measurement postulate is described in section~\ref{sec:example 1}.

\subsection{Formalism for any alternative measurement postulate}

\noindent
Before proving that the only possible measurement postulate is that of QM, we have to articulate what ``a measurement postulate" is in general.
In order to do so, we introduce a theory-independent characterization of measurements for single and multipartite systems.
This is based on the concept of outcome probability function (OPF), introduced in \cite{Galley_classification_2017} and defined next.

\medskip\noindent \textbf{Definition (OPF).}
\emph{Each measurement outcome that can be observed on system $\complex^d$ is represented by the function $\f: \mathrm P \complex^d \to [0,1]$ being its corresponding probability $\f(\psi) = P(\f|\psi)$ for each pure state $\psi \in \mathrm P \complex^d$;
and we denote by $\F_d$ the complete set of OPFs of system $\complex^d$. }
Completeness is defined below as the closure of $\F_d$ under various operations.

\medskip\noindent
If instead of a single outcome we want to specify a full measurement with, say, $n$ outcomes, we provide the OPFs $\f_1,\ldots ,\f_n$ corresponding to each outcome; which must satisfy the normalization condition
\begin{equation}
  \label{normalization condition}
  \sum_{i=1}^n \f_i(\psi) =1\ ,
\end{equation}
for all states $\psi$.

It is important to note that this mathematical description of measurements is independent of the underlying interpretation of probability: all we are assuming is that there exist experiments which yield definite outcomes (possibly relative to a given agent who uses this formalism), and that it makes sense to assign probabilities to these outcomes. For example, we could interpret them as Bayesian probabilities of a physicist who bets on future outcomes of experiments; or as limiting frequencies of a large number of repetitions of the same experiment, approximating empirical data. Whenever we have an experiment of that kind, the corresponding probabilities (whatever they mean) will be determined by a collection of OPFs.

The completeness of the set of OPFs $\F_d$ consists of the following three properties: 

\emph{$\F_d$ is closed under taking mixtures.}
Suppose that the random variable $x$ with probability $p_x$ determines which 2-outcome measurement $\f_1^x, \f_2^x \in \F_d$ we implement, and later on we forget the value of $x$. 
Then the probability of outcome 1 for this ``averaged" measurement is
\begin{equation}
  \label{closed mixtures}
  \sum_x p_x\, \f_1^x \in \F_d\ ,
\end{equation}
which must be a valid OPF.
Therefore, mixtures of OPFs are OPFs.

\emph{$\F_d$ is closed under composition with unitaries.}
We can always perform a transformation $U\in \U(d)$ before a measurement $\f \in \F_d$, effectively implementing the measurement 
\begin{equation}
  \label{SU action on F}
  \f\circ U \in \F_d\ ,
\end{equation}
which then must be a valid OPF.
Note that here we are not saying that all unitaries can be physically implemented, but only that the formalism must in principle include them.

\emph{$\F_d$ is closed under systems composition.}
Since $\F_d$ is complete, it also includes the measurements that appear in the description of $\complex^d$ as part of the larger system $\complex^d \otimes \complex^b \cong \complex^{db}$, for any background system $\complex^b$. 
Formally, for each background state $\varphi\in \complex^b$ and global OPF $\g \in \F_{db}$ there is local OPF $\f_{\varphi,\g} \in \F_d$ which represents the same measurement outcome
\begin{align}
  \label{closed with ancilla}
  \f_{\varphi,\g} (\psi)
  =
  \g(\psi \otimes \varphi) \ , 
\end{align}
for all $\psi \in \mathrm P\complex^d$.

Next we consider local measurements in multipartite systems.
In order to do so, it is useful to recall that the observer always has the option of describing a systems $\complex^a$ as part of a larger system $\complex^a \otimes \complex^b$, without this affecting the predictions of the theory. 
In order to do so, the observer needs to know how to represent the OPFs of the small system $\F_a$ as OPFs of the larger system $\F_{ab}$.
This information is contained in the star product, defined in what follows.

\medskip\noindent \textbf{Definition ($\star$-product).}
\emph{Any pair of local OPFs, $\f \in \F_a$ and $\g\in \F_b$, is represented as a global OPF $(\f\star\g) \in \F_{ab}$ via the star product $\star: \F_a \times \F_b \rightarrow \F_{ab}$, which satisfies
\begin{equation}
  \label{prod for prod}
  (\f \star \g)(\psi \otimes \varphi)
  = 
  \f (\psi)\, \g (\varphi)\ ,
\end{equation}
for all $\psi\in \mathrm P\complex^a$ and $\varphi\in \mathrm P\complex^b$.
This product must be defined for any pair of (complex and separable) Hilbert spaces $\complex^a$ and $\complex^b$.}

\medskip\noindent
In other words, the $\star$-product represents bi-local measurements, which in QM are represented by the tensor product in the space of Hermitian matrices.

Since the option of describing system $\complex^a$ as part of a larger system $\complex^a \otimes \complex^b$ is a subjective choice that must not affect the predictions of the theory, the embedding of $\F_a$ into $\F_{ab}$ provided by the $\star$-product must preserve the structure of $\F_a$.
This includes the mixing (convex) structure
\begin{equation}
  \label{commut with mixtures}
  \left(\mbox{$\sum_x$}\, p_x\, \f^x\right) \star \g
  =
  \mbox{$\sum_x$}\, p_x \left(\f^x\! \star \g \right)\ ,
\end{equation}
as well as the $\U(d)$ action
\begin{equation}
  \left(\f \circ U\right) \star \g
  =
  \left(\f \star \g \right) \circ (U\otimes \unity_b)\ .
\end{equation}
And likewise for the other party $\F_b$.
The $\star$-product must also preserve probability, in the sense that if $\{\f_i\} \subseteq \F_a$ and $\{\g_j\} \subseteq \F_b$ are full measurements satisfying the normalization condition \eqref{normalization condition} then we must have
\begin{equation}
    \label{cond:u*u=u}
    \left[(\mbox{$\sum_{i}$} \f_i)
    \star
    (\mbox{$\sum_{j}$} \g_i)\right]\!
    (\psi) =1\ ,
\end{equation}
for all rays $\psi$ of $\complex^a \otimes \complex^b$.

Pushing the same philosophy further, the observer has the option of describing the tripartite system $\complex^a \otimes \complex^b \otimes \complex^c$ as the bipartite system $\complex^a \otimes [\complex^b \otimes \complex^c]$ or the bipartite system $[\complex^a \otimes \complex^b] \otimes \complex^c$, without this affecting the probabilities predicted by the theory.
This translates to the $\star$-product being associative
\begin{equation}
  \label{cond:associativity}
  \f \star \left(\g \star  \h \right)
  =
  (\f \star \g )\star  \h\ .
\end{equation}
That is, the probability of outcome $\f\star \g\star  \h$ is independent of how we choose to partition the global system into subsystems. As we show below, this property will be crucial to recover the standard measurement postulates of quantum mechanics.

\subsection{The measurement theorem}

\noindent
Before stating the main result of this work, we specify what should be the content of any alternative measurement postulate, and state an operationally-meaningful assumption that is necessary to prove our theorem.

\medskip\noindent \textbf{Definition (measurement postulate).}
\emph{This is a family of OPF sets $\F_2, \F_3, \F_4, \ldots$ and $\F_\infty$ equipped with a $\star$-product $\F_a \times \F_b \to \F_{ab}$ satisfying conditions (\ref{closed mixtures}-\ref{cond:associativity}).}

\medskip\noindent
In addition to the above, a measurement postulate could provide restrictions on which OPFs can be part of the same measurement (beyond the normalization condition).
However, such rules would not affect our results. 


\medskip\noindent \textbf{Assumption (possibility of state estimation).}
\emph{Each finite-dimensional system $\complex^d$ has a finite list of outcomes $\f^1, \ldots, \f^k \in \F_d$ such that knowing their value on any ensemble $(\psi_r, p_r)$ allows us to determine the value of any other OPF $\g\in \F_d$ on the ensemble $(\psi_r, p_r)$.}

\medskip\noindent
It is important to emphasize that $\f^1, \ldots, \f^k$ need not be outcomes of the same measurement; and also, this list need not be unique.
For example, in the case of QM, we can specify the state of a spin-$\frac 1 2$ particle with the probabilities of outcome ``up'' in any three linearly independent directions. Also in QM, we have $k=d^2-1$; but here we are not assuming any particular relation between $d$ and $k$.
Now it is time to state the main result of this work, which essentially tells us that the only possible measurement postulates are the quantum ones.

\medskip\noindent \textbf{Theorem (measurement).}
\emph{The only measurement postulate satisfying the ``possibility of state estimation" has OPFs and $\star$-product of the form
\begin{align}
  &\f(\varphi) = 
  \langle\varphi| F|\varphi\rangle\ ,  
  \label{eqFormOfOPFs}
  \\
  &(\f\star\g)(\psi) =
  \langle\psi| F\otimes G |\psi\rangle\ ,
  \label{eqFormOfstar}
\end{align}
for all $\varphi\in \complex^a$ and $\psi \in \complex^a \otimes \complex^b$, where the $\complex^a$-operator $F$ satisfies $0\leq F \leq \unity$, and analogously for $G$.}

\medskip\noindent
The methods section provides a summary of the ideas and techniques used in the proof of this theorem. Full detail can be found in Appendix~\ref{app:nonassoc} and Appendix~\ref{app:inftdim spaces}.

\subsection{The post-measurement state-update rule}

\noindent
At first sight, the above theorem says nothing about the post-measurement state-update rule. 
But actually, it is well-known~\cite{DaviesLewis} that the only possible state-update rule that is compatible with the probability rule implied by the theorem (\ref{eqFormOfOPFs}-\ref{eqFormOfstar})
is the one stated above in postulate ``post-measurement state-update rule". 
We include a self-contained proof of the above in Appendix~\ref{app:post-measurement}. 


\section{Discussion}\label{sec:discussion}

\subsection{Non-quantum measurement postulate violating associativity}
\label{sec:example 1}

\noindent
In this section we present an example of alternative measurement postulate, which shows that it is possible to bypass the measurement theorem if we give up the associativity condition \eqref{cond:associativity}. 
It also illustrates how a different choice of measurement postulate produces a different set of mixed states.

\medskip\noindent \textbf{Definition (non-quantum measurement postulate).}
\emph{An $n$-outcome measurement on $\complex^a$ is characterized by $n$ Hermitian operators $F_i$ acting on $\complex^a \otimes \complex^a$ and satisfying $0\leq F_i \leq P_+^a$ and
\begin{equation}
  \sum_{i=1}^n F_i
  = P_+^a \ ,
\end{equation}
where $P_+^a$ is the projector onto the symmetric subspace of $\complex^a \otimes \complex^a$.
The probability of outcome $i$ on the (normalized) state $\varphi  \in \complex^a$ is given by
\begin{equation}
  \label{ex:opf}
  \f_i (\varphi) = \tr\!\left(
  F_i \proj\varphi ^{\otimes 2} \right)\ ;
\end{equation}
and the $\star$-product of two OPFs $\f\in \F_a$ and $\g\in \F_b$ of the form \eqref{ex:opf} is defined as
\begin{equation*}
  (\f \star \g)(\psi) 
  = \tr\!\left[
  \left( F\otimes G + 
  \mbox{$\frac{\tr\, F}{\tr P^a_+}$} P^a_-
  \otimes 
  \mbox{$\frac{\tr\, G}{\tr P^b_+}$} P^b_-
  \right)\! \proj{\psi}^{\otimes 2} \right],
\end{equation*}
for any normalized $\psi \in \complex^a \otimes \complex^b$.}

\medskip\noindent
This alternative theory violates the principles of ``local tomography"~\cite{Hardy_quantum_2001} and ``purification" \cite{Chiribella_probabilistic_2010}.
This and other exotic properties of this theory are analyzed in detail in previous work \cite{Galley_classification_2017,Galley_impossibility_2018}.
Also, the validity of marginal and conditional states imposes additional constraints on the matrices $F$ which are also worked out in \cite{Galley_impossibility_2018}.
It is easy to check that the above definition satisfies conditions (\ref{closed mixtures}-\ref{cond:u*u=u}) and violates associativity \eqref{cond:associativity}.
Therefore, this provides a perfectly valid toy theory of systems that encompass either one or two components, but not more.

As we have mentioned above, the structure of the mixed states depends on the measurement postulate.
Here, the mixed state corresponding to ensemble $(\psi_r, p_r)$ is
\begin{equation}
  \omega =
  \sum_r p_r \proj{\psi_r}^{\otimes 2}\ .
\end{equation}
Another non-quantum property of this toy theory is that the uniform ensembles corresponding to two different orthonormal bases, $\{ \varphi_i \}$ and $\{\psi_i \}$ are distinguishable
\begin{align}
  \sum_i \frac 1 d \,
  \proj{\varphi_i}^{\otimes 2}
  \neq
  \sum_i \frac 1 d \,
  \proj{\psi_i}^{\otimes 2}\ .
\end{align}



\subsection{Gleason's theorem and non-contextuality}

\noindent
As mentioned in the introduction, Gleason's theorem and many other derivations of the Born rule \cite{Gleason_measures_1957, Cooke_elementary_1985, Pitowsky_infinite_1998, Wilce_quantum_2017, Saunders_derivation_2004, Busch_quantum_2003, Caves_Gleason_2004, Logiurato_born_2012, Han_Quantum_2016} assume the structure of quantum measurements; that is, the correspondence between measurements and orthonormal bases $\{ \varphi_i \}$, or more generally, positive-operator valued measures \cite{Helstrom76}.
But in addition to this, they assume that the probability of an outcome $\varphi_i$ does not depend on the measurement (basis) it belongs to.
Note that this type of ``non-contextuality"  is already part of the content of Born's rule.

To show that this ``non-contextuality"  assumption is by no means necessary, we review an alternative to the Born rule, presented in \cite{Aaronson_quantum_2004}, which does not satisfy it.
In this toy theory, we also have that measurements are associated to orthonormal bases $\{ \varphi_i \}$ and each outcome corresponds to an element $\varphi_i$ of the basis. 
Then, the probability of outcome $\varphi_i$ on state $\psi$ is given by
\begin{equation}
  \label{prob ex2}
  P(\varphi_i|\psi) = \frac
  {|\langle \varphi_i|\psi\rangle|^{4}}
  {\sum_j|\langle \varphi_j|\psi\rangle|^{4}}\ .
\end{equation}
Since this example does not meet the premises of Gleason's theorem (the denominator depends not only on $\varphi_i$ but also on the rest of the basis), there is no contradiction in that it violates its conclusion.

We stress that our results, unlike previous contributions \cite{Gleason_measures_1957, Cooke_elementary_1985, Pitowsky_infinite_1998, Wilce_quantum_2017, Saunders_derivation_2004, Busch_quantum_2003, Caves_Gleason_2004, Logiurato_born_2012, Han_Quantum_2016}, do not assume this type of non-contextuality. In particular, our OPF framework perfectly accommodates the above example~(\ref{prob ex2}) with $\f_i(\psi)=P(\varphi_i|\psi)$. This example however does not meet the ``possibility of state estimation" assumption, and hence is excluded by the main theorem of this paper. 

In Appendix~\ref{app:Other_work} we discuss publications~\cite{Frauchiger_nonprobabilistic_2017} and~\cite{Cabello_the_2018} in relation to the theorem presented in this paper.


\section{Methods}\label{sec:methods}

\noindent
This brief section provides a bird's eye view of the proof of the measurement theorem.
The argument starts by embedding the OPF set $\F_d$ into a complex vector space so that physical mixtures \eqref{closed mixtures} can be represented by certain linear combinations.
Second, the ``possibility of state estimation" assumption implies that, whenever $d$ is finite, this embedding vector space is finite-dimensional.
This translates the $\U (d)$ action \eqref{SU action on F} on the set $\F_d$ to a linear representation; and once in the land of $\U(d)$ representations we have a good map of the territory.

Third, the fact that the argument of the functions in $\F_d$ is a ray (not a vector) imposes a strong restriction to the above-mentioned $\U(d)$ representation.
All these restricted representations were classified by some of the authors in \cite{Galley_classification_2017}. 
This amounts to a classification of all alternatives to the measurement postulate for single systems, that is, when the consistency constraints related to composite systems (\ref{closed with ancilla}-\ref{cond:associativity}) are ignored.
The next steps take composition into account.

Fourth, ``closedness under system composition" \eqref{closed with ancilla} implies that all OPFs $\f\in \F_d$ are of the form 
\begin{equation}
  \label{eq:opf}
  \f (\varphi) = \tr\!\left(
  F \proj\varphi ^{\otimes n} \right)\ ,
\end{equation}
where $n$ is a fixed positive integer.
Recall that the case $n=1$ is QM and the case $n=2$ has been studied above.
In the final step, the representation theory of the unitary group is exploited to prove that, whenever $n\geq 2$, it is impossible to define a star product of functions \eqref{eq:opf} satisfying associativity \eqref{cond:associativity}.
This implies that only the quantum case ($n=1$) fulfils all the required constraints (\ref{closed mixtures}-\ref{cond:associativity}).

\section{Conclusions}\label{sec:conclusions}

\noindent
It may seem that conditions (\ref{closed mixtures}-\ref{cond:associativity}) are a lot of assumptions to claim that we derive the measurement postulates from the non-measurement ones.

But from the operational point of view, these conditions constitute the very definition of measurement, single and multi-partite physical system.
In other words, specifying what  we mean by ``measurement" is in a different category than stating that measurements are characterized by operators acting on a Hilbert space.
Analogously, the rules of probability calculus or the axioms of the real numbers are not explicitly included in the postulates of quantum mechanics.

Note that our results also apply to indistinguishable particles (bosons and fermions), as long as we interpret the tensor product not as a composition of particles, but of the corresponding modes.

It is rather remarkable that none of the three measurement postulates (structure, probabilities and state-update) can be modified without having to redesign the whole theory.
In particular, the probability rule is deeply ingrained in the main structures of the theory.
This fact shows that one need not appeal to any supplementary principles beyond operational primitives to derive the Born rule, nor do we need to make any assumptions about the structure of measurements, unlike previous work~\cite{Saunders_derivation_2004, Aaronson_quantum_2004, Zurek_probabilities_2005,Logiurato_born_2012,Wallace_how_2010,Han_Quantum_2016}.
Finally, having cleared up  unnecessary postulates in the formulation of quantum mechanics, we find ourselves closer to its core message.

\section{Acknowledgements}

\noindent
We are grateful to Jonathan Barrett and Robin Lorenz for discussions about the toy theory of section~\ref{sec:example 1}, which was independently studied by them.
LM acknowledges financial support by the Engineering and Physical Sciences Research Council [grant number EP/R012393/1]. TG acknowledges support by the Engineering and Physical Sciences Research Council [grant number EP/L015242/1]. 
This research was supported in part by Perimeter Institute for Theoretical Physics; research at Perimeter Institute is supported by the Government of Canada through Innovation, Science and Economic Development Canada and by the Province of Ontario through the Ministry of Research, Innovation and Science.
This publication was made possible through the support of a grant from the John Templeton Foundation; the opinions expressed in this publication are those of the authors and do not necessarily reflect the views of the John Templeton Foundation.

%
\bibliographystyle{ieeetr}
\bibliography{refs}{}

\begin{thebibliography}{10}

\bibitem{PhysRev.82.914}
J.~Schwinger, ``The theory of quantized fields. {I},'' {\em Phys. Rev.},
  vol.~82, pp.~914--927, Jun 1951.

\bibitem{Gleason_measures_1957}
A.~M. Gleason, ``Measures on the closed subspaces of a {H}ilbert space,'' {\em
  Journal of Mathematics and Mechanics}, vol.~6, no.~6, pp.~885--893, 1957.

\bibitem{Cooke_elementary_1985}
R.~Cooke, M.~Keane, and W.~Moran, ``An elementary proof of {Gleason}'s
  theorem,'' {\em Mathematical Proceedings of the Cambridge Philosophical
  Society}, vol.~98, p.~117, July 1985.

\bibitem{Pitowsky_infinite_1998}
I.~Pitowsky, ``Infinite and finite {Gleason}’s theorems and the logic of
  indeterminacy,'' {\em Journal of Mathematical Physics}, vol.~39,
  pp.~218--228, Jan. 1998.

\bibitem{Wilce_quantum_2017}
A.~Wilce, ``Quantum logic and probability theory,'' in {\em The Stanford
  Encyclopedia of Philosophy} (E.~N. Zalta, ed.), Metaphysics Research Lab,
  Stanford University, spring 2017~ed., 2017.

\bibitem{Saunders_derivation_2004}
S.~Saunders, ``Derivation of the {B}orn rule from operational assumptions,''
  {\em Proceedings: Mathematical, Physical and Engineering Sciences}, vol.~460,
  no.~2046, pp.~1771--1788, 2004.

\bibitem{Busch_quantum_2003}
P.~{Busch}, ``{Quantum States and Generalized Observables: A Simple Proof of
  Gleason's Theorem},'' {\em Physical Review Letters}, vol.~91, p.~120403,
  Sept. 2003.

\bibitem{Caves_Gleason_2004}
C.~M. Caves, C.~A. Fuchs, K.~K. Manne, and J.~M. Renes, ``Gleason-type
  derivations of the quantum probability rule for generalized measurements,''
  {\em Foundations of Physics}, vol.~34, no.~2, pp.~193--209, 2004.

\bibitem{wright_gleason-type_2018}
V.~J. Wright and S.~Weigert, ``A {Gleason}-type theorem for qubits based on
  mixtures of projective measurements,'' {\em arXiv:1808.08091 [quant-ph]},
  Aug. 2018.
\newblock arXiv: 1808.08091.

\bibitem{Logiurato_born_2012}
F.~Logiurato and A.~Smerzi, ``Born {Rule} and noncontextual probability,'' {\em
  Journal of Modern Physics}, vol.~03, no.~11, pp.~1802--1812, 2012.
\newblock arXiv: 1202.2728.

\bibitem{auffeves_simple_2015}
A.~Auffèves and P.~Grangier, ``A simple derivation of {Born}'s rule with and
  without {Gleason}'s theorem,'' {\em arXiv:1505.01369 [quant-ph]}, May 2015.
\newblock arXiv: 1505.01369.

\bibitem{Han_Quantum_2016}
Y.~D. {Han} and T.~{Choi}, ``{Quantum Probability assignment limited by
  relativistic causality},'' {\em Sci. Rep.}, vol.~6, p.~22986, July 2016.

\bibitem{Frauchiger_nonprobabilistic_2017}
D.~Frauchiger and R.~Renner, ``A non-probabilistic substitute for the {Born}
  rule,'' {\em arXiv:1710.05033 [math-ph, physics:quant-ph]}, Oct. 2017.
\newblock arXiv: 1710.05033.

\bibitem{shrapnel_updating_2018}
S.~Shrapnel, F.~Costa, and G.~Milburn, ``Updating the {Born} rule,'' {\em New
  Journal of Physics}, vol.~20, p.~053010, May 2018.

\bibitem{Holevo_statistical_1973}
A.~Holevo, ``Statistical decision theory for quantum systems,'' {\em Journal of
  Multivariate Analysis}, vol.~3, no.~4, pp.~337 -- 394, 1973.

\bibitem{Helstrom76}
C.~W. Helstrom, {\em Quantum detection and estimation theory}.
\newblock Mathematics in Science and Engineering, Vol 123, New York: Academic
  Press, 1976.

\bibitem{Deutsch_quantum_1999}
D.~Deutsch, ``Quantum theory of probability and decisions,'' {\em Proceedings
  of the Royal Society of London A: Mathematical, Physical and Engineering
  Sciences}, vol.~455, no.~1988, pp.~3129--3137, 1999.

\bibitem{Wallace_how_2010}
D.~Wallace, ``Many worlds? {E}verett, quantum theory, and reality,'' in {\em
  How to Prove the {B}orn Rule} (A.~K. Simon~Saunders, Jon~Barrett and
  D.~Wallace, eds.), Oxford University Press, 2010.

\bibitem{Zurek_probabilities_2005}
W.~H. Zurek, ``Probabilities from entanglement, {B}orn's rule
  ${p}_{k}={\ensuremath{\mid}{\ensuremath{\psi}}_{k}\ensuremath{\mid}}^{2}$
  from envariance,'' {\em Phys. Rev. A}, vol.~71, p.~052105, May 2005.

\bibitem{Zurek20180107}
W.~H. Zurek, ``Quantum theory of the classical: quantum jumps,
  born{\textquoteright}s rule and objective classical reality via quantum
  darwinism,'' {\em Philosophical Transactions of the Royal Society of London
  A: Mathematical, Physical and Engineering Sciences}, vol.~376, no.~2123,
  2018.

\bibitem{Barnum_no_2003}
H.~{Barnum}, ``{No-signalling-based version of Zurek's derivation of quantum
  probabilities: A note on ``Environment-assisted invariance, entanglement, and
  probabilities in quantum physics''},'' {\em eprint arXiv:quant-ph/0312150},
  Dec. 2003.

\bibitem{Schlosshauer_zureks_2005}
M.~Schlosshauer and A.~Fine, ``On {Zurek}'s derivation of the {Born} rule,''
  {\em Foundations of Physics}, vol.~35, pp.~197--213, Feb. 2005.

\bibitem{Barnum_quantum_2000}
H.~Barnum, C.~M. Caves, J.~Finkelstein, C.~A. Fuchs, and R.~Schack, ``Quantum
  probability from decision theory?,'' {\em Proceedings of the Royal Society A:
  Mathematical, Physical and Engineering Sciences}, vol.~456, pp.~1175--1182,
  May 2000.

\bibitem{Kent_one_2010}
A.~{Kent}, ``Many worlds? {E}verett, quantum theory, and reality,'' ch.~{One
  world versus many: the inadequacy of {E}verettian accounts of evolution,
  probability, and scientific confirmation}, Oxford University Press, 2010.

\bibitem{Baker_measurement_2006}
D.~Baker, ``Measurement outcomes and probability in {E}verettian quantum
  mechanics,'' May 2006.
\newblock Forthcoming in Studies in History and Philosophy of Modern Physics.

\bibitem{Hemmo_quantum_2007}
M.~Hemmo and I.~Pitowsky, ``Quantum probability and many worlds,'' {\em Studies
  in History and Philosophy of Science Part B: Studies in History and
  Philosophy of Modern Physics}, vol.~38, no.~2, pp.~333 -- 350, 2007.
\newblock Probabilities in quantum mechanics.

\bibitem{Lewis_Peter_2010}
P.~J. Lewis, ``Probability in {E}verettian quantum mechanics,'' {\em
  Manuscrito}, vol.~33, no.~1, pp.~285--306, 2010.

\bibitem{Price_decisions_2010}
H.~Price, ``Many worlds? {E}verett, quantum theory, and reality,'' in {\em
  Decisions, Decisions, Decisions: Can Savage Salvage {E}verettian
  Probability?} (A.~K. Simon~Saunders, Jon~Barrett and D.~Wallace, eds.),
  Oxford University Press, 2010.

\bibitem{Albert_probability_2010}
D.~Albert, ``Many worlds? {E}verett, quantum theory, and reality,'' in {\em
  Probability in the {E}verett Picture} (A.~K. Simon~Saunders, Jon~Barrett and
  D.~Wallace, eds.), Oxford University Press, 2010.

\bibitem{Caves_note_2004}
C.~M. Caves.
  \url{http://info.phys.unm.edu/~caves/reports/ZurekBornderivation.pdf}, 2004.

\bibitem{mohrhoff_probabilities_2004}
U.~Mohrhoff, ``{Probabities} {from} {envariance}?,'' {\em International Journal
  of Quantum Information}, vol.~02, pp.~221--229, June 2004.

\bibitem{rudin1991functional}
W.~Rudin, {\em Functional Analysis}.
\newblock International series in pure and applied mathematics, McGraw-Hill,
  1991.

\bibitem{Ozawa_quantum_1984}
M.~Ozawa, ``Quantum measuring processes of continuous observables,'' {\em
  Journal of Mathematical Physics}, vol.~25, no.~1, pp.~79--87, 1984.

\bibitem{Chiribella_Sharpness_2014}
G.~Chiribella and X.~Yuan, ``Measurement sharpness cuts nonlocality and
  contextuality in every physical theory,'' {\em eprint arXiv:1404.3348}, Apr.
  2014.
\newblock arXiv:1404.3348.

\bibitem{Galley_classification_2017}
T.~D. Galley and L.~Masanes, ``Classification of all alternatives to the {B}orn
  rule in terms of informational properties,'' {\em {Quantum}}, vol.~1, p.~15,
  July 2017.

\bibitem{DaviesLewis}
E.~Davies and J.~T. Lewis, ``An operational approach to quantum probability,''
  {\em Commun. Math. Phys.}, no.~17, pp.~239--260, 1970.

\bibitem{Hardy_quantum_2001}
L.~Hardy, ``Quantum theory from five reasonable axioms,'' {\em eprint
  arXiv:quant-ph/0101012}, Jan. 2001.

\bibitem{Chiribella_probabilistic_2010}
G.~Chiribella, G.~M. D'Ariano, and P.~Perinotti, ``Probabilistic theories with
  purification,'' {\em Phys. Rev. A}, vol.~81, p.~062348, Jun 2010.

\bibitem{Galley_impossibility_2018}
T.~D. {Galley} and L.~{Masanes}, ``{Impossibility of mixed-state purification
  in any alternative to the {B}orn Rule},'' {\em ArXiv e-prints}, Jan. 2018.

\bibitem{Aaronson_quantum_2004}
S.~Aaronson, ``Is quantum mechanics an island in theoryspace?,'' {\em eprint
  arXiv:quant-ph/0401062}, Jan. 2004.

\bibitem{Cabello_the_2018}
A.~{Cabello}, ``{The physical origin of quantum nonlocality and
  contextuality},'' {\em ArXiv e-prints}, p.~arXiv:1801.06347, Jan. 2018.

\bibitem{Fulton91}
W.~Fulton and J.~Harris, {\em Representation theory : a first course}.
\newblock Graduate texts in mathematics, New York, Berlin, Paris:
  Springer-Verlag, 1991.

\bibitem{Godlberg_on_1982}
M.~Goldberg and E.~Tadmor, ``On the numerical radius and its applications,''
  {\em Linear Algebra and its Applications}, vol.~42, pp.~263 -- 284, 1982.

\end{thebibliography}

\appendix

\section{Alternative measurement postulates for single systems}
\label{app:OPFsingle}

\noindent
In this section we classify all alternative measurement postulates for the case of finite-dimensional single systems, that is, when the constraints associated to the composition of systems (the star product) are ignored.
These results build up on the previous work \cite{Galley_classification_2017} by two of us.

In this section we only consider finite-dimensional Hilbert spaces $\complex^d$ with $d$ a positive integer.
In this case we have $\U (d) \cong \SU(d) \times \U(1)$; and since $\U(1)$ has a trivial action on rays, we only consider $\SU(d)$.
Later on, when addressing the infinite-dimensional case $d= \infty$, we will work with $\U(d)$, since the condition $\det U=1$ is not well-defined when $d= \infty$.

\subsection{Structure of measurements and mixed states}

\begin{Definition}\label{def:F_d}
	$\F_d$ is a set of functions $\f: \mathrm P \complex^d \to [0,1]$ which is closed under composition with unitaries $U\in\SU(d)$
	\begin{align}
		\label{SUd action F}
		U: \f \mapsto (\f\circ U)\ ,
	\end{align}
	closed under convex combinations
	\begin{equation}
		\label{eq:mixing op}
		\sum_x p_x\, \f^x \in\F_d\ ,
	\end{equation}
	and that contains the unit and the zero functions, respectively $\u(\psi)=1$ and $\mathbf 0(\psi) =0$, for all $\psi\in \mathrm P \complex^d$.
\end{Definition}

The unit function $\u$ represents an outcome that happens with probability one.
For example, such unit-probability outcome can be the event corresponding to all outcomes of the measurement $\{\f_i\}$, which by normalization satisfy
\begin{align}
	\sum_i \f_i = \u\ .
\end{align}
Analogously, the zero function $\mathbf 0$ represents a formal outcome that has zero probability irrespectively of the state.

For what comes below, it is convenient to consider the set $\F_d$ as embedded in the complex vector space $\complex \F_d$ generated by itself.
The fact that the group action \eqref{SUd action F} commutes with the mixing operation \eqref{eq:mixing op}
\begin{equation}
	\left(\mbox{$\sum_x$}\, p_x\, \f^x \right)
	\circ U
	=
	\mbox{$\sum_x$}\, p_x (\f^x \circ U)\ ,
\end{equation}
can be extended to arbitrary linear combinations in $\complex \F_d$, providing a complex, linear representation of $\SU(d)$.
While only the elements of $\F_d$ are outcome probability functions (OPFs), any element of $\complex \F_d$ can be interpreted as the expectation value of an observable with complex outcome labels, in analogy to the algebra of observables in QM.
While in QM the space $\complex \F_d$ has dimension $d^2$, here we leave the dimension unconstrained.
However, in what follows, we show that the ``possibility of state estimation" assumption implies that the linear space $\complex \F_d$ is finite-dimensional.
But before this, we recall that the probability of outcome $\f\in\F_d$ on an ensemble $(\psi_r, p_r)$ is given by
\begin{equation}
	\f\!\left[(\psi_r, p_r) \right]
	=
	\sum_r p_r\, \f[\psi_r]\ .
\end{equation}
The above follows from the rules of probability calculus.

\begin{Lemma}\label{lemma:finite k}
	Suppose that the values of the outcomes $\f^1, \ldots, \f^k \in \F_d$ on any given ensemble $(\psi_r, p_r)$ determine the value of any other outcome $\g\in \F_d$ on that ensemble $(\psi_r, p_r)$.
	Then the functions $\{\f^1, \ldots, \f^k, \u\}$ span the linear space $\complex \F_d$.
\end{Lemma}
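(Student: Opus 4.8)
The plan is to argue by contradiction: from any hypothetical $\g\in\F_d$ lying outside the span of $\{\f^1,\ldots,\f^k,\u\}$ I would manufacture two ensembles that agree on all of $\f^1,\ldots,\f^k$ but disagree on $\g$, contradicting the hypothesis. Since the conclusion is a spanning statement, this is really a duality argument: an ensemble acts on $\complex\F_d$ as a (convex-coefficient) linear functional via $\g[(\psi_r,p_r)]=\sum_r p_r\,\g(\psi_r)$, and the hypothesis says precisely that the functionals coming from $\f^1,\ldots,\f^k$ (together with the trivial one, $\u$, which is constant on ensembles) determine all the others.

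First I would reduce to real scalars. Every element of $\F_d$ is a real-valued function on $\mathrm P\complex^d$, and $\complex\F_d$ is the complex linear span of $\F_d$ inside the space of complex functions on $\mathrm P\complex^d$; hence it suffices to show that $\{\f^1,\ldots,\f^k,\u\}$ spans the real vector space $\real\F_d$ generated by $\F_d$. So assume some $\g\in\F_d$ lies outside $V:=\mathrm{span}_\real\{\f^1,\ldots,\f^k,\u\}$, and set $W:=V+\real\g$, a finite-dimensional space of functions on $\mathrm P\complex^d$. Ordinary linear algebra then furnishes a linear functional $\phi\colon W\to\real$ with $\phi|_V=0$ and $\phi(\g)=1$; in particular $\phi(\u)=0$ since $\u\in V$.

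The key step is to realise $\phi$ through evaluations on pure states. Distinct elements of $W$ are distinct functions on $\mathrm P\complex^d$, so the point-evaluation maps $\mathrm{ev}_\psi\colon h\mapsto h(\psi)$ (for $\psi\in\mathrm P\complex^d$) have trivial common kernel in $W$; as $W$ is finite-dimensional they therefore span $W^\ast$. Thus $\phi=\sum_r c_r\,\mathrm{ev}_{\psi_r}$ for finitely many $\psi_r\in\mathrm P\complex^d$ and real coefficients $c_r$. I would then split the index set by the sign of $c_r$: because $\phi(\u)=\sum_r c_r=0$, the total positive weight $S:=\sum_{c_r>0}c_r$ equals the total negative weight, and $S>0$ since $\phi\neq0$. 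Rescaling by $1/S$ yields two genuine ensembles $e=(\psi_r,c_r/S)_{c_r>0}$ and $e'=(\psi_r,-c_r/S)_{c_r<0}$, and for every $h\in W$ one gets $h[e]-h[e']=\tfrac1S\sum_r c_r\,h(\psi_r)=\tfrac1S\,\phi(h)$. Hence $\f^i[e]=\f^i[e']$ for all $i$, while $\g[e]-\g[e']=1/S\neq0$ — the desired contradiction. This forces $\g\in V$, and since $\g\in\F_d$ was arbitrary, $\{\f^1,\ldots,\f^k,\u\}$ spans $\real\F_d$, hence also $\complex\F_d$.

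I expect the crux to be this middle step: converting an abstract separating functional into an honest difference of probability distributions. Two facts make it go through — that $\complex\F_d$ is literally a space of functions on the state space, so that point-evaluations separate points of every finite-dimensional subspace and hence span its dual; and the balancing observation that $\phi(\u)=0$ forces the positive and negative parts of the representing signed combination to carry equal total weight, so both parts can be normalised to bona fide ensembles that the hypothesis is allowed to speak about.
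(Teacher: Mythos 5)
Your proof is correct, and it takes a genuinely different route from the one in the paper. The paper argues directly: it packages the hypothesis into a determining function $\xi_\g:\mathcal S_d\to[0,1]$ on the convex set of attainable tuples $\left(\f^1(\psi),\ldots,\f^k(\psi)\right)$, shows that $\xi_\g$ commutes with mixtures and is therefore affine on $\mathcal S_d$, extends it to an affine function on all of $\real^k$, and reads off the coefficients to write $\g=\sum_x e_\g^x\,\f^x+c_\g\,\u$ explicitly. You instead argue by contraposition through duality: a hypothetical $\g$ outside the span yields a separating functional $\phi$ on the finite-dimensional function space $W$, which (since point evaluations span $W^\ast$) is a finite signed combination of evaluations; the balancing identity $\phi(\u)=\sum_r c_r=0$ is exactly what lets you normalise the positive and negative parts into two honest ensembles agreeing on all $\f^i$ but not on $\g$ — a clean contradiction with the determination hypothesis. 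Both arguments rest on the same two pillars (finite-dimensionality of the relevant space, and the fact that elements of $\complex\F_d$ are literally functions on $\mathrm P\complex^d$, so evaluations separate them), but yours sidesteps the slightly delicate affine-extension step of the paper at the cost of being non-constructive: the paper's version hands you the expansion coefficients, whereas yours only certifies that they exist. The role of $\u$ is also illuminated differently — in the paper it absorbs the constant term $c_\g$ of the affine map, while in your argument it supplies the zero-total-mass condition that makes the two ensembles normalisable.
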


In other words, knowing the numbers $\f^1[(\psi_r,p_r)]$, \ldots, $\f^k[(\psi_r,p_r)]$ allows us to determine the number $\g[(\psi_r,p_r)]$ without knowing the ensemble $(\psi_r,p_r)$. That is, the latter is some function of the former.

\begin{proof}
	Once the OPFs $\f^1, \ldots, \f^k \in \F_d$ are given we can define the convex set
	\begin{equation}
		\mathcal S_d 
		= 
		\mbox{conv}\! \left\{ 
		\left[ \f^1(\psi), \ldots, \f^k(\psi) \right] 
		\,\,|\,\, \psi\in\mathrm P \complex^d 
		\right\}
		\subseteq \mathbb R^k\ .
	\end{equation}
	Next we note that, the fact that the values of $\f^1, \ldots, \f^k$ determine the value of $\g$ on any ensemble $(\psi_r, p_r)$ means that there is a function $\xi_\g: \mathcal S_d \to [0,1]$ such that
	\begin{equation}
		\sum_r p_r \g(\psi_r) 
		= 
		\xi_\g \!\left[
		\mbox{$\sum_r$} p_r \f^1 (\psi_r), 
		\ldots, 
		\mbox{$\sum_r$} p_r \f^k (\psi_r)
		\right]\ .
	\end{equation}
	Since the above equality holds for all ensembles, it also holds for the pure states $\psi_r$
	\begin{equation}
		\g(\psi_r) 
		= 
		\xi_\g \!\left[
		\f^1 (\psi_r), \ldots, 
		\f^k (\psi_r) \right]\ ,
	\end{equation}
	for all $r$.
	Hence we have
	\begin{align}
		\nonumber
		& \xi_\g \!\left[
		\mbox{$\sum_r$} p_r \f^1 (\psi_r), 
		\ldots, 
		\mbox{$\sum_r$} p_r \f^k (\psi_r)
		\right]
		\\ &\ \ \ \, = 
		\mbox{$\sum_r$}\, p_r\, 
		\xi_\g \!\left[
		\f^1 (\psi_r), \ldots, \f^k (\psi_r)
		\right]\ .
		\label{eqConvexSpecial}
	\end{align}
	It follows that~(\ref{eqConvexSpecial}) also holds true if every appearance of $\psi_r$ is replaced by some ensemble $(\psi_s^{(r)},q_s^{(r)})$, where $s$ labels the possible states and their probabilities. Since $(\f^1[(\psi_s^{(r)},q_s^{(r)})],\ldots,\f^k[(\psi_s^{(r)},q_s^{(r)})])$ can take all values in $\mathcal{S}_d$ by choosing the states and probabilities in a suitable way, this shows that $\xi_\g$ is convex on the full set $\mathcal{S}_d$. This implies that $\xi_\g$ can be affinely extended to all of $\mathbb{R}^k$, i.e.\ there is an affine function $\xi'_\g : \mathbb R^k \to \mathbb R$ which coincides with the previous function $\xi'_\g = \xi_\g$ inside the convex set $\mathcal S_d$.
	The affine nature of the function means
	\begin{equation}
		\xi'_\g(\mbox{$\sum_r$} c_r\, \vec x_r)
		=
		\mbox{$\sum_r$} c_r\, \xi'_\g(\vec x_r)\ ,
	\end{equation}
	for any $c_r\in \mathbb R$ with $\sum_r c_r = 1$ and $\vec x_r \in \mathbb R^k$, but the coefficients $c_r$ are not necessarily positive.
	Any affine function $\xi'_\g : \mathbb R^k \to \mathbb R$ can be written as
	\begin{equation}
		\xi'_\g (\vec x) = 
		\vec e_\g \cdot \vec x + c_\g\ ,
	\end{equation}
	where $\vec e_\g \in \mathbb R^k$ and $c_\g\in \mathbb R$.
	Therefore we can write
	\begin{equation}
		\label{eq:lc}
		\g = \sum_x e_\g^x\, \f^x + c_\g\,\u\ .
	\end{equation}
	That is, any OPF $\g$ can be written as an $\mathbb R$-linear combination of $\{\f^1, \ldots, \f^k, \u\}$, as in \eqref{eq:lc}. Since every element of $\complex \F_d$ is a complex-linear combination of such OPFs, every such element must thus be a complex-linear combination of $\{\f^1, \ldots, \f^k, \u\}$.
\end{proof}

\begin{Corollary}
	The ``possibility of state estimation" assumption implies that, for all finite $d$, the linear space $\complex \F_d$ is finite-dimensional.
\end{Corollary}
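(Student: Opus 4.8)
The plan is to read off the corollary directly from Lemma~\ref{lemma:finite k}. Fix a finite dimension $d$. The ``possibility of state estimation" assumption provides a finite list of outcomes $\f^1,\ldots,\f^k\in\F_d$ with the property that, for every ensemble $(\psi_r,p_r)$, the values $\f^1[(\psi_r,p_r)],\ldots,\f^k[(\psi_r,p_r)]$ determine $\g[(\psi_r,p_r)]$ for every $\g\in\F_d$. This is verbatim the hypothesis of Lemma~\ref{lemma:finite k}, so that lemma gives us that the finite set $\{\f^1,\ldots,\f^k,\u\}$ spans the complex vector space $\complex\F_d$.

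From here the conclusion is immediate: a vector space admitting a spanning set of $k+1$ elements has dimension at most $k+1$, and is therefore finite-dimensional; since $d$ was an arbitrary positive integer, this holds for every finite $d$. I do not anticipate any real obstacle, since the substance of the argument --- the passage from ``the values of $\f^1,\ldots,\f^k$ determine $\g$ on all ensembles" to ``$\g$ is an $\real$-affine, hence a $\complex$-linear, combination of $\f^1,\ldots,\f^k,\u$" --- has already been discharged in the proof of Lemma~\ref{lemma:finite k} through the convex-extension argument. The corollary is then just the observation that a finite spanning set caps the dimension, which is precisely what is needed downstream: it is what allows the $\SU(d)$ action on $\F_d$ to be treated as a finite-dimensional linear representation in the subsequent steps.
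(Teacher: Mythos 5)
Your argument is exactly the paper's: the corollary is stated without separate proof precisely because it follows immediately from Lemma~\ref{lemma:finite k} in the way you describe — the assumption supplies the finite list $\f^1,\ldots,\f^k$, the lemma shows $\{\f^1,\ldots,\f^k,\u\}$ spans $\complex\F_d$, and a finite spanning set bounds the dimension. The proposal is correct and matches the intended reasoning.
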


In what follows, we introduce a representation of pure states $\psi$ that is linearly related to outcome probabilities.
Because of this, this new representation encodes the equivalence relation between ensembles, and hence, the structure of mixed states arising from alternative measurement postulates.

\begin{Definition}
	\label{def:Omega}
	For each pure state $\psi\in \mathrm P \complex^d$ we define the linear form $\Omega_\psi: \complex \F_d \to \complex$ as
	\begin{align}
		\Omega_\psi(\f) = \f(\psi)\ ,
	\end{align}
	with the natural $\SU(d)$ action
	\begin{align}
		\label{eq:U act Omega}
		U: \Omega_\psi \mapsto \Omega_{U\psi}
		\ .
	\end{align}
\end{Definition}

This allows to write the probability of outcome $\f \in \F_d$ on ensemble $(\psi_r, p_r)$, 
\begin{equation}
	\label{eq:prob out ensemb}
	P\big(\f\big|(\psi_r, p_r) \big)
	=
	\mbox{$\sum_r$}\, p_r P(\f |\psi_r)
	=
	\omega(\f)\ ,
\end{equation}
in terms of the mixed state 
\begin{equation}
	\label{eq:mixed state}
	\omega = \mbox{$\sum_r$}\, p_r\,
	\Omega_{\psi_r}
	\ .
\end{equation}
Hence, two different ensembles corresponding to the same mixed state \eqref{eq:mixed state} are indistinguishable.
The next lemma gives us important information about the group representation $\complex \F_d$.

\begin{Lemma}\label{lemma:decomp final}
	The $\SU(d)$ action \eqref{SUd action F} on $\complex\F_d$ decomposes as
	\begin{equation}
		\label{eq:decomp final}
		\complex \F_d
		\cong
		\bigoplus_{j \in \mathcal J}
		\D_j^d\ ,
	\end{equation}
	where $\D_j^d$ are the irreducible representations defined in Lemma \ref{lemma: M decomp D}.
	The finite set $\mathcal J$ contains zero and some positive integers (with no repetitions). 
\end{Lemma}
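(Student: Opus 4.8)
The plan is to exploit two facts that are already in place just above: first, that $\complex\F_d$ is a finite-dimensional complex vector space (by the Corollary following Lemma~\ref{lemma:finite k}), and second, that the $\SU(d)$-action \eqref{SUd action F} commutes with the mixing operation and hence extends to a genuine complex-linear representation of $\SU(d)$ on $\complex\F_d$. Since $\SU(d)$ is a compact group, this finite-dimensional representation is completely reducible, so I would write $\complex\F_d \cong \bigoplus_{\alpha} W_\alpha$ with each $W_\alpha$ an irreducible $\SU(d)$-representation and only finitely many summands. Everything then reduces to identifying which irreducibles can occur: I must show (i) each $W_\alpha$ is isomorphic to one of the $\D_j^d$ of Lemma~\ref{lemma: M decomp D}, and (ii) no two summands are isomorphic. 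Note already that the unit function $\u\in\F_d$ satisfies $\u\circ U=\u$ for all $U\in\SU(d)$, so $\complex\u$ is a copy of the trivial representation $\D_0^d$ sitting inside $\complex\F_d$; this will give $0\in\mathcal J$.

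For (i) the point is that $\complex\F_d$ is a space of $\complex$-valued functions on the ray space $\mathrm P\complex^d$, on which $\SU(d)$ acts transitively. Fix a ray $\psi_0$ and let $H=\mathrm{Stab}(\psi_0)\subseteq\SU(d)$, so $\mathrm P\complex^d\cong\SU(d)/H$. For any $\f\in\complex\F_d$, the map $U\mapsto\f(U\psi_0)$ is the composition of the orbit map $U\mapsto U\cdot\f$, which takes values in the finite-dimensional space $\complex\F_d$, with the linear functional ``evaluation at $\psi_0$''; hence it is a matrix coefficient of a finite-dimensional representation of $\SU(d)$, and therefore a polynomial in the entries of $U$. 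Consequently $\complex\F_d$ is contained in the tame function space $\complex\M$ whose $\SU(d)$-decomposition is given in Lemma~\ref{lemma: M decomp D}. (Equivalently: evaluation at a suitable ray furnishes a nonzero $H$-invariant functional on each $W_\alpha$, so $W_\alpha$ is an $H$-spherical representation; the pair $(\SU(d),H)$ with $H\cong\mathrm S(\U(1)\times\U(d-1))$ is a Gelfand pair — equivalently $\mathrm P\complex^d$ is a compact Hermitian symmetric space — and its spherical representations are exactly the $\D_j^d$, $j\geq 0$.) In either formulation, every $W_\alpha\cong\D_j^d$ for some nonnegative integer $j$, so $\mathcal J\subseteq\{0,1,2,\ldots\}$ and $\mathcal J$ is finite.

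For (ii), multiplicity-freeness is inherited from $\complex\M$: the decomposition $\complex\M\cong\bigoplus_{j=0}^\infty\D_j^d$ of Lemma~\ref{lemma: M decomp D} has each $\D_j^d$ occurring exactly once, i.e.\ $\dim\mathrm{Hom}_{\SU(d)}(\D_j^d,\complex\M)=1$, so the subrepresentation $\complex\F_d\subseteq\complex\M$ contains each $\D_j^d$ with multiplicity at most one. This yields $\complex\F_d\cong\bigoplus_{j\in\mathcal J}\D_j^d$ with $\mathcal J$ a finite subset of $\{0,1,2,\ldots\}$ without repetitions, and $0\in\mathcal J$ as noted. The main obstacle is step (i): a priori an OPF $\f\in\F_d$ is an essentially arbitrary function $\mathrm P\complex^d\to[0,1]$ (not even assumed continuous), and the work lies in showing that finite-dimensionality of $\complex\F_d$ — which is where the ``possibility of state estimation'' assumption enters, through the Corollary — nonetheless forces all of $\complex\F_d$ into the polynomial/real-analytic function space $\complex\M$, so that the representation theory of $\mathrm P\complex^d$ (and in particular its Gelfand-pair, multiplicity-one structure) becomes applicable.
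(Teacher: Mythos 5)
Your proposal is correct and follows essentially the same route as the paper's proof: finite-dimensionality plus complete reducibility, then the observation that the evaluation functional $\Omega_\psi$ is stabilizer-invariant and must be nonzero on every irreducible constituent (since distinct elements of $\complex\F_d$ are distinct functions), which forces each constituent to be a spherical representation $\D_j^d$ and, via the one-dimensionality of the spherical vector, rules out repeated irreps, with $0\in\mathcal J$ coming from $\u$. The only cosmetic difference is that you package the classification of spherical representations and the multiplicity-one property in Gelfand-pair language, whereas the paper cites Lemma~1 of \cite{Galley_classification_2017} and gives a hands-on separation argument for the no-repetition claim.
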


Before proving the above we mention that the quantum case is $\mathcal J= \{0,1\}$, and in section Non-quantum measurement postulate
violating associativity of the main text the (non-quantum) case $\mathcal J= \{0,1,2\}$ is analyzed.
Also, we have to mention that in this work we follow the notation of \cite{Fulton91}, where the group representations are labelled by the subspace they act on.

\begin{proof}
	In this proof we establish the following  four facts in the same order: (i) $\complex \F_d$ decomposes into a finite sum of finite-dimensional irreducible representations (irreps), (ii) these irreps are of the type $\D_j^d$, (iii) there are no repetitions, (iv) $j=0$ is always included.
	
	Fact (i). Lemma \ref{lemma:finite k} shows that the $\SU(d)$ representation $\complex \F_d$ is finite-dimensional. 
	And these can always be decomposed into finite-dimensional irreps \cite{Fulton91}.
	Also, we know that each finite-dimensional irrep of $\SU(d)$ corresponds to a $d$-row Young diagram $\lambda$.
	Hence we write
	\begin{equation}
		\label{eq:decomp proof 1}
		\complex \F_d 
		\cong 
		\bigoplus_\lambda
		\V_\lambda^d\ ,
	\end{equation}
	where repeated values of $\lambda$ can happen. 
	
	Fact (ii).
	The fact that different elements of $\complex  \F_d$ are different functions $\mathrm P \complex^d \to \complex$ implies that the form $\Omega_\psi$ (Definition \ref{def:Omega}) has support in each sub-space of \eqref{eq:decomp proof 1}. 
	Indeed, if $\Omega_\psi$ had no support in the sub-space $\V_\lambda^d$, then any of the elements $\f+\V_\lambda^d \subseteq \complex \F_d$ would correspond to the same function.
	
	Denote by $\SU(d, \psi)$ the subgroup of unitaries that leave the state $\psi$ invariant $U\psi =\psi$, and note that
	\begin{equation}
		\SU(d, \psi) 
		\cong 
		\mathrm U(1) \times \SU(d-1)\ ,
	\end{equation}
	for any $\psi$.
	According to \eqref{eq:U act Omega}, the action of $\SU(d, \psi)$ on a subspace $\V_\lambda^d$ of \eqref{eq:decomp proof 1} leaves the projection of $\Omega_\psi$ onto the subspace $\V_\lambda^d$ invariant.
	This implies that $\V_\lambda^d$ contains an invariant vector under the action of $\SU(d, \psi)$.
	But Lemma 1 from \cite{Galley_classification_2017} tells us that the only $\SU(d)$ irreps with an $\SU(d, \psi)$-invariant vector are $\D_j^d$ for $j=0,1,2,\ldots$
	Hence, all irreps $\V_\lambda^d$ in \eqref{eq:decomp proof 1} are of the form $ \D_j^d$.
	At this point it is worth mentioning that the irreps $ \D_j^d$ are real.
	
	In addition, Lemma 1 from \cite{Galley_classification_2017} tells us that in each irrep $\D_j^d$ the $\SU(d, \psi)$-invariant subspace has dimension one.
	Which fixes the projection of the linear form $\Omega_\psi$ onto each subspace of \eqref{eq:decomp final} up to a proportionality factor.
	Changing these proportionality factors modifies the structure of $\F_d$ by the corresponding inverse linear transformation; but the space $\complex \F_d$ remains identical.
	
	To prove Fact (iii), suppose that there are two repeated irreps in \eqref{eq:decomp final}.
	We can write the isomorphism 
	\begin{equation}
		\label{eq:repeated irreps}
		\D_j^d \oplus \D_j^d 
		\cong 
		\D_j^d \otimes \complex^2\ ,
	\end{equation}
	with the understanding that the $\SU(d)$-action in $\complex^2$ is trivial.
	Next, we invoke the above-shown unicity of $\Omega_\psi$ to see that the projection of $\Omega_\psi$ onto the subspace $\D_j^d \otimes \complex^2$ is of the form
	\begin{equation}
		\Omega_\psi |_{\D_j^d \otimes \complex^2}
		= 
		\Omega_\psi |_{\D_j^d} 
		\otimes 
		\Gamma |_{\complex^2}\ ,
	\end{equation}
	where $\Gamma: \complex^2 \to \complex$ is a linear form that depends on the above-mentioned proportionality factors.
	Given $\Gamma$ it possible to find two different vectors $\mathbf v, \mathbf v' \in \complex^2$ such that $\Gamma (\mathbf v) = \Gamma(\mathbf v')$.
	Then, taking any $\f\in \D_j^d$ we can construct two different elements of $\complex \F_d$ corresponding to the same function
	\begin{equation}
		(\f\otimes \mathbf v)(\psi)
		=
		(\f\otimes \mathbf v')(\psi)
	\end{equation}
	for all $\psi$, which is a non-sense.
	
	To establish Fact (iv), we recall that the unit function $\u \in \F_d$ is always included (Definition \ref{def:F_d}).
	Since $\u$ is invariant under the action \eqref{SUd action F} the trivial irrep $\D_0^d$ must be included in the decomposition \eqref{eq:decomp final}. Hence $0\in \mathcal J$. 
\end{proof}

\subsection{The $\SU(d)$ representations $\M_n^d$ and $\D_n^d$}
\label{sec:irreps M N}

\noindent
In this subsection we introduce two families of $\SU(d)$ representations that allow to construct all alternative measurement postulates for single systems by using \eqref{eq:decomp final}.
For this, we recall that the projector onto the symmetric subspace of $(\complex^d)^{\otimes n}$ can be written as the average of all permutations $\pi$ over $n$ objects
\begin{equation}
	\label{symm proj}
	P_+ = \frac 1 {n!} \sum_{\pi} \pi \ ,
\end{equation}
where $\pi$ acts by permuting the $n$ factor spaces of $(\complex^d)^{\otimes n}$.
Next we define an $\SU (d)$ representation that sometimes is named $\mathrm{Sym}^n \complex^d \otimes \mathrm{Sym}^n \complex^{d*}$.

\begin{Definition}
	Let $\M_n^d$ be the linear space of complex matrices $M$ acting on $(\complex^d)^{\otimes n}$ whose support is contained in the symmetric subspace 
	\begin{equation}
		P_+ M=MP_+ =M\ .
	\end{equation}
	And let the linear action of $\SU (d)$ on $\M_n^d$ be
	\begin{equation}
		\label{action M}
		M \mapsto U^{\otimes n}M U^{\otimes n \dagger}
		\ .
	\end{equation}
\end{Definition}

\begin{Lemma}\label{lemma: M decomp D}
	The decomposition of $\M_n^d$ into $\SU (d)$ irreducible representations is
	\begin{align}
		\label{eq:Dsubs}
		\mathcal M_n^d  =
		\bigoplus_{j=0}^n \D_{j,n}^d
		\ ,
	\end{align}  
	where the subspace $\D_{j,n}^d$ is generated by applying the group action \eqref{action M} to the element
	\begin{equation}
		\label{Def N generator}
		N_{j,n} = 
		P_+ \left( |0\rangle\! \langle 1|^{\otimes j} \otimes \unity^{\otimes (n-j)} \right) P_+
		\in \M_n^d
		\ ,
	\end{equation}
	where $|0\rangle, |1\rangle \in \complex^d$ are any orthogonal pair.
	Also, the representation isomorphisms 
	\begin{equation}
		\label{D isomorphism}
		\D^d_{j,n} \cong \D^d_{j,n'}
	\end{equation}
	hold for all $n,n' \geq j$.
\end{Lemma}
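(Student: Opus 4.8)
The plan is to identify $\M_n^d$ with an endomorphism algebra, decompose the associated tensor product into $\SU(d)$-irreducibles, and then determine exactly which irreducible each subspace $\D_{j,n}^d$ is by exhibiting a highest-weight vector inside it.

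First I would note that the condition $P_+M=MP_+=M$ says precisely that $M$ is a linear endomorphism of the symmetric subspace $\mathrm{Sym}^n\complex^d=P_+(\complex^d)^{\otimes n}$, and that under this identification the action \eqref{action M} becomes conjugation $M\mapsto\rho(U)M\rho(U)^\dagger$ by the representation $\rho$ of $\SU(d)$ on $\mathrm{Sym}^n\complex^d$. Hence, as $\SU(d)$-representations, $\M_n^d\cong\mathrm{Sym}^n\complex^d\otimes(\mathrm{Sym}^n\complex^d)^*$. I would then decompose this tensor product by standard representation theory --- via characters and the Littlewood--Richardson rule, or by iterating the $\SU(d)$-equivariant surjection $\mathrm{Sym}^j\complex^d\otimes(\mathrm{Sym}^j\complex^d)^*\to\mathrm{Sym}^{j-1}\complex^d\otimes(\mathrm{Sym}^{j-1}\complex^d)^*$ that contracts one vector factor with one covector factor (see e.g.\ \cite{Fulton91}). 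The outcome is a multiplicity-free decomposition into the $n+1$ inequivalent irreducibles of highest weights $j(\epsilon_1-\epsilon_d)$ for $j=0,1,\dots,n$ (the case $j=1$ being the adjoint representation), and the isomorphism type of each such irreducible does not involve $n$ --- this last fact is what will deliver \eqref{D isomorphism}.

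The crux is to show that $\D_{j,n}^d$ is exactly the irreducible of highest weight $j(\epsilon_1-\epsilon_d)$ whenever $n\geq j$. Replacing the orthogonal pair $|0\rangle,|1\rangle$ by another one amounts to conjugating $N_{j,n}$ by an element of $\SU(d)$ and rescaling by a phase, so the span of its $\SU(d)$-orbit does not depend on that choice; taking the pair to be $e_1,e_2$ and applying an element of $\SU(d)$ that fixes $e_1$ and sends $e_2$ to $e_d$ (e.g.\ a $90^\circ$ rotation in the plane spanned by $e_2,e_d$ when $d\geq3$; trivial when $d=2$), one sees that $\D_{j,n}^d$ contains $v_j:=P_+\big(|e_1\rangle\!\langle e_d|^{\otimes j}\otimes\unity^{\otimes(n-j)}\big)P_+$. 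A short computation shows $v_j\neq0$ for $n\geq j$ --- it sends the nonzero symmetric tensor $P_+(e_d^{\otimes j}\otimes e_1^{\otimes(n-j)})$ to a nonzero multiple of $e_1^{\otimes n}$ --- and that $v_j$ is a highest-weight vector: under the diagonal torus it carries weight $j(\epsilon_1-\epsilon_d)$, and because $[\,|e_a\rangle\!\langle e_{a+1}|,\,|e_1\rangle\!\langle e_d|\,]=0$ for every simple root $\epsilon_a-\epsilon_{a+1}$, all raising operators of $\SU(d)$ annihilate $v_j$. Since $\D_{j,n}^d$ is a complex $\SU(d)$-invariant subspace containing $v_j$, it is invariant under the complexification $\mathrm{SL}(d,\complex)$ and hence equals $\mathcal U(\mathfrak{sl}_d)\,v_j$, which is the irreducible highest-weight module of weight $j(\epsilon_1-\epsilon_d)$. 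As these weights are pairwise distinct and each of the corresponding irreducibles occurs with multiplicity one in $\M_n^d$ by the previous paragraph, each $\D_{j,n}^d$ must be the corresponding isotypic component; therefore $\M_n^d=\bigoplus_{j=0}^n\D_{j,n}^d$, and since the isomorphism class of $\D_{j,n}^d$ is determined by $j(\epsilon_1-\epsilon_d)$ alone, $\D_{j,n}^d\cong\D_{j,n'}^d$ for all $n,n'\geq j$.

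I expect the main obstacle to be the combination of establishing multiplicity-freeness of $\mathrm{Sym}^n\complex^d\otimes(\mathrm{Sym}^n\complex^d)^*$ with the need to certify that the specific generator $N_{j,n}$ lands in the $j$-th summand rather than in a mixture of several of them; both points are settled by the highest-weight-vector analysis of $v_j$, which pins down exactly which irreducible the $\SU(d)$-orbit of $N_{j,n}$ spans. The remaining ingredients --- complete reducibility, the characterization of irreducibles by highest weight, and the $\SU(d,\psi)$-type facts already available in the excerpt --- are routine.
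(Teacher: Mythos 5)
Your proof is correct, but it is organized differently from the paper's. The paper proceeds inductively via the partial-trace map \eqref{contraction map I}: Schur's Lemma makes its kernel a subrepresentation, a separate dimension count (Lemma~\ref{lemma:kernel irrep}) shows that kernel is the single irrep $\D_{n,n}^d$ with highest-weight vector $N_{n,n}=|0\rangle\!\langle 1|^{\otimes n}$, surjectivity peels off $\M_{n-1}^d$ as the complement, and $N_{j,n}\in\D_{j,n}^d$ is then located by tracing down, $\tr_n N_{j,n}\propto N_{j,n-1}$, until one reaches the base case $N_{j,j}\in\D_{j,j}^d$; the isomorphisms \eqref{D isomorphism} come from Schur's Lemma applied to this chain of trace maps. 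You instead take the multiplicity-free decomposition of $\mathrm{Sym}^n\complex^d\otimes(\mathrm{Sym}^n\complex^d)^*$ into the irreps of highest weight $j(\epsilon_1-\epsilon_d)$ as known, and then verify directly at level $n$ that each $N_{j,n}$ (after a harmless $\SU(d)$ change of the orthogonal pair) is a nonzero highest-weight vector of weight $j(\epsilon_1-\epsilon_d)$ --- your computations of the weight, the vanishing of the raising operators via $[\,|e_a\rangle\!\langle e_{a+1}|,|e_1\rangle\!\langle e_d|\,]=0$, and the nonvanishing on $P_+(e_d^{\otimes j}\otimes e_1^{\otimes(n-j)})$ all check out --- so its orbit span is forced to be the unique summand of that type, and \eqref{D isomorphism} falls out for free from the weight being independent of $n$. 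The trade-off: your route is more direct and avoids the induction entirely, but it outsources the multiplicity-freeness of $\mathrm{Sym}^n\complex^d\otimes(\mathrm{Sym}^n\complex^d)^*$ to "standard representation theory" (Littlewood--Richardson or iterated contractions), which is precisely the content the paper proves by hand via the dimension count in Lemma~\ref{lemma:kernel irrep}; moreover, the paper's characterization of $\D_{n,n}^d$ as the kernel of the trace map is not incidental --- it is reused later (e.g.\ in the proof of Lemma~\ref{F=Mn}), so a complete replacement of the paper's argument by yours would still need that characterization established somewhere.
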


Isomorphism \eqref{D isomorphism} allows us to use the shorthand notation $\D^d_j$. Also, note that $\D^d_0$ is the trivial irrep, generated by the element $N_{0,n} = P_+ \in \M_n^d$; and $\D^d_1$ is the adjoint (quantum) irrep.


\begin{proof}
	In order to obtain the decomposition \eqref{eq:Dsubs} it is useful to define the trace map
	\begin{align}
		\label{contraction map I}
		\tr_n : \M_n^d &\to \M_{n-1}^d\ ,
		\\ 
		M &\mapsto \tr_n M\ ,
	\end{align}
	where $\tr_n$ denotes the trace over the $n$th factor in $(\complex^d)^{\otimes n}$. 
	Note that, by symmetry, this partial trace is independent of the choice of factor: $\tr_n M = \tr_1 M$.
	From now on, wherever is clear, we leave the dependence on $d$ implicit.
	
	Because the map \eqref{contraction map I} commutes with the $\SU (d)$ action,
	\begin{equation}
		\label{I commutativity}
		\tr_n\! \left[ U^{\otimes n} MU^{\otimes n \dagger} \right] 
		= U^{\otimes (n-1)} \tr_n [M]\, U^{\otimes (n-1) \dagger}
		\ ,
	\end{equation}  
	Schur's Lemma tells us that its kernel must be a subrepresentation of $\M_n$, which we denote by $\D_{n,n}$.
	It is proven in Lemma \ref{lemma:kernel irrep} that this representation is irreducible. 
	Also, it is straightforward to check that the matrix $N_{n,n}$ defined in \eqref{Def N generator} is in the kernel of the map \eqref{contraction map I}, that is 
	\begin{equation}
		\label{eq: Nnn in kernel}
		\tr_n N_{n,n}=0\ .
	\end{equation}
	Combining the above with irreducibility we see that the subspace $\D_{n,n}$ is generated by the action of the group on the single element $N_{n,n}$. 
	
	Because the map \eqref{contraction map I} is surjective, the orthogonal complement of $\D_{n,n} \subseteq \M_n$ is a representation isomorphic to $\M_{n-1}$, which in turn contains the irreducible representation $\D_{n-1,n-1} \subseteq \M_{n-1}$ in the kernel of the trace map $\tr_{n-1}: \M_{n-1} \to \M_{n-2}$. 
	Then, using Schur's Lemma again, there must be a subrepresentation $\D_{n-1,n} \subseteq \M_n$ that is isomorphic to $\D_{n-1,n-1} \subseteq \M_{n-1}$, which proves isomorphism \eqref{D isomorphism}.
	Proceeding inductively, we obtain the full decomposition \eqref{eq:Dsubs}.
	
	To conclude the proof of Lemma \ref{lemma: M decomp D} we need to show that $N_{j,n} \in \D_{j,n}$. 
	By noting that
	\begin{align}
		\tr_{n} [N_{j,n}] \propto N_{j,n-1} 
		\in \D_{j,n-1}
	\end{align}
	is non-zero when $j<n$, we can proceed inductively to arrive at
	\begin{align}
		\left(\tr_{j+1} \cdots \tr_{n-1} \tr_{n}\right) 
		\![N_{j,n}] 
		\propto N_{j,j} 
		\in \D_{j,j}\ ,
	\end{align}
	which is the case analyzed above \eqref{eq: Nnn in kernel}.
	The isomorphisms \eqref{D isomorphism} provided by Schur's Lemma conclude the proof.
\end{proof}

\subsection{The form $\Omega_\psi$ in $\M_n^d$ and $\D_n^d$}

\noindent
In this section we introduce a simple choice for the linear form $\Omega_\psi$ of Definition \ref{def:Omega}, for the cases $\complex\F_d \cong \M_n^d$ and $\complex\F_d \cong \D_n^d$.
As already mentioned, this form encodes the structure of the set of mixed states.


\begin{Lemma}
	The linear form $\Omega_\psi : \M_n^d \to \complex$ defined by 
	\begin{equation}
		\label{eq:Ansatz}
		\Omega_\psi (M)
		=
		\tr\!\left( |\psi\rangle\! \langle\psi |^{\otimes n} M \right)\ ,
	\end{equation}
	is invariant under all stabilizer unitaries $U\in \SU(d,\psi)$
	\begin{equation}
		\label{eq:stab invariance}
		\Omega_\psi (U^{\otimes n} M U^{\otimes n \dagger})
		=
		\Omega_\psi (M)\ ,
	\end{equation}
	and has support in all irreps $\D^d_{j} \subseteq \M^d_n$. 
\end{Lemma}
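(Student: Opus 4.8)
The plan is to verify the two assertions separately; both come down to short direct computations, the second one exploiting the explicit generator $N_{j,n}$ of $\D_{j,n}^d$ provided by Lemma~\ref{lemma: M decomp D}. The stabilizer invariance is immediate: any $U\in\SU(d,\psi)$ fixes the ray of $\psi$, so $U^\dagger|\psi\rangle = e^{-i\theta}|\psi\rangle$ for some phase $\theta$, hence $U^{\otimes n\dagger}|\psi\rangle^{\otimes n} = e^{-in\theta}|\psi\rangle^{\otimes n}$ and $U^{\otimes n\dagger}\proj\psi^{\otimes n}U^{\otimes n} = \proj\psi^{\otimes n}$. Using cyclicity of the trace,
\begin{equation*}
  \Omega_\psi(U^{\otimes n} M U^{\otimes n\dagger})
  = \tr\!\big(U^{\otimes n\dagger}\proj\psi^{\otimes n}U^{\otimes n}\,M\big)
  = \tr\!\big(\proj\psi^{\otimes n}M\big) = \Omega_\psi(M),
\end{equation*}
which is \eqref{eq:stab invariance}.

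For the support claim I would argue as follows. By Lemma~\ref{lemma: M decomp D}, for every $j\in\{0,\dots,n\}$ the subspace $\D_{j,n}^d$ is the $\SU(d)$-invariant subspace generated by $N_{j,n} = P_+\big(|0\rangle\!\langle 1|^{\otimes j}\otimes\unity^{\otimes(n-j)}\big)P_+$, with $|0\rangle,|1\rangle$ any orthogonal pair, and $N_{j,n}\in\D_{j,n}^d$; so it suffices to exhibit, for each $j$, one such pair with $\Omega_\psi(N_{j,n})\neq 0$. Since $|\psi\rangle^{\otimes n}$ lies in the symmetric subspace we have $P_+\proj\psi^{\otimes n}P_+ = \proj\psi^{\otimes n}$, so the flanking projectors in $N_{j,n}$ drop out under the trace and
\begin{equation*}
  \Omega_\psi(N_{j,n})
  = \langle\psi|^{\otimes n}\big(|0\rangle\!\langle 1|^{\otimes j}\otimes\unity^{\otimes(n-j)}\big)|\psi\rangle^{\otimes n}
  = \langle \psi|0\rangle^{\,j}\,\langle 1|\psi\rangle^{\,j}.
\end{equation*}
For $d\geq 2$ one can always choose the orthonormal pair $|0\rangle,|1\rangle$ so that $\langle 0|\psi\rangle$ and $\langle 1|\psi\rangle$ are both nonzero (e.g.\ complete $\psi$ together with a vector orthogonal to it to an orthonormal basis and rotate slightly within that plane), giving $\Omega_\psi(N_{j,n})\neq 0$; hence $\Omega_\psi$ has nonzero restriction to each $\D_{j,n}^d$, i.e.\ it has support in all irreps $\D_j^d\subseteq\M_n^d$. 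As a consistency check, this property is manifestly independent of $\psi$, because $\D_{j,n}^d$ is $\SU(d)$-invariant and $\Omega_{V\psi}(M) = \Omega_\psi(V^{\otimes n\dagger}MV^{\otimes n})$ — which also matches the fact that $\D_{j,n}^d$ does not depend on the auxiliary orthogonal pair.

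I do not anticipate a real obstacle here: the only points requiring attention are the bookkeeping with the symmetric-subspace projectors $P_+$ and invoking the (already established) independence of $\D_{j,n}^d$ from the chosen orthogonal pair, which is what licenses adapting that pair to the given $\psi$. The conceptual content is simply that the ``diagonal'' functional $M\mapsto\langle\psi|^{\otimes n}M|\psi\rangle^{\otimes n}$ has a nonzero component in every sector $\D_0^d,\dots,\D_n^d$, so no sector is invisible to the outcome probabilities.
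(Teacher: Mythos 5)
Your proof is correct and follows essentially the same route as the paper: verify the stabilizer invariance by a direct phase-cancellation and cyclicity argument, and establish the support claim by evaluating $\Omega_\psi$ on the generator $N_{j,n}$ of each irrep and showing the result $\langle\psi|0\rangle^j\langle 1|\psi\rangle^j$ can be made nonzero. The only (immaterial) difference is that you fix $\psi$ and adapt the orthogonal pair defining $N_{j,n}$, whereas the paper fixes $N_{j,n}$ and chooses a suitable $\psi$; the two are interchangeable by the $\SU(d)$-covariance you note at the end.
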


\begin{proof}
	It is straightforward to check that the form \eqref{eq:Ansatz} satisfies \eqref{eq:stab invariance}.
	To see that \eqref{eq:Ansatz} has support in each irrep $\D_{j,n} \subseteq \M_n$, we observe that, for each $j$, there is a pure state $\psi$ such that
	\begin{equation}
		\label{full support}
		\tr\!\left[ N_{j,n} 
		|\psi\rangle\! \langle\psi |^{\otimes n} \right]
		\neq 0\ ,
	\end{equation}
	where $N_{j,n}$ is defined in \eqref{Def N generator}.
\end{proof}

As mentioned above, these two constrains fix $\Omega_\psi$ up to an irrelevant proportionality factor in each irrep.
To obtain $\Omega_\psi$ in the case $\complex \F_d = \D_n^d$ we proceed in the following maner.
Since $\D_n^d$ is a subrepresentation of $\M_n^d$ we can take \eqref{eq:Ansatz} and perform the orthogonal projection onto the subspace $\D_{n,n}^d \subseteq \M_n^d$, defined via   \eqref{Def N generator} or via the kernel fo the map \eqref{contraction map I}.

\section{Multipartite systems}
\label{app:OPFcomp}

\noindent
In this section we describe and impose the consistency constraints associated to composite systems and the star product.

\subsection{Closedness under system composition}

\noindent
We require that any family of OPF sets $\F_2, \F_3, \ldots$ and $\F_\infty$ must be closed under system composition.
This means that the complete set of measurements $\F_a$ of a system $\complex^a$ also includes the measurements that appear in the description of $\complex^a$ as part of a larger system $\complex^a \otimes \complex^b$.

\begin{Definition}[Closedness under system composition]\label{def:closedness}
	If $\F_{ab}$ is the OPF set of $\complex^a \otimes \complex^b$ then the OPF set $\F_a$ of $\complex^a$ is the following collection of functions
	\begin{align}  
		\label{eq:Fa}
		\mathrm P \complex^a &\to [0,1]
		\\ \label{eq:Fab to Fa}
		\alpha &\mapsto \f(\alpha \otimes \beta)
	\end{align}
	for all $\f \in \F_{ab}$ and a fixed $\beta \in \mathrm P \complex^b$, and all $a,b \in \{2,3, \ldots, \infty\}$.
\end{Definition}

Note that the closure of $\F_{ab}$ under $\unity \otimes \SU(b)$ implies that the set $\F_a$ defined via (\ref{eq:Fa}-\ref{eq:Fab to Fa}) does not depend on the choice of $\beta$.
Also, it is straightforward to check that the OPF set $\F_a$ so defined satisfies all the requirements of Definition~\ref{def:F_d}.

\subsection{$\complex \F_d \cong \M_n^d$}

\noindent
In the finite-dimensional case, closedness under system composition (Definition~\ref{def:closedness}) implies the following strong fact.
For any set of measurements $\F_{ab}$ of a bipartite system $\complex^a \otimes \complex^b$, the measurement spaces of the subsystems are $\complex \F_a \cong \M^a_n$ for $\complex^a$ and $\complex \F_b \cong \M^b_n$ for  $\complex^b$, with the same $n$.
In addition, using the fact that any pair of systems can be jointly described as a bipartite system, we conclude that all finite-dimensional systems $\complex^d$ must have OPF space $\M^d_n$ (with the same value for $n$). 

\begin{Lemma}\label{F=Mn}
	For any pair of positive integers $a,b$, let $\F_{ab}$ be the OPF set of $\complex^a \otimes \complex^b$ with decomposition (Lemma~\ref{lemma:decomp final})
	\begin{equation}
		\complex\F_{ab}
		\cong
		\bigoplus_{j\in \mathcal J}
		\D_j^{ab}\ .
	\end{equation}
	Define $\F_a$ as the set of functions
	\begin{align}  
		\mathrm P \complex^a &\to [0,1]
		\\ \label{Fab to Fa}
		\alpha &\mapsto \f(\alpha \otimes \beta)
	\end{align}
	for all $\f \in \F_{ab}$ and a fixed $\beta \in \mathrm P \complex^b$.
	Then we have the $\SU(a)$-representation isomorphism
	\begin{equation}
		\label{iso:desired}
		\complex \F_a \cong \M^a_{n}
		\ ,
	\end{equation}
	where $n= \max \mathcal J$.
\end{Lemma}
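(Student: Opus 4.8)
The plan is to exploit the already-established classification of single-system OPF spaces (Lemma~\ref{lemma:decomp final}) together with the structure of the irreps $\D_j^d$ and the trace maps from Lemma~\ref{lemma: M decomp D}. First I would recall that, by Lemma~\ref{lemma:decomp final} applied to $\complex^a$ directly, we already know $\complex\F_a \cong \bigoplus_{j\in\mathcal K}\D_j^a$ for some finite set $\mathcal K$ containing $0$. So the content of the lemma is really the identification $\mathcal K = \{0,1,\ldots,n\}$ with $n=\max\mathcal J$, i.e.\ that the set of ``degrees'' that appear is an unbroken initial segment, and that it matches the top degree appearing in $\F_{ab}$. The two inclusions I need are: (a) every $\D_j^a$ with $j\in\mathcal J$ appears in $\complex\F_a$, and (b) no $\D_j^a$ with $j>\max\mathcal J$ appears, and moreover the initial-segment property holds.

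For inclusion (a), the key observation is that the map $\f\mapsto \f_\beta$ with $\f_\beta(\alpha)=\f(\alpha\otimes\beta)$ is $\SU(a)$-equivariant (acting as $U\mapsto U\otimes\unity_b$ on $\complex^a\otimes\complex^b$), so it sends the subrepresentation $\D_j^{ab}\subseteq\complex\F_{ab}$ to an $\SU(a)$-subrepresentation of $\complex\F_a$. I would show this restriction map is nonzero on each $\D_j^{ab}$ by exhibiting, for the generator $N_{j,n'}$-type element of $\D_j^{ab}$ (realized concretely via the $\M$-description, using $|0\rangle,|1\rangle$ supported in the $\complex^a$ tensor factors), a product state $\alpha\otimes\beta$ on which the corresponding OPF is nonzero — this is exactly the analogue of \eqref{full support}. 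Since $\D_j^{ab}$ restricted to $\SU(a)\times\{\unity\}$ still contains (an isotypic copy of) $\D_j^a$ with a stabilizer-invariant vector surviving, and since $\D_j^a$ is irreducible, the nonzero image must contain $\D_j^a$. Conversely, for the reverse direction and the initial-segment claim, I would use the trace-map machinery: the representation $\M_n^a$ decomposes as $\bigoplus_{j=0}^n\D_j^a$ (Lemma~\ref{lemma: M decomp D}), and closedness forces $\complex\F_a$ to be closed under the operations that, in the global picture, correspond to tracing out parts of the $n$ copies; this pushes $\D_n^a$ down through $\D_{n-1}^a,\ldots,\D_0^a$, so the presence of the top degree $n=\max\mathcal J$ forces all lower degrees to be present, and no degree above $n$ can appear because $\complex\F_{ab}$ itself has none (degrees are preserved, not raised, under restriction to a subsystem). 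Assembling (a) and (b) gives $\complex\F_a\cong\bigoplus_{j=0}^n\D_j^a\cong\M_n^a$.

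The main obstacle I anticipate is the bookkeeping in inclusion (a): I must be careful that restricting $\D_j^{ab}$ along $\SU(a)\hookrightarrow\SU(ab)$ (via $U\mapsto U\otimes\unity$) genuinely produces a copy of $\D_j^a$ rather than only higher-degree pieces — this requires knowing how $\D_j^{ab}$ branches, which I would handle by working in the explicit matrix model $\M$: write a generator of $\D_j^{ab}$ as $P_+^{ab}(|0\rangle\langle 1|^{\otimes j}\otimes\unity^{\otimes(n-j)})P_+^{ab}$ with $|0\rangle,|1\rangle\in\complex^a\otimes\complex^b$ chosen of the form $|0\rangle_a\otimes|\beta\rangle_b$, $|1\rangle_a\otimes|\beta\rangle_b$, and check that plugging $\psi=\alpha\otimes\beta$ into the resulting OPF reproduces, as a function of $\alpha$, exactly $\tr(|\alpha\rangle\langle\alpha|^{\otimes n}N_{j,n}^a)$ up to a positive constant. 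The other delicate point is making sure the value $n=\max\mathcal J$ is correctly the one that propagates, rather than some smaller degree surviving the restriction while the top one collapses; the nonvanishing argument via \eqref{full support}-type states rules this out. Everything else is an application of Schur's lemma and the equivariance of the restriction map, which are routine given the tools already developed.
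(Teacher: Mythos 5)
Your inclusion (a) is essentially sound: the explicit computation with $|0\rangle=|0\rangle_a\otimes\beta$, $|1\rangle=|1\rangle_a\otimes\beta$ does show that the generator of $\D_j^{ab}$ restricts, as a function of $\alpha$, to (a multiple of) the generator $N_{j,n}$ of $\D_j^a\subseteq\M_n^a$, so every $j\in\mathcal J$ survives. (Be aware that your softer phrasing ``the nonzero image must contain $\D_j^a$'' is by itself a non-sequitur, since the restriction of $\D_j^{ab}$ to $\SU(a)$ is highly reducible and a nonzero image could a priori consist of other irreps; it is only the concrete generator computation that rescues this step.) The genuine gap is your mechanism for the initial-segment property, i.e.\ for obtaining $\D_j^a$ when $j<n$ but $j\notin\mathcal J$ (e.g.\ $\mathcal J=\{0,2\}$, where you must still produce $\D_1^a$). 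You invoke ``closedness under operations that, in the global picture, correspond to tracing out parts of the $n$ copies.'' No such operation exists in the framework: the $n$ tensor factors of $|\alpha\rangle\!\langle\alpha|^{\otimes n}$ are an artifact of the representation, not physical subsystems, and neither Definition~\ref{def:F_d} nor Definition~\ref{def:closedness} gives you closure of $\complex\F_a$ under the partial-trace maps $\tr_n:\M_n^a\to\M_{n-1}^a$. So the claim that ``the presence of the top degree forces all lower degrees'' is unsupported as written.

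The paper closes this gap by a different route: all of $\D_0^a,\ldots,\D_n^a$ are extracted from the \emph{single} top irrep $\D_n^{ab}$, which is guaranteed to be present since $n=\max\mathcal J$. Writing $\f\in\D_n^{ab}$ as $\f(\alpha\otimes\beta)=\tr\bigl(F\,|\alpha\rangle\!\langle\alpha|^{\otimes n}\otimes|\beta\rangle\!\langle\beta|^{\otimes n}\bigr)$ with $\tr_n F=0$, only the projection of $F$ onto $P_+^{\A}P_+^{\B}\M_n^{ab}P_+^{\A}P_+^{\B}\cong\M_n^a\otimes\M_n^b\cong\bigoplus_{j,j'}\D_j^a\otimes\D_{j'}^b$ matters. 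The key identity is $\tr_n=\tr_{\A_n}\tr_{\B_n}$: if either marginal trace vanishes so does $\tr_n$, hence the kernel of $\tr_n$ meets the product space in \emph{every} $\D_j^a\otimes\D_n^b$ (and every $\D_n^a\otimes\D_{j'}^b$), $j=0,\ldots,n$. Evaluating at a fixed $\beta$ whose form $\Omega_\beta$ has support on $\D_n^b$ then leaves a copy of each $\D_j^a$, $j=0,\ldots,n$, inside $\complex\F_a$, regardless of whether $j\in\mathcal J$. Combined with the upper bound (everything lives in $\M_n^a\otimes\M_n^b$, so no $\D_{j'}^a$ with $j'>n$ can occur — your ``degrees are not raised'' claim, which does need this justification) and the no-repetition statement of Lemma~\ref{lemma:decomp final}, this yields $\complex\F_a\cong\bigoplus_{j=0}^n\D_j^a\cong\M_n^a$. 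You should replace your trace-pushing step with this branching argument (or an equivalent one); without it the proof does not go through.
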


Note that, if we define $\F_b$ by exchanging the role of the subsystems $\complex^a \otimes \complex^b$ in \eqref{Fab to Fa}, then we obtain the $\SU(b)$-representation isomorphism $\complex \F_b \cong \M^b_{n}$ with the same value for $n$ as in \eqref{iso:desired}.
Using the fact that any pair of systems can be jointly described as a bipartite system, we arrive at the following.

\begin{Corollary}
	Closedness under system composition (Definition~\ref{def:closedness}) implies that all finite-dimensional Hilbert spaces $\complex^d$ have OPF space $\complex \F_d \cong \M_n^d$ with the same $n$.
\end{Corollary}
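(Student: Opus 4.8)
The plan is to leverage Lemma~\ref{F=Mn} directly: it already establishes that \emph{within a single bipartite composite} both factors carry the same integer $n$, so the only thing left is to connect any two dimensions by placing them as the two halves of one composite. First I would record that the integer attached to a system is intrinsic. By Lemma~\ref{F=Mn} each finite-dimensional $\complex^d$ has $\complex\F_d \cong \M^d_{n_d}$, and since $\M^d_n = \bigoplus_{j=0}^n \D^d_j$ by Lemma~\ref{lemma: M decomp D}, the index ranges $\{0,\ldots,n\}$ are distinct for distinct $n$; hence $n_d$ is uniquely determined by $\complex\F_d$. It therefore makes sense to speak of ``the $n$ of $\complex^d$''.

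Second, I would fix two arbitrary dimensions $a,b \in \{2,3,\ldots\}$ and consider the composite system $\complex^a \otimes \complex^b \cong \complex^{ab}$, whose OPF set $\F_{ab}$ is part of the given family. Applying Lemma~\ref{F=Mn} to the bipartition $\complex^a \otimes \complex^b$ and writing $\complex\F_{ab} \cong \bigoplus_{j\in\mathcal J} \D^{ab}_j$ with $n=\max\mathcal J$, the lemma yields $\complex\F_a \cong \M^a_n$ for the left factor via the restriction \eqref{Fab to Fa}; the same argument with the roles of the factors exchanged gives $\complex\F_b \cong \M^b_n$ with the \emph{same} $n$. The crucial point is that closedness under system composition (Definition~\ref{def:closedness}) defines the single-system OPF set $\F_a$ to be exactly this restriction from $\F_{ab}$, so the $\complex\F_a$ produced by the lemma is the intrinsic OPF space of $\complex^a$. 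Consequently $n_a = n$, and likewise $n_b = n$, whence $n_a = n_b$.

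Third, since $a$ and $b$ were arbitrary, the pairwise equality $n_a = n_b$ holds for every pair of finite dimensions, which forces a single common value $n$ with $\complex\F_d \cong \M^d_n$ for all finite $d$. I expect the main (and essentially the only) subtlety to be the identification step: one must be sure that the OPF set obtained by restricting from the composite $\complex^{ab}$ really coincides with the canonical $\F_a$ sitting in the family $\F_2,\F_3,\ldots$, independently of the ancilla used to build the composite. This is guaranteed by Definition~\ref{def:closedness} together with the remark that the restriction \eqref{Fab to Fa} does not depend on the chosen background state $\beta$; once this consistency is in place the corollary follows immediately.
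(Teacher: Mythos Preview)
Your proposal is correct and follows essentially the same route as the paper: the paper simply notes that exchanging the roles of $\complex^a$ and $\complex^b$ in Lemma~\ref{F=Mn} gives $\complex\F_b\cong\M^b_n$ with the same $n$, and then invokes the fact that any pair of systems can be jointly described as a bipartite system. Your write-up is more careful (explicitly checking that $n_d$ is well defined via Lemma~\ref{lemma: M decomp D} and flagging the identification of the restricted $\F_a$ with the canonical one), but the argument is the same.
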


\begin{proof}[Proof of Lemma \ref{F=Mn}]
	In order to establish the isomorphism \eqref{iso:desired} we analyze how the functions \eqref{Fab to Fa} transform under the subgroup $\SU(a) \otimes \unity$.
	First, we do this in the case where $\mathcal J$ has finite cardinality, so that the decomposition \eqref{eq:decomp final} of $\complex\F_{ab}$ has a largest irrep $\D_n^{ab}$.
	We further split this analysis into the case where the function $\f$ in \eqref{Fab to Fa} belongs to the subspace $\f \in \D_n^{ab} \subseteq \complex \F_{ab}$, and the general case.
	
	Using the characterization of $\D_n^{ab}$ as the kernel of the map \eqref {contraction map I} we can say the following.
	For each $\f \in \D_n^{ab} \subseteq \complex \F_{ab}$ there is a matrix $F\in \M^{ab}_n$ such that $\tr_n F =0$ and
	\begin{equation}
		\label{eq:N subspace}
		\f(\alpha \otimes \beta)
		=
		\tr\!\left(F \,
		|\alpha\rangle\! \langle \alpha|^{\otimes n}\!
		\otimes\!
		|\beta\rangle\! \langle \beta|  ^{\otimes n} \right)\ ,
	\end{equation}
	for all $\alpha, \beta$.
	(Note that, in order to improve clarity, we re-arranged the order of the tensor factors.)
	The matrices
	\begin{align}
		\label{eq: alpha beta}
		|\alpha\rangle\! \langle \alpha|^{\otimes n}\!
		\otimes\!
		|\beta\rangle\! \langle \beta|^{\otimes n}  
		\in
		\M_n^{ab}\ ,
	\end{align}
	are contained in the subspace
	\begin{align}
		\nonumber
		|\alpha\rangle\! \langle \alpha|^{\otimes n}\!
		\otimes\!
		|\beta\rangle\! \langle \beta|^{\otimes n}  
		&\in
		P^\A_+ P_+^\B \M_n^{ab} P^\A_+ P_+^\B
		\\ \nonumber
		& \cong 
		\M_n^a \otimes \M_n^b
		\\ & \cong
		\bigoplus_{j,j'=0}^n \D_{j}^a 
		\otimes \D_{j'}^b
		\ ,
		\label{eq: prod space}
	\end{align}
	where the isomorphisms are of $\SU(a)\otimes \SU(b)$ representations.
	Even more, using the full support conditions \eqref{full support} in each tensor factor, we conclude that the matrices $|\alpha\rangle\! \langle \alpha|^{\otimes n}\!
	\otimes\!
	|\beta\rangle\! \langle \beta|^{\otimes n}$ generate the whole space \eqref{eq: prod space}.
	
	Next we analyze the $\SU(a) \otimes \SU(b)$ action on the function \eqref{eq:N subspace}, which is the action on the intersection between the subspaces $\{ F\in \M^{ab}_n : \tr_n F =0 \}$ and \eqref{eq: prod space}.
	This intersection can be characterized by writing the trace as 
	\begin{equation}
		\tr_n = \tr_{\A_n} \tr_{\B_n}\ ,
	\end{equation}
	where $\tr_{\A_n}$ is the trace on the $n$th factor of $\M_n^a$, and $\tr_{\B_n}$ is the trace on the $n$th factor of $\M_n^b$.
	The above identity implies that if $\tr_{\A_n} F=0$ or $\tr_{\B_n} F=0$ then $\tr_n F=0$. Therefore, the above-mentioned intersection contains all irreps $\D^a_n \otimes \D^b_j$ and $\D^a_j \otimes \D^b_n$ for $j=0,1,\ldots, n$.
	This implies that the $\SU(a) \otimes \unity$ action on the space of functions \eqref{eq:N subspace} with $\f \in \D_n^{ab} \subseteq \complex \F_{ab}$ decomposes into the irreps $\D_0^a, \ldots, \D_n^a$, with possible repetitions.
	
	In the general case $\f \in \complex \F_{ab}$, the addition of all subspaces $\D_j^{ab} \subseteq \F_{ab}$ with $j<n$ does not add any new irrep to the list $\D_0^a, \ldots, \D_n^a$.
	Although it may increase the repetitions.
	
	Finally, we establish the desired isomorphism \eqref{iso:desired} by recalling Lemma \ref{lemma:decomp final}. This tells us that any OPF set, like the $\F_a$ defined through \eqref{Fab to Fa}, has no repeated irreps.
\end{proof}

\subsection{The star product}

\noindent
In this subsection we introduce the star product, which contains the information of which measurements of a composite system $\F_{ab}$ are local.

\begin{Definition}\label{def:star-product}
	The star product is a map $\star: \F_a \times \F_b \to \F_{ab}$ defined on any pair of OPF sets $\F_a, \F_b$, with the following properties:
	\begin{itemize}
		
		\item preserves the local structure
		\begin{equation}
			\label{eq:factorization constraint 0}
			(\f\star \u)(\alpha \otimes \beta) = 
			\f(\alpha)\ ,
		\end{equation}
		
		\item preserves probability
		\begin{align}
			\label{eq:u*u=u}
			\u_\A \star \u_\B &= \u_{\A\B}\ ,
			\\ \label{cond:0=0}
			\f_\A \star \mathbf 0_\B &= \mathbf 0_{\A\B}
		\end{align}
		
		\item commutes with local mixing operations
		\begin{equation}
			\label{comm mix proc}
			\left(\mbox{$\sum_x$}\, p_x\, \f^x\right) \star \g
			=
			\mbox{$\sum_x$}\, p_x \left(\f^x \star \g \right)\ ,
		\end{equation}
		
		\item commutes with the local group action
		\begin{align}
			\label{eq:commutativity}
			(\f \circ U)\star \g =
			(\f\star \g)\circ (U\otimes \unity)\ ,
		\end{align}
		
		\item and it is associative
		\begin{equation}
			\label{associativity}
			(\f\star\g)\star \mathbf h
			=
			\f\star(\g\star \mathbf h)\ ,
		\end{equation}
	\end{itemize}
	for any $\f,\f^x\in \F_a$; $\g\in \F_b$; $\mathbf h \in\F_c$; $\alpha\in \mathrm P \complex^a$; $\beta \in \mathrm P \complex^b$; $U\in \U(a)$; $a,b \in \{2,3,\dots, \infty\}$, and any probability distribution $p_x$.
	Properties (\ref{eq:factorization constraint 0}-\ref{associativity}) must also hold when exchanging factors.
\end{Definition}

The $\star$-product allows us to write the reduced state of a bipartite pure state $\psi \in \complex^a \otimes \complex^b$ on the subsystem $\complex^a$ as the linear form $\f \mapsto \Omega_{\psi} (\f\star \u)$ for all $\f\in \complex\F_a$.

Note that property \eqref{eq:factorization constraint 0} is weaker than the analog condition in the main text:
\begin{equation}
	\label{eq:factorization stronger}
	(\f\star \g)(\alpha \otimes \beta) = 
	\f(\alpha) \g(\beta)\ .
\end{equation}
The reason for writing the stronger condition in the main text is that it does not require $\u$ to be defined.
The following lemma proves that, in our context, condition \eqref{eq:factorization constraint 0} implies condition \eqref{eq:factorization stronger}.

\begin{Lemma}
	Suppose that any ensemble $(\psi_r, p_r)$ satisfying
	\begin{equation}
		\sum_r p_r\, \f(\psi_r)
		=
		\f(\varphi)\ ,
		\ \ \ \forall\f\in\F_d\ ,
	\end{equation}
	is of the form $\psi_r= \varphi$ for all $r$.
	Then \eqref{eq:factorization constraint 0} implies
	\begin{equation}
		\label{eq:factorization constraint}
		(\f\star \g)(\alpha \otimes \beta) = 
		\f(\alpha)\, \g(\beta)\ .
	\end{equation}
\end{Lemma}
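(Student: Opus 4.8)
The plan is to derive \eqref{eq:factorization constraint} by comparing a bilocal measurement with the two local ones it is built from, and then using the extremality of pure states that the hypothesis expresses. First I fix arbitrary OPFs $\f\in\F_a$, $\g\in\F_b$ and embed them into full measurements $\{\f_j\}_j\subseteq\F_a$ and $\{\g_i\}_i\subseteq\F_b$ with $\f_1=\f$, $\g_1=\g$ (every observable outcome belongs to some measurement). Then $\{\f_j\star\g_i\}_{j,i}$ is a full measurement on $\complex^a\otimes\complex^b$: since $\star$ is compatible with mixtures \eqref{comm mix proc} and annihilates the zero OPF \eqref{cond:0=0} (and its factor-exchanged form), it is additive over sums of OPFs that are normalized, so $\sum_{j,i}\f_j\star\g_i=(\sum_j\f_j)\star(\sum_i\g_i)=\u_\A\star\u_\B=\u_{\A\B}$. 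Evaluating on the product state $\alpha\otimes\beta$ defines a joint distribution $P(j,i):=(\f_j\star\g_i)(\alpha\otimes\beta)\geq 0$, and coarse-graining one index at a time, together with the weak factorization \eqref{eq:factorization constraint 0} and its factor-exchanged form, yields the marginals $\sum_i P(j,i)=(\f_j\star\u_\B)(\alpha\otimes\beta)=\f_j(\alpha)$ and $\sum_j P(j,i)=(\u_\A\star\g_i)(\alpha\otimes\beta)=\g_i(\beta)$.

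I would then split into two cases. If $\f_j(\alpha)=0$, the first marginal identity and $P(j,i)\geq 0$ force $P(j,i)=0=\f_j(\alpha)\g_i(\beta)$. If $p_j:=\f_j(\alpha)>0$, introduce the linear form $\mu_j$ on $\complex\F_b$ defined by $\mu_j(\g')=\tfrac1{p_j}(\f_j\star\g')(\alpha\otimes\beta)$; it is non-negative on $\F_b$ and, by weak factorization, $\mu_j(\u_\B)=1$. The marginal identity $\sum_j P(j,i)=\g_i(\beta)$, holding for all outcomes of every measurement on $\complex^b$, reads $\sum_{j:\,p_j>0}p_j\,\mu_j=\Omega_\beta$ as functionals on $\complex\F_b$.

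The crux, and the step I expect to be the main obstacle, is to show that each such $\mu_j$ is a genuine \emph{mixed state} of $\complex^b$, i.e.\ $\mu_j\in\mathrm{conv}\{\Omega_\varphi:\varphi\in\mathrm P\complex^b\}$, rather than merely an abstract positive normalized functional (the two notions can differ in non-quantum OPF theories). Operationally this is the statement that the statistics induced on $\complex^b$ by the product preparation $\alpha\otimes\beta$, conditioned on obtaining outcome $\f_j$ of a measurement carried out on the $\complex^a$ part, is itself a legitimate preparation procedure for $\complex^b$ and hence must be described by some ensemble; I would extract it from closedness under system composition (Definition~\ref{def:closedness}) and from the fact that the description of the system $\complex^b$ already fixes its set of states. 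Granting this, write $\mu_j=\sum_s q_s^{(j)}\Omega_{\varphi_s^{(j)}}$ for some ensemble $(\varphi_s^{(j)},q_s^{(j)})$; then $\Omega_\beta=\sum_{j,s}p_j\,q_s^{(j)}\,\Omega_{\varphi_s^{(j)}}$ exhibits $\beta$ as an ensemble, so the hypothesis forces $\varphi_s^{(j)}=\beta$ for all $j,s$, whence $\mu_j=\Omega_\beta$ and $P(j,i)=p_j\,\g_i(\beta)=\f_j(\alpha)\,\g_i(\beta)$. Specializing to $j=i=1$ gives $(\f\star\g)(\alpha\otimes\beta)=\f(\alpha)\,\g(\beta)$; since $\f,\g$ were arbitrary (and unitary transitivity on $\complex^a$ and $\complex^b$ covers all $\alpha,\beta$ at once), this is exactly \eqref{eq:factorization constraint}.
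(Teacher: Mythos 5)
Your proposal is correct and follows essentially the same route as the paper's proof: expand $\u$ as a complete measurement on one factor, use the weak condition \eqref{eq:factorization constraint 0} to write the pure marginal as a convex combination of conditional (steered) functionals, and invoke the extremality premise to force each conditional functional to equal the pure state (the paper conditions on a measurement $\{\g_i\}$ of $\B$ and steers $\A$, whereas you condition on $\A$ and steer $\B$, an immaterial relabeling). The ``crux'' you flag --- that the conditional functionals must be genuine mixtures of pure states before the premise, which is literally a statement about ensembles, can be applied --- is treated only implicitly in the paper (it simply calls them ``(not necessarily pure) states''), so your explicit handling of that step is a point in your favour rather than a deviation.
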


\begin{proof}
	First, let $\{\g_i \}$ be a complete measurement and define the following probabilities and the (not necessarily pure) states
	\begin{align}
		p_i &= \g_i(\beta) \ ,
		\\
		\Omega_i (\f) &= \left\{ \begin{array}{ll}
			(\f \star \g_i)(\alpha \otimes \beta) 
			/p_i
			& \mbox{ if } p_i \neq 0
			\\
			\f(\alpha) & \mbox{ if } p_i=0
		\end{array}\right. \ ,
	\end{align}
	for all $\f$.
	Second, substitute $\sum_i \g_i =\u$ in \eqref{eq:factorization constraint 0} obtaining
	\begin{equation}
		\sum_i p_i\, \Omega_i(\f) 
		=
		\f(\alpha)\ ,
	\end{equation}
	for all $\f$.
	Third, use the premise of the lemma to conclude that
	\begin{equation}
		\Omega_i(\f) 
		=
		\f(\alpha)
	\end{equation}
	for all $i$ and $\f$.
	Finally, substituting back the definition of $\Omega_i$ we obtain
	\begin{equation}
		(\f \star \g_i)(\alpha \otimes \beta)
		= \f(\alpha)\, \g_i(\beta)\ ,  
	\end{equation}
	which implies \eqref{eq:factorization constraint}.
\end{proof}

By ``preservation of probability" \eqref{eq:u*u=u} it is meant that the fact that all outcome probabilities add up to one
\begin{align}
	\sum_i \f_i (\alpha) &= 1\ ,
\end{align}
is independent of whether we describe a system on its own or as part of a larger system
\begin{equation}
	\sum_{i} (\f_i \star \u_\B)
	(\psi_{\A\B}) = 1\ .
\end{equation}
Also, the joint outcome $\f_\A \star \bf 0_\B$, where $\bf 0_\B$ is the formal outcome with zero probability for all states, must have zero probability, which gives \eqref{cond:0=0}.

The action of the $*$-product is not defined on the elements of $\complex \F_a$ that are not in $\F_a$.
However, the following lemma shows that one can define the action of the $*$-product to the rest of elements of $\complex \F_a$ in such a way that the map is bilinear.

\begin{Lemma}
	Any star-product map $\star: \F_a \times \F_b \to \F_{ab}$ as specified in Definition \ref{def:star-product} can be extended to a bilinear map $\star: \complex \F_a \times \complex \F_b \to \complex \F_{ab}$ with the same properties (\ref{eq:factorization constraint 0}-\ref{associativity}).
\end{Lemma}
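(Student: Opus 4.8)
The plan is to extend $\star$ from the generating sets to their complex spans in the canonical way, using that the restricted OPF set $\F_a$ is a convex set containing $\mathbf 0$ (it satisfies the requirements of Definition~\ref{def:F_d} by Definition~\ref{def:closedness}) and, by construction, spans $\complex\F_a$. The only real content is well-definedness: the value assigned to a linear combination of OPFs must not depend on how that combination is written. I would build a real-bilinear map $\real\F_a\times\real\F_b\to\real\F_{ab}$ first and then complexify. Note that nothing in this argument uses finite dimensionality, so it covers $d=\infty$ as well.

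First I would prove a scaling identity. Applying the mixing property~\eqref{comm mix proc} to the probability distribution $(p_1,p_2)=(\lambda,1-\lambda)$ with $\f^1=\f\in\F_a$, $\f^2=\mathbf 0$, and using $\mathbf 0\star\g=\mathbf 0$ (equation~\eqref{cond:0=0} with the factors exchanged), yields $(\lambda\f)\star\g=\lambda(\f\star\g)$ for all $\lambda\in[0,1]$, $\f\in\F_a$, $\g\in\F_b$. Together with~\eqref{comm mix proc} this gives additivity on $\F_a$: if $\f_1,\f_2,\f_1+\f_2\in\F_a$ then $(\f_1+\f_2)\star\g=\f_1\star\g+\f_2\star\g$, since $\tfrac12(\f_1+\f_2)\star\g$ can be evaluated in two ways. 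Next, let $\F_a^{+}$ be the set of finite non-negative real combinations of elements of $\F_a$. Because $\F_a$ is convex and contains $\mathbf 0$, one checks $\F_a^{+}=\{t\f:t\ge 0,\ \f\in\F_a\}$ and $\real\F_a=\F_a^{+}-\F_a^{+}$. I would set $(t\f)\star\g:=t(\f\star\g)$ on $\F_a^{+}$; the scaling identity makes this independent of the chosen representation $t\f$, and~\eqref{comm mix proc} makes the resulting map additive and positively homogeneous on $\F_a^{+}$. Then extend to $\real\F_a$ by $(\mathbf x_{+}-\mathbf x_{-})\star\g:=\mathbf x_{+}\star\g-\mathbf x_{-}\star\g$ with $\mathbf x_{\pm}\in\F_a^{+}$, well-definedness following from cone-additivity applied to $\mathbf x_{+}+\mathbf y_{-}=\mathbf y_{+}+\mathbf x_{-}$. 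This produces, for each $\g\in\F_b$, a real-linear map $\real\F_a\to\real\F_{ab}$; running the identical construction in the second slot — legitimate because all the defining properties of $\star$ also hold with the factors exchanged — upgrades it to a real-bilinear map $\real\F_a\times\real\F_b\to\real\F_{ab}$.

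Finally, since the OPFs are $[0,1]$-valued functions, $\complex\F_a=\real\F_a\oplus i\real\F_a$ (a function that is both real-valued and purely imaginary is zero), and likewise for $b$ and $ab$; I would complexify the real-bilinear map in the obvious way, obtaining $\star:\complex\F_a\times\complex\F_b\to\complex\F_{ab}$, which by construction restricts to the original $\star$ on $\F_a\times\F_b$. It then remains to check that properties~\eqref{eq:factorization constraint 0}--\eqref{associativity} survive: each is linear (resp.\ trilinear) in the OPF arguments with the remaining symbols held fixed, so, holding on the generating sets, it holds on the full spans by (bi/tri)linearity of the extended $\star$; in particular~\eqref{comm mix proc} becomes a special case of bilinearity, and~\eqref{eq:commutativity} uses that $\f\mapsto\f\circ U$ is linear on $\complex\F_a$.

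The main obstacle is entirely the well-definedness bookkeeping of the first two stages, which hinges on the scaling identity $(\lambda\f)\star\g=\lambda(\f\star\g)$ and on the cone decomposition $\real\F_a=\F_a^{+}-\F_a^{+}$; everything after that is formal. One could instead fix Hamel bases of $\complex\F_a$ and $\complex\F_b$ lying inside $\F_a$ and $\F_b$ and extend bilinearly from basis pairs, but verifying that this extension agrees with $\star$ on all of $\F_a\times\F_b$ reduces to exactly the same scaling and additivity facts, so the cone construction is the cleaner route.
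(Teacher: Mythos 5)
Your proposal is correct and follows essentially the same route as the paper: the paper fixes $\g$, observes that $\f\mapsto\f\star\g$ sends $\mathbf 0$ to $\mathbf 0$ and commutes with mixtures, cites Appendix~1 of \cite{Hardy_quantum_2001} for the real-linear extension to $\mathbb R\F_a$, complexifies, and repeats with the parties exchanged. The only difference is that you prove the extension step explicitly (scaling identity, cone $\F_a^{+}$, differences) instead of citing it, which is exactly the content of the cited reference.
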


\begin{proof}
	For any given $\g\in \F_b$ define the map
	\begin{align}
		\xi : \F_a &\rightarrow \F_{ab} \ ,\\
		\f &\mapsto \f \star \g \ .
	\end{align}
	Using Definition \ref{def:star-product} we obtain the following properties for the map
	\begin{align}
		&\xi (\mathbf 0) = \mathbf 0\ ,
		\\
		&\xi (\mbox{$\sum_x$}\, p_x\, \f^x )
		=
		\mbox{$\sum_x$}\, p_x\, \xi (\f^x)\ ,
	\end{align}
	for any probability distribution $p_x$.
	In Appendix~1 of \cite{Hardy_quantum_2001} it is proven that it is possible to define a $\mathbb R$-linear map $\xi': \mathbb R\F_a \to \mathbb R\F_{ab}$ which is identical to $\xi$ inside $\F_a$.
	Finally, we can define the $\complex$-linear map $\xi'': \complex \F_a \to \complex \F_{ab}$ in the natural way
	\begin{equation}
		\xi'' (\f_1 +\mathrm i \f_2)
		=
		\xi' (\f_1) +\mathrm i\, \xi'(\f_2)
	\end{equation}
	for any pair $\f_1, \f_2 \in \mathbb R \F_a$.
	
	The above construction can be repeated with an exchange of parties. Proving the desired result.
\end{proof}

\begin{Lemma}\label{Lemma:product measurements}
	In the case $\complex \F_d \cong \M^d_{n}$ we have the identity
	\begin{equation}
		\label{eq:++ subspace}
		P^\A_+ P_+^\B 
		\left(\M_n^a \star \M_n^b\right)
		P^\A_+ P_+^\B
		= 
		\M_n^a \otimes \M_n^b\ ,
	\end{equation}
	of $\SU(a) \otimes \SU(b)$ representations.
\end{Lemma}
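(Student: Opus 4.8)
The plan is to prove the two inclusions separately. The easy inclusion is ``$\subseteq$'': for any $\f\in\F_a$, $\g\in\F_b$ the product $\f\star\g\in\F_{ab}$ is an OPF, hence corresponds (via $\complex\F_{ab}\cong\M_n^{ab}$ and Lemma \ref{F=Mn}) to some matrix in $\M_n^{ab}$; compressing by $P_+^\A P_+^\B$ lands it inside $P_+^\A P_+^\B\M_n^{ab}P_+^\A P_+^\B\cong\M_n^a\otimes\M_n^b$ (the isomorphism already used in \eqref{eq: prod space}). Extending bilinearly via the previous lemma, the left-hand side is therefore a subrepresentation of $\M_n^a\otimes\M_n^b$.

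For the reverse inclusion ``$\supseteq$'' I would argue representation-theoretically. First I would show that the compressed star-product subspace is nonzero and, more precisely, that its image contains enough to generate everything. The key observation is the factorization property \eqref{eq:factorization stronger}: on product states $\psi=\alpha\otimes\beta$ one has $(\f\star\g)(\alpha\otimes\beta)=\f(\alpha)\g(\beta)$, which by the Ansatz \eqref{eq:Ansatz} means
\begin{equation}
  \tr\!\left[(F\star G)\,|\alpha\rangle\!\langle\alpha|^{\otimes n}\otimes|\beta\rangle\!\langle\beta|^{\otimes n}\right]
  =\tr\!\left[F|\alpha\rangle\!\langle\alpha|^{\otimes n}\right]\tr\!\left[G|\beta\rangle\!\langle\beta|^{\otimes n}\right],
\end{equation}
for all $\alpha,\beta$. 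Since the matrices $|\alpha\rangle\!\langle\alpha|^{\otimes n}\otimes|\beta\rangle\!\langle\beta|^{\otimes n}$ span all of $\M_n^a\otimes\M_n^b$ (the full-support statement \eqref{full support} applied in each tensor factor, as already exploited below \eqref{eq: prod space}), this relation determines $P_+^\A P_+^\B(F\star G)P_+^\A P_+^\B$ \emph{uniquely}: it is exactly $F\otimes G$ viewed inside $\M_n^a\otimes\M_n^b$. Hence the compressed star product, restricted to $\F_a\times\F_b$, realizes precisely the bilinear map $(F,G)\mapsto F\otimes G$. Taking bilinear extensions, the span of $\{F\otimes G : F\in\M_n^a, G\in\M_n^b\}$ is all of $\M_n^a\otimes\M_n^b$, giving the reverse inclusion and hence the claimed identity of spaces; compatibility with the $\SU(a)\otimes\SU(b)$ action follows from \eqref{eq:commutativity} and its partner with parties exchanged.

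The main obstacle is the uniqueness step: one must be sure that the value of $(\f\star\g)$ on \emph{product} states $\alpha\otimes\beta$ pins down the $P_+^\A P_+^\B$-compression of the representing matrix, even though $\f\star\g$ a priori only needs to be defined (as an OPF) on all of $\mathrm P\complex^{ab}$, including entangled states. The resolution is that a matrix $M\in P_+^\A P_+^\B\M_n^{ab}P_+^\A P_+^\B\cong\M_n^a\otimes\M_n^b$ is uniquely determined by the numbers $\tr[M\,|\alpha\rangle\!\langle\alpha|^{\otimes n}\otimes|\beta\rangle\!\langle\beta|^{\otimes n}]$ precisely because those product matrices span the space — so no information is lost by restricting attention to product inputs. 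One should also take a moment to note that the $P_+^\A P_+^\B$-compression is the natural projection onto the relevant block and commutes with the group action, so the final equality is genuinely one of $\SU(a)\otimes\SU(b)$ representations.
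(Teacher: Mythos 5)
Your proposal is correct and follows essentially the same route as the paper: the core of both arguments is to write the factorization constraint $(\f\star\g)(\alpha\otimes\beta)=\f(\alpha)\g(\beta)$ in terms of the form \eqref{eq:Ansatz}, and then use the fact that the matrices $|\alpha\rangle\!\langle\alpha|^{\otimes n}\otimes|\beta\rangle\!\langle\beta|^{\otimes n}$ span $P_+^\A P_+^\B\M_n^{ab}P_+^\A P_+^\B$ to conclude $P_+^\A P_+^\B(F\star G)P_+^\A P_+^\B=F\otimes G$, from which the identity of representations follows. Your explicit separation into two inclusions and the remark on uniqueness of the compression are just more verbose renderings of the same spanning argument the paper uses.
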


\begin{proof}
	If we write condition \eqref {eq:factorization constraint} with the form $\Omega_\alpha$ introduced in \eqref{eq:Ansatz} then we get
	\begin{align}
		\nonumber
		&\tr\!\left(
		[F \star G] |\alpha\rangle\! \langle \alpha|^{\otimes n}\!
		\otimes\!
		|\beta\rangle\! \langle \beta|^{\otimes n}
		\right)
		\\ \nonumber &\qquad\  
		=
		\tr\!\left(
		F |\alpha\rangle\! \langle \alpha|^{\otimes n}
		\right)
		\tr\!\left(
		G |\beta\rangle\! \langle \beta|^{\otimes n}
		\right)
		\\ \label{eq:59} &\qquad\  
		=
		\tr\!\left(
		[F \otimes G] |\alpha\rangle\! \langle \alpha|^{\otimes n}\!
		\otimes\!
		|\beta\rangle\! \langle \beta|^{\otimes n}
		\right)\ ,
	\end{align}
	for all $F\in \M_n^a$, $G\in \M_n^b$, $\alpha\in \mathrm P \complex^a$ and $\beta \in \mathrm P \complex^b$.
	By noting that the set of matrices $|\alpha\rangle\! \langle \alpha|^{\otimes n}\! \otimes\! |\beta\rangle\! \langle \beta|^{\otimes n}$ span the subspace $P^\A_+ P_+^\B \M_n^{ab} P^\A_+ P_+^\B \subseteq \M_n^{ab}$ we can write \eqref{eq:59} as
	\begin{equation}
		P^\A_+ P_+^\B 
		\left(F \star G\right)
		P^\A_+ P_+^\B
		=
		F \otimes G\ ,
	\end{equation}
	for all $F,G$.
	This proves identity \eqref{eq:++ subspace}.
\end{proof}

\section{Non-associativity of $\star:\M_a \times \M_b \to \M_{ab}$}
\label{app:nonassoc}

\subsection{The permutation group and Schur-Weyl duality}

\noindent
In this section we review some well-known results of representation theory. %
The $n$-th tensor-power of a vector space  $\complex^d$ can be decomposed as
\begin{align}
	\left( \complex^d \right)^{\otimes n}
	\cong \bigoplus_\lambda 
	\V_\lambda^d \otimes \mathcal S_\lambda^n
	\ ,
\end{align}
where $\lambda$ runs over all partitions of $n$ with at most $d$ parts, $\V^d_\lambda$ are irreps of $\SU(d)$, and $\mathcal S^n_\lambda$ are the irreps of the group of permutations of $n$ objects.
The partition $\lambda=(n)$ corresponds to the trivial representation of the group of permutations, and hence, all vectors in the subspace $\V_{(n)}^d \otimes \mathcal S_{(n)}^n$ are permutation-invariant. 
Because of this, this subspace and the corresponding projector $P_+=P_{(n)}$ are called symmetric.

When considering a bipartite space $\complex^d = \complex^a \otimes \complex^b$, the symmetric projector can be written as
\begin{align}
	\label{AB symm proj}
	P_+^{\A\B}
	= \sum_\lambda 
	Q_\lambda^{\A\B}\ ,
\end{align}
where $Q_\lambda^{\A\B}$ is the orthogonal projector onto the subspace of $[\V_\lambda^a \otimes \mathcal S_\lambda^n]^\A \otimes [\V_\lambda^b \otimes \mathcal S_\lambda^n]^\B$ that transforms trivially when applying the same permutation to $\A$ and $\B$.
Specifically, we can write it as
\begin{equation}
	Q^{\A\B}_\lambda 
	=
	\unity_{\V^a_\lambda} \otimes
	\unity_{\V^b_\lambda} \otimes
	|\tau\rangle
	_{\mathcal S_\lambda^n\mathcal S_\lambda^n}
	\langle \tau|
\end{equation}
where $\unity_\V$ is the identity on the subspace $\V$ and 
\begin{equation}
	|\tau_\lambda\rangle _{\V\V'} 
	= 
	\sum_{k}
	|k\rangle_{\V} \otimes |k\rangle_{\V'}
\end{equation}
is the ``maximally entangled state" of the product space $\V\otimes \mathcal \V'$.
The invariance of $|\tau_\lambda\rangle _{\A\B}$ under identical permutations on $\A$ and $\B$ is analogous to the invariance of any maximally entangled state under transformations of the form $U\otimes U^*$, together with the fact that all irreps of the permutation group are real (self-dual).

In the tri-partite case $\complex^d = \complex^a \otimes \complex^b \otimes \complex^c$, the symmetric projector can be written as
\begin{equation}
	\label{eq:ABC + proj}
	P_+^{\A\B\C} 
	= 
	\sum_{\lambda, \mu, \nu}  
	Q_{\lambda, \mu, \nu} ^{\A\B\C} \ ,
\end{equation}
where $Q_{\lambda, \mu, \nu} ^{\A\B\C}$ is the orthogonal projector onto the subspace of $[\V_\lambda^a \otimes \mathcal S_\lambda^n]^\A \otimes [\V_\mu^b \otimes \mathcal S_\mu^n]^\B \otimes [\V_\nu^c \otimes \mathcal S_\nu^n]^\C$ that transforms trivially when applying the same permutation on $\A,\B,\C$.
Therefore, the projector $Q_{\lambda, \mu, \nu} ^{\A\B\C}$ is zero unless the irrep decomposition of $\mathcal S_\lambda^n \otimes \mathcal S_\mu^n \otimes \mathcal S_\nu^n$ contains the trivial $\mathcal S^n_{(n)}$.
Particularly, when one of the three partitions is $(n)$, we recover the bipartite case 
\begin{equation}
	\label{eq:Q n lambda lambda}
	Q^{\A\B\C} _{(n), \lambda, \lambda } 
	=   
	\unity_{\V^a_{(n)}} \otimes 
	\unity _{\V^b_\lambda} \otimes 
	\unity _{\V^c_\lambda} \otimes 
	|\tau\rangle ^{\B \C} 
	_{\mathcal S_\lambda^n\mathcal S_\lambda^n}\!
	\langle \tau|
	\ ,
\end{equation} 
for all $\lambda$.
That is, if one partition is $(n)$ then the other two partitions have to be equal. And this is why, in the bipartite case, the projector $Q^{\A\B}_\lambda$ only depends on one partition.

\subsection{The irreducible representations of $\SU(d)$}

\noindent
Using the Littlewood-Richardson rule \cite{Fulton91}, we can decompose $\V^d_\lambda \otimes \V^{d}_{\mu}$ into irreps, and prove the following patterns.

\begin{Lemma}\label{decomp L L*}
	The decomposition of $\V^4_{(n-1,1)} \otimes \V^{4*}_{(n-1,1)}$ into irreps does not include any $\D^4_j$ with $j\geq n$.
\end{Lemma}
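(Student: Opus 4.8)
**Proof plan for Lemma \ref{decomp L L*}.**

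The plan is to compute the decomposition of $\V^4_{(n-1,1)} \otimes \V^{4*}_{(n-1,1)}$ into $\SU(4)$-irreps explicitly enough to see that no $\D^4_j$ with $j \ge n$ can appear. The key observation is that, in the notation of \cite{Fulton91}, the irrep $\D^d_j$ is the one generated by $N_{j,n}$ as in \eqref{Def N generator}; concretely, for $\SU(4)$, $\D^4_j$ corresponds to the Young diagram with highest weight of the form ``$j$ boxes in the first row plus a column'' — more precisely it is the traceless part of $\mathrm{Sym}^j \complex^4 \otimes \mathrm{Sym}^j \complex^{4*}$, and as a partition-pair it sits at ``level $j$'' in the sense that it requires $j$ boxes and $j$ anti-boxes. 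So the real content is a bookkeeping statement: tensoring $\V_{(n-1,1)}$ with its dual produces only irreps that, when written in the standard $\SU(4)$ labelling, have ``box-number'' and ``anti-box-number'' both at most $n-1$, hence strictly less than $n$.

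First I would fix the precise meaning of ``contains $\D^4_j$'': since $\D^4_j$ appears inside $\M^4_n = \bigoplus_{i=0}^n \D^4_{i}$ (Lemma \ref{lemma: M decomp D}), and $\M^4_n \cong \mathrm{Sym}^n\complex^4 \otimes \mathrm{Sym}^n\complex^{4*}$, the irrep $\D^4_j$ is characterised by the fact that its lowest-box-number realisation uses exactly $j$ boxes of $\complex^4$ and $j$ boxes of $\complex^{4*}$. Equivalently, writing $\SU(4)$-irreps by a pair of partitions $(\alpha;\beta)$ (covariant/contravariant indices after removing traces), $\D^4_j = ((j);(j))$ up to the traceless reduction. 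Then I would apply the Littlewood–Richardson rule to $\V_{(n-1,1)} \otimes \V^*_{(n-1,1)}$: dualising, $\V^*_{(n-1,1)}$ for $\SU(4)$ is $\V_{(n-1,n-2,n-2,0)}$ after a shift, but it is cleaner to keep it as a contravariant partition $(n-1,1)$ and use the ``mixed'' Littlewood–Richardson / stable-GL rule: the covariant and contravariant partitions combine by first contracting some boxes against anti-boxes (reducing both sides by the same skew shape) and then taking a product. In every term the resulting covariant partition is obtained from $(n-1,1)$ by deleting a horizontal/vertical strip, so it has at most $n-1$ boxes; likewise the contravariant part has at most $n-1$ boxes. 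Hence every constituent has box-number $\le n-1$, so it cannot be any $\D^4_j$ with $j \ge n$.

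The step I expect to be the main obstacle is making the ``box-counting'' argument airtight for $\SU(4)$ rather than $\mathrm{GL}$: over $\SU(4)$ the determinant representation is trivial, so a column of four boxes can be deleted, which in principle could lower an apparent box-number — but it can never raise it, and $\D^4_j$ with $j\ge n$ has no realisation with fewer than $j \ge n$ boxes, so this subtlety only helps. The real care is in correctly handling the mixed tensor (covariant $\otimes$ contravariant) decomposition: I would either cite the standard rule for $\mathrm{GL}(d)$ mixed tensors and then restrict to $\SU(4)$, or — since $n$ is arbitrary but $d=4$ is fixed — argue in the stable range and note that the restriction to $\SU(4)$ only identifies irreps via determinant twists, which preserves the relevant grading. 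Once the mixed-tensor rule is in hand, the conclusion that no high-level $\D^4_j$ survives is immediate; the bulk of the work is setting up notation so that ``$\D^4_j$ requires $j$ boxes'' is a theorem rather than a slogan.
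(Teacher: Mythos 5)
Your overall strategy --- apply the Littlewood--Richardson rule to $\V^4_{(n-1,1)} \otimes \V^{4*}_{(n-1,1)}$ and exclude $\D^4_j$ for $j\geq n$ by counting boxes --- is the same as the paper's. The paper converts the dual factor to the covariant partition $\lambda^*=(n-1,n-1,n-2)$ and counts total boxes ($4(n-1)$ for every constituent versus $4j$ for $\lambda_j=(2j,j,j)$), whereas you stay in the mixed covariant/contravariant labelling; your identification of $\D^4_j$ with the pair $((j);(j))$ (Dynkin label $(j,0,j)$) is correct and is exactly what Lemma~\ref{lemma:kernel irrep} provides.

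However, your key counting step is wrong as stated. In the mixed-tensor decomposition
\begin{equation*}
\V_{(\lambda;0)}\otimes\V_{(0;\mu)}\;\cong\;\bigoplus_{\alpha,\beta}\Big(\mbox{$\sum_{\delta}$}\,c^{\lambda}_{\alpha\delta}\,c^{\mu}_{\beta\delta}\Big)\,\V_{(\alpha;\beta)}\ ,
\end{equation*}
the contracted shape $\delta$ may be empty, so the constituent $((n-1,1);(n-1,1))$ genuinely occurs (with multiplicity one, since $\ell(\alpha)+\ell(\beta)=4\leq 4$) and has covariant box-number $n$, not $n-1$. Your claim that ``every constituent has box-number $\le n-1$'' is therefore false, and the conclusion does not follow from it as written. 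The statement is still true and the repair is one line: every constituent satisfies $|\alpha|=|\beta|=n-|\delta|\le n$, which already excludes $j>n$; and $j=n$ would force $\delta=\emptyset$, hence $\alpha=(n-1,1)\neq(n)$, so $((n);(n))=\D^4_n$ cannot occur either. With that fix, plus the observation you already make (restricting from $\mathrm{GL}(4)$ to $\SU(4)$ only removes full columns and so can never raise the relevant box count), the argument is complete and equivalent to the paper's.
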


\begin{proof}
	If we denote by $\lambda^*$ the partition of the irrep $\V^{4*}_{(n-1,1)}$, and by $\lambda_j$ the partition of $\D^4_j$, then we have
	\begin{align} 
		\lambda^* &= (n-1, n-1, n-2)\ ,
		\\
		\lambda_j &= (2j,j,j)\ ,
		\\
		\lambda &= (n-1,1)\ .
	\end{align}
	Applying the Littlewood-Richardson rule \cite{Fulton91} to the Young tableaux $\lambda$ and $\lambda^*$, we see that all resulting tableaux have at most $4(n-1)$ boxes, while the tableau of $\lambda_j$ has $4j$ boxes.
	Therefore, no tableau $\lambda_j$ with $j\geq n$ can appear in the product of $\lambda$ and $\lambda^*$.
\end{proof}

\begin{Lemma}\label{su2 in su4}
	When restricting the irrep $\D_n^4$ of $\SU(4)$ to any $\SU(2)$ subgroup, the decomposition of $\D_n^4$ into irreps of $\SU(2)$ does not include any $\D^2_j$ with $j>n$.
\end{Lemma}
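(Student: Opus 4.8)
The plan is to deduce the bound from the inclusion $\D_n^4 \subseteq \M_n^4$ provided by Lemma~\ref{lemma: M decomp D} together with the identification $\M_n^4 \cong \mathrm{Sym}^n\complex^4 \otimes \mathrm{Sym}^n\complex^{4*}$ of $\SU(4)$-representations. I would first record that the $\SU(2) \subseteq \SU(4)$ at hand acts on $\complex^4 \cong \complex^2 \otimes \complex^2$ through one tensor factor, so that $\complex^4$ restricted to $\SU(2)$ is $\complex^2 \oplus \complex^2$, each summand carrying the spin-$\tfrac12$ defining representation.

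Next I would bound the $\SU(2)$-spins that can occur in $\mathrm{Sym}^n\complex^4$. Using $\mathrm{Sym}^n(V\oplus W) \cong \bigoplus_{i+j=n}\mathrm{Sym}^iV \otimes \mathrm{Sym}^jW$ and the fact that $\mathrm{Sym}^i\complex^2$ is the spin-$\tfrac i2$ irrep, the Clebsch--Gordan rule shows that $\mathrm{Sym}^n\complex^4$ restricted to $\SU(2)$ contains only spins $\leq \tfrac{i+j}{2} = \tfrac n2$. The same bound applies to $\mathrm{Sym}^n\complex^{4*}$ restricted to $\SU(2)$, since every $\SU(2)$-representation is self-dual and so the dual factor has the same irreducible content. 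Hence $\M_n^4$ restricted to $\SU(2)$, being the tensor product of these two representations, contains only spins $\leq \tfrac n2 + \tfrac n2 = n$. Since $\D_n^4$ is precisely the $\SU(4)$-subrepresentation $\D_{n,n}^4 \subseteq \M_n^4$ (the kernel of the partial-trace map of Lemma~\ref{lemma: M decomp D}), its restriction to $\SU(2)$ is a subrepresentation of $\M_n^4$ restricted to $\SU(2)$ and therefore also contains only spins $\leq n$; as $\D_j^2$ is the spin-$j$ irrep, this is exactly the claim.

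I do not expect a serious obstacle: the whole argument is short once one combines $\D_n^4 \subseteq \M_n^4$ with the symmetric-power decomposition of $\mathrm{Sym}^n(\complex^2\oplus\complex^2)$. The only steps that need a moment of care are the use of self-duality of $\SU(2)$-representations to handle the dual tensor factor, and the bookkeeping that the $\SU(2)$ in question restricts $\complex^4$ to two copies of the defining representation --- which is what pins the top spin of $\mathrm{Sym}^n\complex^4$ at $\tfrac n2$ and so yields the bound $n$ rather than something larger.
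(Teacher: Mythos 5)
Your proof is correct, and it takes a genuinely different route from the paper's. The paper argues directly on the weight diagram of $\D_n^4$: the $\SU(2)$ irreps correspond to strings of weights, the longest string in $\D_n^4$ has $2n+1$ weights, hence the top spin is $n$. You instead embed $\D_n^4$ into $\M_n^4 \cong \mathrm{Sym}^n\complex^4 \otimes \mathrm{Sym}^n\complex^{4*}$ and bound the spin content of the ambient space via $\mathrm{Sym}^n(\complex^2\oplus\complex^2) \cong \bigoplus_{i+j=n}\mathrm{Sym}^i\complex^2\otimes\mathrm{Sym}^j\complex^2$, Clebsch--Gordan, and self-duality of $\SU(2)$ irreps for the dual factor; this is more explicit and self-contained than the paper's one-line weight count. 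One remark on scope, which is a point in your favor rather than a gap: both arguments really use that the $\SU(2)$ in question restricts $\complex^4$ to a sum of spin-$\frac{1}{2}$ (or trivial) representations, as is the case for the subgroup $\unity\otimes\SU(c)\otimes\unity$ actually invoked in the main theorem. For the literal wording ``any $\SU(2)$ subgroup'' the bound can fail: under the principal $\SU(2)\subset\SU(4)$ the defining representation restricts to spin-$\frac{3}{2}$, and already for $n=1$ the adjoint $\D_1^4$ decomposes as spin-$1\oplus{}$spin-$2\oplus{}$spin-$3$, which contains $\D_3^2$. So your explicit bookkeeping of the embedding is not merely ``a moment of care'' --- it is what makes the statement true, and the paper's weight-string count implicitly assumes the same restriction.
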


\begin{proof}
	The $\SU(2)$ irreps in $\D_n^4$ correspond to straight lines of weights in the weight diagram of $\D_n^4$. The longest such line contains $2n+1$ weights. Therefore, the largest $\SU(2)$ irrep is $\D_n^2$.
\end{proof}

\subsection{Proof of the main theorem}

\noindent
The following theorem shows that only in the quantum cas (that is $n=1$) there is an associative star product $\star: \M_n^a \times \M_n^b \to \M_n^{ab}$.

\begin{Theorem}
	If $n \geq 2$ then there is no bilinear map $\star: \M_n^a \times \M_n^b \to \M_n^{ab}$ satisfying the star-product Definition \ref{def:star-product}.
\end{Theorem}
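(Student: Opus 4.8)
The plan is to derive a contradiction from assuming that an associative, structure-preserving $\star$-product exists for some $n\geq 2$. The key tension is between two facts: on the one hand, Lemma~\ref{Lemma:product measurements} forces $\star$ to reproduce the full tensor product $\M_n^a\otimes\M_n^b$ inside the symmetric-symmetric corner of $\M_n^{ab}$, which contains ``large'' irreps like $\D_n^a\otimes\D_n^b$; on the other hand, when we compose three systems and use associativity, the triple $\star$-product $F\star G\star H$ must land in $\M_n^{abc}$, and the decomposition~\eqref{eq:ABC + proj} of the symmetric projector $P_+^{\A\B\C}$ severely restricts which permutation-module sectors can appear --- in particular, via~\eqref{eq:Q n lambda lambda}, if one of the three parties carries the trivial permutation partition $(n)$ then the other two must carry \emph{equal} partitions. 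The strategy is to exhibit specific OPFs whose pairwise $\star$-products are forced by Lemma~\ref{Lemma:product measurements}, but whose associated triple product cannot consistently live in both $(F\star G)\star H$ and $F\star(G\star H)$.

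Concretely, I would work in the smallest dimensions where the obstruction bites: take $a=b=c=4$ (or reduce to $\SU(2)$ subgroups as in Lemma~\ref{su2 in su4}), and focus on the irrep $\D_n^4\subseteq\M_n^4$. Pick $F\in\D_n^a$ and $G\in\D_n^b$; by Lemma~\ref{Lemma:product measurements} the product $F\star G$ has a nonzero component in $\D_n^a\otimes\D_n^b\subseteq\M_n^{ab}$. Now form $(F\star G)\star H$ with $H\in\D_n^c$. Expanding in the basis~\eqref{eq:ABC + proj}: the $\A$-leg and $\B$-leg together came from a product that, restricted to the permutation modules, lives in the sector where $\A$ and $\B$ carry a common partition $\lambda$ arising in $\D_n^a$ --- and since $\D_n$ sits in $\V_{(n-1,1)}\otimes\V_{(n-1,1)}^*$ (Lemma~\ref{decomp L L*}), the relevant $\lambda$ for a genuine $\D_n$ contribution must be a ``maximal'' one, $\lambda=(n-1,1)$ in the $\SU(2)$-reduced picture, which forces $\A,\B$ to share that partition. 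But then associativity demands the \emph{same} element equal $F\star(G\star H)$, where instead $\B,\C$ must share a partition, and the $\A$-leg carries $(n)$ or another value. The two groupings impose incompatible partition-matching constraints on the three permutation legs: one says $(\lambda_\A,\lambda_\B,\lambda_\C)$ must have $\lambda_\A=\lambda_\B$ ``large'' with $\lambda_\C$ small, the other says $\lambda_\B=\lambda_\C$ ``large'' with $\lambda_\A$ small, and for $n\geq 2$ these cannot both hold while keeping the $\D_n$-component nonzero --- whereas for $n=1$ there is only the trivial partition $(1)$, all constraints collapse, and no contradiction arises.

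The main obstacle I expect is the bookkeeping of \emph{which} permutation-module sector the $\D_n^a\otimes\D_n^b$ component of $F\star G$ actually occupies inside $\M_n^{ab}$, and then tracking that sector through the second $\star$-multiplication. The subtlety is that $\star$ need only agree with $\otimes$ inside $P_+^\A P_+^\B\M_n^{ab}P_+^\A P_+^\B$, so outside that corner $F\star G$ could a priori have components in the ``mixed-partition'' sectors $Q_\lambda^{\A\B}$ with $\lambda\neq(n)$; I would need to argue that these extra components, upon a further $\star$ and projection onto the relevant sub-block of $\M_n^{abc}$, cannot conspire to cancel the inconsistency --- this is where Lemmas~\ref{decomp L L*} and~\ref{su2 in su4} do the real work, bounding the irreps so tightly that $\D_n^{abc}$ (needed on one side) simply has no room to appear on the other. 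The remaining steps --- checking bilinearity interacts well with the projections, and confirming the $n=1$ case genuinely escapes --- should be routine once the sector analysis is pinned down.
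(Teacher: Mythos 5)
You have correctly located the engine of the proof --- the tension between Lemma~\ref{Lemma:product measurements} (which forces the $P_+^\A P_+^\B$ corner of $\M_n^a\star\M_n^b$ to be the full $\M_n^a\otimes\M_n^b$) and the sector structure of $\M_n^{abc}$ under associativity --- and you cite the right auxiliary lemmas. But the mechanism you propose for the contradiction, ``incompatible partition-matching constraints on the three permutation legs,'' does not work as stated, and the step you defer is in fact the heart of the proof. The element $(F\star G)\star H=F\star(G\star H)$ is a single element of $\M_n^{abc}$, which generically has components in \emph{many} sectors $Q_{\lambda,\mu,\nu}^{\A\B\C}$ simultaneously; nothing prevents it from having components both in sectors with $\lambda_\A=\lambda_\B$ and in sectors with $\lambda_\B=\lambda_\C$. (The matching constraint \eqref{eq:Q n lambda lambda} only forces two partitions to coincide when the third is $(n)$; for a general triple the only requirement is that $\mathcal S^n_\lambda\otimes\mathcal S^n_\mu\otimes\mathcal S^n_\nu$ contain the trivial representation.) So no contradiction arises at the level of ``which sectors are occupied,'' which is where your sketch stops.

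The missing idea is the equivariance/Schur step, which is what actually controls the ``mixed-partition'' components you worry about. Property~\eqref{eq:commutativity} makes $\g\mapsto\u_\A\star\g$ an $\SU(b)$-equivariant linear map $\M_n^b\cong\bigoplus_{j\le n}\D_j^b\to\M_n^{ab}$, so by Schur's Lemma its image, projected onto the sector $Q^{\A\B}_{(n-1,1)}(\cdot)Q^{\A\B}_{(n-1,1)}$, can only contain irreps of type $\D_j^b$; Lemma~\ref{decomp L L*} then caps $j\le n-1$ in that sector, and Lemma~\ref{su2 in su4} pushes the cap down to the subgroup $\unity\otimes\SU(c)\otimes\unity$ after splitting $\B=\C\E$. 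The contradiction is then about the presence or absence of one specific irrep in one specific sector: $\D_n^c$ (of the \emph{middle single factor}, not ``$\D_n^{abc}$'' as you write) cannot appear in $Q^{\A\C\E}_{(n-1,1),(n),(n-1,1)}\left[\u_\A\star\M_n^c\star\u_\E\right]Q^{\A\C\E}_{(n-1,1),(n),(n-1,1)}$ by the above, yet it must appear there by applying Lemma~\ref{Lemma:product measurements} to the bipartition $\C|\A\E$ (where $\u_{\A\E}=\sum_\lambda Q_\lambda^{\A\E}$ has a nonzero $(n-1,1)$ component) together with associativity. Your choice of specific elements $F\in\D_n^a$, $G\in\D_n^b$, $H\in\D_n^c$ is also not needed once the argument is phrased representation-theoretically; and your dimensional choice $a=b=c=4$ would have to be adjusted so that the party playing the role of $\B$ is itself a composite of two qubits, which is what makes Lemmas~\ref{decomp L L*} and~\ref{su2 in su4} applicable as stated.
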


\begin{proof}
	\emph{Analysis of bipartite systems.}
	Let us consider a bipartite system with Hilbert space $\complex^a \otimes \complex^b$ and dimensions $a=2$ and $b=4$.
	Using the decomposition \eqref{AB symm proj} of the projector $P_+^{\A\B}$ onto the symmetric subspace of $( \complex^a \otimes \complex^{b} )^{\otimes n}$ we can decompose $\M_n^{ab}$ into subspaces as
	\begin{align}
		\label{eq:decomp n}
		\M_n^{ab} &= \sum_{\lambda, \mu}
		Q^{\A\B}_\lambda \M_n^{ab} Q^{\A\B}_{\mu}
		\ ,
	\end{align}
	each labeled by a pair of $n$-partitions $(\lambda, \mu)$.
	Since system $\A$ is 2-dimensional all $n$-partitions $\lambda, \mu$ have at most two rows.
	The action of $\SU(ab)$ might not be well-defined in some of these subspaces, but the action of the subgroup $\SU (a)\otimes \SU (b) \subseteq \SU(ab)$ is well-defined in each $(\lambda, \mu)$ subspace from \eqref{eq:decomp n}.
	Concretely, we have the following isomorphism of $\SU(a) \otimes \SU (b)$ representations
	\begin{equation}
		\label{eq:subs l l'}
		Q^{\A\B}_\lambda \M_n^{ab} Q^{\A\B}_{\mu}
		\cong
		\V^a_\lambda \otimes \V^{a*}_{\mu} \otimes 
		\V^b_\lambda \otimes \V^{b*}_{\mu}\ ,
	\end{equation}
	which in particular gives
	\begin{align}
		\label{eq:++projection}
		Q^{\A\B}_{+} \M_n^{ab} Q^{\A\B}_{+}
		\cong 
		\M_n^a \otimes \M_n^b\ .
	\end{align}
	
	Now, let us take the subspace \eqref{eq:subs l l'} corresponding to $\lambda =\mu =(n-1,1)$, and decompose it into two orthogonal subspaces
	\begin{align}
		\label{eq: dec of LL*}
		Q^{\A\B}_{(n-1,1)} \M_n^{ab} Q^{\A\B}_{(n-1,1)}
		= \mathcal W^b_\mathrm{licit} 
		\oplus \mathcal W^b_\mathrm{illicit}\ ,
	\end{align}
	defined in the following way: (i) consider the action of the subgroup $\unity \otimes \SU (b)$ on the left-hand side of \eqref{eq: dec of LL*}, (ii) decompose this action into irreps of $\SU (b)$, (iii) let $\mathcal W^b_\mathrm{licit}$ be the direct sum of all irreps $\D_j^{b}$ for any $j$, and (iv) let $\mathcal W^b_\mathrm{illicit}$ be the direct sum of the rest of irreps.
	The super-index $b$ in these subspaces $\mathcal W^b_\mathrm{xxlicit}$ reminds us that these are $\SU(b)$ representations.
	
	Lemma \ref{decomp L L*} tells us that $\mathcal W^b_\mathrm{licit}$ does not contain any $\D^b_j$ with $j\geq n$. That is
	\begin{align}
		\label{eq: j<n}
		\mathcal W^b_\mathrm{licit} 
		\cong 
		\bigoplus_{j < n} \D_j^{b}  \ ,
	\end{align}
	where the sum over $j$ may contain some absences and repetitions.
	Combining this with Schur's Lemma and the commutativity constraint \eqref {eq:commutativity}, we see that the image of the $\star$-product $[\M_n^a \star \M_n^b] \subseteq \M_n^{ab}$ does not have support in $\mathcal W^b_\mathrm{illicit}$.
	In particular, 
	\begin{align}
		\label{eq:inc Ds}
		Q^{\A\B}_{(n-1,1)}
		\left[\u_\A \star \M_n^b \right]
		Q^{\A\B}_{(n-1,1)}
		\subseteq 
		\mathcal W^b_\mathrm{licit} \ .
	\end{align}
	In the next subsection we show that, if $\B$ is itself considered a bipartite system then the above subspace contains the irrep $\D_n^b$, which is incompatible with \eqref{eq: j<n} and \eqref{eq:inc Ds}.
	
	\medskip
	\emph{Analysis of tripartite systems.}
	Now let us describe system $\B$ as a bipartite system $\C\E$ with Hilbert space $\complex^b = \complex^c \otimes \complex^e$ and dimensions $c=e=2$.
	Combining the decompositions of the bipartite \eqref{AB symm proj} and the tripartite \eqref {eq:ABC + proj} symmetric projectors, we can write
	\begin{align}
		\label{eq:decomp ABC}
		Q^{\A\B}_{(n-1,1)}
		=
		\sum_{\mu, \nu}
		Q^{\A\C\E}_{(n-1,1),\mu,\nu} \ .
	\end{align}
	Now, if we substitute the decomposition \eqref{eq:decomp ABC} into \eqref{eq:inc Ds} and remove all terms except for the $\mu=(n)$ and $\nu=(n-1,1)$ one, the inclusion still holds
	\begin{align}
		\label{eq:DDDDD}
		Q^{\A\C\E}_{(n-1,1),(n),(n-1,1)}
		\left[\u_\A \star \M_n^{ce} \right]
		Q^{\A\C\E}_{(n-1,1),(n),(n-1,1)}
		\subseteq 
		\mathcal W^{ce}_\mathrm{licit} .
	\end{align}
	Importantly, the projector $Q^{\A\C\E}_{(n-1,1),(n),(n-1,1)}$ is non-zero according to \eqref{eq:Q n lambda lambda}.
	
	Now, if we restrict the action of $\unity \otimes \SU(ce)$ to the subgroup $\unity \otimes \SU(c) \otimes \unity$ and use \eqref{eq: j<n} and Lemma \ref{su2 in su4}, then we see that the irrep decompositions of the right-hand sides of \eqref{eq:DDDDD} does not contain $\D_n^c$.
	Also, due to the fact that the subspace
	\begin{equation}
		\label{eq:wrong subspace}
		Q^{\A\C\E}_{(n-1,1),(n),(n-1,1)}
		\left[\u_\A \star \M_n^c \star \u_\E\right]
		Q^{\A\C\E}_{(n-1,1),(n),(n-1,1)}
	\end{equation}
	is a subrepresentation of \eqref{eq:DDDDD}, it does not contain the irrep $\D_n^c$.
	Next we show that this is incompatible with associativity.
	Using Lemma \ref{Lemma:product measurements} and recalling that 
	\begin{equation}
		\u_{\C\E} = P_+^{\C\E} 
		= 
		\sum_\lambda Q_\lambda ^{\C\E}
	\end{equation}
	we obtain the isomorphism  \begin{align*}
		Q^{\A\C\E}_{(n),(n-1,1),(n-1,1)}
		\left[ \M_n^a \star \u_{\C\E} \right] 
		Q^{\A\C\E}_{(n),(n-1,1),(n-1,1)}
		\cong 
		\M_n^a
	\end{align*}
	of $\SU(a) \otimes \unity \otimes \unity$ representations, which include the irrep $\D_n^a$.
	By permuting the subsystems $\A\C\E$ we conclude that the $\unity \otimes \SU(c) \otimes \unity$ representation \eqref{eq:wrong subspace} also contains the irrep $\D_n^c$, in contradiction with our previous conclusion!
\end{proof}

At this point we can contrast the above argument with the disregarded case $n=1$. In this cse there is only one partition $\lambda =\mu =(1)$, and
\begin{equation*}
	Q^{\A\B}_\lambda \M_1^{ab} Q^{\A\B}_{\mu}
	= \M_1^{ab}
	= \M_1^a \otimes \M_1^b
	= \mathcal W^b_\mathrm{licit} \ ,
\end{equation*}
which implies that $\mathcal W^b_\mathrm{illicit}$ is trivial.
Therefore the above contradiction does not apply to the $n=1$ case.

\begin{Corollary}[measurement theorem]
	\label{Coro:main}
	Any family of OPF sets $\F_d$ with finite $d$, equipped with a $\star$-product, and satisfying the assumptions ``possibility of state estimation" and ``closedness under system composition", has OPFs and $\star$-product of the form
	\begin{align}
		&\f(\varphi) = 
		\langle\varphi| F|\varphi\rangle\ ,   
		\\
		&(\f\star\g)(\psi) =
		\langle\psi| F\otimes G |\psi\rangle\ ,
	\end{align}
	for all normalized $\varphi\in \complex^a$ and $\psi \in \complex^a \otimes \complex^b$, where the $\complex^a$-matrix $F$ satisfies $0\leq F \leq \unity$, and analogously for $G$.
\end{Corollary}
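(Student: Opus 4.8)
The plan is to chain together the structural results of Appendices~\ref{app:OPFsingle} and~\ref{app:OPFcomp} and then let the non-associativity theorem of Appendix~\ref{app:nonassoc} collapse everything onto the quantum case. First I would invoke the ``possibility of state estimation'' assumption: by Lemma~\ref{lemma:finite k} and the Corollary following it, $\complex\F_d$ is finite-dimensional for every finite $d$, so the $\SU(d)$-action on it is an honest finite-dimensional representation. Lemma~\ref{lemma:decomp final} then gives $\complex\F_d\cong\bigoplus_{j\in\mathcal J}\D_j^d$ with $\mathcal J$ a finite set of nonnegative integers, $0\in\mathcal J$, and no repeated irreps.

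Next I would bring in ``closedness under system composition'': describing $\complex^d$ as a factor of a bipartite system and applying Lemma~\ref{F=Mn} together with the Corollary after it forces $\complex\F_d\cong\M_n^d$ for \emph{all} finite $d$ with one common $n$; comparing with Lemma~\ref{lemma: M decomp D} this just says $\mathcal J=\{0,1,\dots,n\}$. By the extension lemma of Appendix~\ref{app:OPFcomp}, the given $\star$-product lifts to a bilinear map $\M_n^a\times\M_n^b\to\M_n^{ab}$ still obeying the axioms of Definition~\ref{def:star-product}.

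Now comes the decisive input: the Theorem of Appendix~\ref{app:nonassoc} states that no such bilinear map exists when $n\ge 2$, hence $n=1$ (the degenerate case $n=0$ gives only constant OPFs, which are already of the asserted form with $F$ proportional to $\unity$). For $n=1$ the symmetric projector is the identity, so $\M_1^d$ is the full matrix algebra on $\complex^d$, and the canonical choice of $\Omega_\psi$ from the subsection ``The form $\Omega_\psi$ in $\M_n^d$ and $\D_n^d$'' (absorbing the irrelevant per-irrep proportionality factors into the identification $\complex\F_d\cong\M_1^d$) is $\Omega_\psi(F)=\tr(\proj\psi F)$. Thus every OPF is $\f(\varphi)=\Omega_\varphi(F)=\langle\varphi|F|\varphi\rangle$ for some $F\in\M_1^a$, and requiring $\f$ to take values in $[0,1]$ on all rays forces $F$ to be Hermitian (realness of $\langle\varphi|F|\varphi\rangle$ for all $\varphi$) and then $0\le F\le\unity$ (positivity and boundedness by $1$). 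Finally, Lemma~\ref{Lemma:product measurements} specialized to $n=1$ reads $\M_1^a\star\M_1^b=\M_1^a\otimes\M_1^b$, i.e.\ $\f\star\g$ is represented by $F\otimes G$, giving $(\f\star\g)(\psi)=\langle\psi|F\otimes G|\psi\rangle$.

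In this corollary essentially all of the difficulty is outsourced: the hard obstacle — excluding every $n\ge 2$ — is exactly the content of the Theorem in Appendix~\ref{app:nonassoc}, used here as a black box. What remains is bookkeeping: matching $\mathcal J$ with the irrep content of $\M_n^d$, transporting the form $\Omega_\psi$ through the isomorphism $\complex\F_d\cong\M_1^d$, and the elementary observation that $[0,1]$-valuedness of $\varphi\mapsto\langle\varphi|F|\varphi\rangle$ is equivalent to $0\le F\le\unity$.
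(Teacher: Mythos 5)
Your proposal is correct and follows exactly the chain of results the paper intends for this corollary (which it states without a written-out proof): state estimation gives finite dimensionality and the irrep decomposition, closedness under composition forces $\complex\F_d\cong\M_n^d$ with a common $n$, the extension lemma bilinearizes the $\star$-product, the non-associativity theorem kills $n\ge 2$, and the $n=1$ case reduces to $\Omega_\psi(F)=\langle\psi|F|\psi\rangle$ with $0\le F\le\unity$ and $F\star G=F\otimes G$ via Lemma~\ref{Lemma:product measurements}. Your explicit handling of the degenerate $n=0$ case and of the Hermiticity/positivity bookkeeping is a correct and slightly more careful rendering of what the paper leaves implicit.
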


\section{Countably infinite-dimensional Hilbert spaces}
\label{app:inftdim spaces}

\noindent
Since all countably infinite-dimensional Hilbert spaces are isomorphic, we denote them all by $\complex^\infty$.
The topological space of all one-dimensional subspaces of $\complex^\infty$ is denoted by $\mathrm P \complex^\infty$. 
Also, for any given subspace $S\subseteq \complex^\infty$ we denote the corresponding orthogonal projector by $\Pi_S$.

The following lemma tells us that the measurements on $\complex^\infty$ are of the quantum form \eqref{eqFormOfOPFs} if and only if they have such form when restricted to any finite-dimensional subspaces of $\complex^\infty$.

	\begin{Lemma}
		\label{LemSubspace}
		For every (not necessarily continuous) function $\f : \mathrm P \complex^\infty \to [0,1]$, the following two statements are equivalent: 
		\begin{itemize}
			\item There exists a self-adjoint operator $F$ such that $0\leq F \leq \unity$ and $\f(\psi) = \langle \psi|F|\psi\rangle$ for all normalized $\psi\in \complex^\infty$.
			\item For every finite-dimensional subspace $S\subset \complex^\infty$, there exists a self-adjoint operator $F_S$ fully supported on $S$, i.e.\ $\Pi_S F_S \Pi_S = F_S$, such that $0\leq F_S \leq \unity$ and $\f(\psi) = \langle\psi| F_S |\psi\rangle$ for all normalized $\psi\in S$. 
			
		\end{itemize}
		
	\end{Lemma}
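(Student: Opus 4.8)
\noindent
The plan is to treat the two implications separately, with essentially all the work in the second. The implication from the operator form on $\complex^\infty$ to the operator form on every finite-dimensional subspace $S$ is immediate: given $F$ with $0\le F\le\unity$, I would set $F_S := \Pi_S F\Pi_S$, which is self-adjoint, supported on $S$, satisfies $\langle v|F_S|v\rangle = \langle\Pi_S v|F|\Pi_S v\rangle\in[0,\|\Pi_S v\|^2]\subseteq[0,\|v\|^2]$ hence $0\le F_S\le\unity$, and restricts to $\f$ on normalized $\psi\in S$ because $\Pi_S\psi = \psi$.

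For the converse I would first establish a \emph{consistency} property of the given family $\{F_S\}$: whenever $S\subseteq T$ are finite-dimensional, $\Pi_S F_T\Pi_S = F_S$. This holds because $\Pi_S F_T\Pi_S$ is self-adjoint, supported on $S$, and has the same quadratic form as $F_S$ on $S$ (for $0\neq\phi\in S$, $\langle\phi|\Pi_S F_T\Pi_S|\phi\rangle = \|\phi\|^2\f(\phi/\|\phi\|) = \langle\phi|F_S|\phi\rangle$), and a self-adjoint operator is determined by its quadratic form via polarization. I would then fix a countable orthonormal basis $\{|i\rangle\}_{i\ge1}$ of $\complex^\infty$ and the increasing chain $S_k := \mathrm{span}\{|1\rangle,\dots,|k\rangle\}$; consistency makes the matrix entries $F_{ij} := \langle i|F_{S_k}|j\rangle$ (any $k\ge\max(i,j)$) well defined, and for every finitely supported $v = \sum_{i\le k}c_i|i\rangle$ one has $\sum_{i,j}\bar c_i c_j F_{ij} = \langle v|F_{S_k}|v\rangle\in[0,\|v\|^2]$. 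Thus the associated sesquilinear form is positive and bounded by $1$ (Cauchy-Schwarz) on the dense set of finite vectors, so it extends to a bounded sesquilinear form on $\complex^\infty$ represented by a unique bounded self-adjoint operator $F$ with $0\le F\le\unity$, and by construction $\Pi_{S_k}F\Pi_{S_k} = F_{S_k}$ for all $k$.

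The remaining step---showing $\f(\psi) = \langle\psi|F|\psi\rangle$ for an \emph{arbitrary} normalized $\psi\in\complex^\infty$---is the one I expect to be the real obstacle, precisely because $\f$ is not assumed continuous, so one cannot build $F$ from the basis and then merely approximate $\psi$ by vectors in $S_k$ and pass to the limit inside $\f$. The way around this is to apply the hypothesis not to $S_k$ but to $T_k := \mathrm{span}\{\psi\}+S_k$, which is still finite-dimensional and now \emph{contains} $\psi$. This yields $F_{T_k}$ with $\langle\psi|F_{T_k}|\psi\rangle = \f(\psi)$ for every $k$, while the consistency property forces $\Pi_{S_k}F_{T_k}\Pi_{S_k} = F_{S_k} = \Pi_{S_k}F\Pi_{S_k}$. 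Decomposing $\psi = \psi_k + r_k$ with $\psi_k := \Pi_{S_k}\psi$ and $\|r_k\|\to0$, and using $\|F_{T_k}\|\le1$, one finds $\langle\psi|F_{T_k}|\psi\rangle = \langle\psi_k|F_{S_k}|\psi_k\rangle = \langle\psi_k|F|\psi_k\rangle$ up to an error controlled by $\|r_k\|$; since $\psi_k\to\psi$ and $F$ is bounded, $\langle\psi_k|F|\psi_k\rangle\to\langle\psi|F|\psi\rangle$. Hence the constant value $\f(\psi)$ equals $\lim_k\langle\psi|F_{T_k}|\psi\rangle = \langle\psi|F|\psi\rangle$, which completes the argument. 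A conceivable alternative to the explicit basis construction would be a weak-$*$ compactness argument over the net of all finite-dimensional subspaces, but the basis version above looks the most economical.
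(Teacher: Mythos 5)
Your proof is correct, and its skeleton matches the paper's: compression $F_S=\Pi_S F\Pi_S$ for the easy direction, construction of $F$ from an increasing chain of subspaces for the converse, and---crucially---the same trick of applying the hypothesis to subspaces that \emph{contain} the arbitrary vector $\psi$, which is indeed the step that circumvents the lack of continuity of $\f$. The execution differs in two places worth noting. First, where you extract the consistency relation $\Pi_S F_T\Pi_S=F_S$ by polarization and then extend the resulting positive, bounded sesquilinear form from the finitely supported vectors to all of $\complex^\infty$ (Riesz representation for bounded forms), the paper builds $F$ more concretely: it fixes $\psi$, shows that the coordinates of $F^{(n)}\psi$ form Cauchy sequences, assembles the limit vector, and then verifies linearity and boundedness by hand. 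Your route is shorter and delegates the analytic work to a standard theorem; the paper's is self-contained. Second, for the final step the paper proves a \emph{uniqueness} statement (any two bounded operators reproducing $\f$ on a dense union of subspaces coincide, using polarization and complexity of the Hilbert space) and then identifies the operator built from the chain $T_1={\rm span}\{\zeta\},\ T_{n+1}={\rm span}(S_n\cup\{\zeta\})$ with $F$; you instead run a direct $\varepsilon$-estimate, writing $\psi=\Pi_{S_k}\psi+r_k$ and using $\|F_{T_k}\|\le 1$ together with $\Pi_{S_k}F_{T_k}\Pi_{S_k}=F_{S_k}=\Pi_{S_k}F\Pi_{S_k}$ to show the constant sequence $\f(\psi)=\langle\psi|F_{T_k}|\psi\rangle$ converges to $\langle\psi|F|\psi\rangle$. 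Both are valid; your version avoids proving uniqueness altogether, while the paper's uniqueness claim is reused implicitly when it identifies $F=G$. One small point to make explicit if you write this up: the polarization identity used in your consistency lemma requires the Hilbert space to be complex, which the paper also notes when it invokes $\langle\psi|\Delta|\psi\rangle=0\Rightarrow\Delta=0$.
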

	
	\begin{proof}
		Suppose the first statement, $\f(\psi) = \langle\psi| F |\psi\rangle$. Then, for every finite-dimensional subspace $S$, define $F_S = \Pi_S F \Pi_S$. 
		Now, it is clear that for all normalized $\psi\in S$, we have 
		\begin{equation}
			\langle\psi| F_S |\psi\rangle
			= 
			\langle\psi| F |\psi\rangle
			= \f(\psi)
		\end{equation}
		which is the second statement of the lemma.

		Conversely, suppose that for every finite-dimensional subspace $S \subset \complex^\infty$ there exists $F_S$ satisfying $\Pi_S F_S \Pi_S = F_S$ and $\f(\psi) = \langle\psi| F_S |\psi\rangle$ for all normalized $\psi\in S$. First we prove the following intermediate claim: \emph{Let $(S^{(n)})_{n\in\mathbb{N}}$ be any sequence of subspaces with $\dim S^{(n)}=n$ and $S^{(n)}\subset S^{(n+1)}$ such that for $S:=\bigcup_{n\in\mathbb{N}} S^{(n)}$ we get the norm closure $\bar S=\complex^\infty$. Then there exists a unique bounded operator $F$ on $\complex^\infty$ such that $\f(\psi)=\langle\psi|F|\psi\rangle$ for all normalized states $\psi\in S$.}
		
		To prove this, note that the sequence of subspaces defines a unique orthonormal basis $\{|i\rangle\} _{i\in\mathbb{N}}$ of $\complex^\infty$ such that  $S^{(n)} = {\rm span}\{|1\rangle,|2\rangle,\ldots,|n\rangle\}$ (this follows e.g.\ from Gram-Schmidt orthogonalization). Define the projector $\Pi^{(n)} = \sum_{i=1}^n \proj i$ onto $S^{(n)}$, and define the self-adjoint operator $F^{(n)} = F_{S^{(n)}}$ whose existence we have assumed as a premise. It satisfies $\f(\psi)=\langle\psi|F^{(n)}|\psi\rangle$ for all normalized $\psi\in S^{(n)}$ and $\Pi^{(n)}F^{(n)}\Pi^{(n)}=F^{(n)}$ as well as $0\leq F^{(n)}\leq\unity$.
		
		Now, fix any vector $\psi \in \complex^\infty$ and define the family $\varphi^{(n)} = F^{(n)} \psi \in S^{(n)}$. Let $m \leq n$, and note that every $\alpha\in S^{(m)}\subset S^{(n)}$ satisfies $\langle\alpha| F^{(m)} |\alpha\rangle = \f(\alpha) = \langle\alpha| F^{(n)} |\alpha\rangle$, and thus, by polarization, we also have that $\langle\alpha| F^{(m)} |\beta\rangle = \langle\alpha| F^{(n)} |\beta\rangle$ for all $\alpha, \beta\in S^{(m)}$.
		
		Now, define the sequences of complex numbers $x^{(n)}_j = \langle j|\varphi^{(n)}\rangle$ and $y_j = \langle j|\psi\rangle$. 
		For any $j\leq m$ we have
		\begin{align}
			\nonumber
			\left| x_j^{(n)}-x_j^{(m)} \right|^2
			&=  
			\left| \langle j|\varphi^{(n)}\rangle - \langle j|\varphi^{(m)}\rangle\right|^2 \\ \nonumber
			& =
			\left| \langle j| F^{(n)} |\psi\rangle - \langle j| F^{(m)} |\psi\rangle\right|^2   \\ \nonumber
& =
			\left| \langle j| F^{(n)}
			\mbox{$\sum_{i=m+1}^n$}\, y_i |i\rangle\right|^2 
			\\ \nonumber &\leq 
			\left\| \mbox{$\sum_{i=m+1}^n$}\, 
			y_i|i\rangle\right\|^2 
			=
			\mbox{$\sum_{i=m+1}^n$}\, |y_i|^2 
			\\ & \leq 
			\sum_{i=m+1}^\infty |y_i|^2 
			\ \xrightarrow{m\to\infty}\ 0
			\ .
		\end{align}
		Hence, for every $j$, the sequence $(x_j^{(n)})_{n\in\mathbb{N}}$ is a Cauchy sequence, which has some limit $x_j = \lim_{n\to\infty} x_j^{(n)}$.
		
		For all $N\in\mathbb{N}$ we have $\sum_{j=1}^N |x_j^{(n)}|^2 \leq \|\varphi^{(n)}\|^2 \leq \|\psi\|^2$, and thus $\sum_{j=1}^N |x_j|^2 = \lim_{n\to\infty} \sum_{j=1}^N |x_j^{(n)}|^2 \leq \|\psi\|^2$. 
		This implies that the object $\varphi = \sum_{j=1}^\infty x_j |j\rangle$ has finite norm $\|\varphi\|^2 = \sum_{j=1}^\infty |x_j|^2 \leq \|\psi\|^2$, and it is therefore a vector $\varphi \in \complex^\infty$.
		
		The above construction produces one output vector $\varphi\in \complex^\infty$ for each input vector $\psi \in \complex^\infty$. 
		This defines a map $F: \complex^\infty \to \complex^\infty$ via $F(\psi) = \varphi$. 
		Moreover, it is easy to check that $F(\lambda\psi) = \lambda F(\psi)$ for any $\lambda \in \complex$, and $F(\psi+\psi') = F(\psi) + F(\psi')$ for any $\psi, \psi' \in \complex^\infty$. Hence $F$ is a linear operator. Since $\|F(\psi)\|\leq \|\psi\|$ the operator $F$ is bounded and hence continuous.
		
		Suppose $\psi\in S$, then there exists some $n\in\mathbb{N}$ such that $\psi\in S^{(n)}$. By construction of $F$, for all $j\in\mathbb{N}$, we have
\begin{align*}
\langle j|F|\psi\rangle &  = \langle j|\varphi\rangle =x_j =\lim_{k\to\infty} x_j^{(k)} \\
			&  = \lim_{k\to\infty}\langle j|\varphi^{(k)}\rangle
			= \lim_{k\to\infty} \langle j|F^{(k)}|\psi\rangle.
\end{align*}
		In particular, if $1\leq j \leq n$, then $|j\rangle,\psi\in S^{(n)}$, and so $\langle j|F^{(k)}|\psi\rangle = \langle j|F^{(n)}|\psi\rangle$ for all $k\geq n$, thus $\langle j|F|\psi\rangle=\langle j|F^{(n)}|\psi\rangle$. We thus obtain
\begin{align*}
			\f(\psi) &= \langle \psi|F^{(n)}|\psi\rangle = \sum_{j=1}^n \bar y_j \langle j|F^{(n)}|\psi\rangle \\
			&=\sum_{j=1}^n \bar y_j \langle j|F|\psi\rangle =\langle \psi|F|\psi\rangle.
\end{align*}
		This proves \emph{existence} in our intermediate claim, now we would like to prove \emph{uniqueness}. To this end, suppose that both $F$ and $G$ are bounded operators such that $\f(\psi)=\langle \psi|F|\psi\rangle=\langle \psi|G|\psi\rangle$ for all normalized $\psi\in S$. Then the bounded operator $\Delta:=F-G$ satisfies $\langle\psi|\Delta|\psi\rangle=0$ for all $\psi\in S$. Since every vector in $\mathbb{C}^\infty$ can be approximated in norm to arbitary accuracy by elements in $S$, and since $\Delta$ is continuous, this shows that $\langle\psi|\Delta|\psi\rangle=0$ for all $\psi\in\complex^\infty$, and thus $\Delta=0$ since $\Delta$ is bounded and the Hilbert space is complex~\cite{Godlberg_on_1982}.
		
		This proves our intermediate claim. Since $\f(\psi)\in [0,1]$ for all normalized $\psi\in S$, and all normalized vectors in $\complex^\infty$ can be approximated in norm by normalized vectors in $S$, we have $\inf_\psi \langle\psi|F|\psi\rangle\geq 0$ and $\sup_\psi \langle \psi|F|\psi\rangle\leq 1$, where infimum and supremum are over all normalized vectors in $\complex^\infty$. Thus, $0\leq F\leq \unity$, and hence $F$ is self-adjoint.
		
		Let $\zeta\in\complex^\infty$ be an arbitrary normalized vector. If $\zeta\in S$ then, by construction, $\f(\zeta)=\langle\zeta|F|\zeta\rangle$. Now we want to show that this equation is also true if $\zeta\not\in S$. In this case, define the sequence of subspaces $T_1:={\rm span}\{\zeta\}$ and $T_{n+1}:={\rm span}\left(S_n\cup \{\zeta\}\right)$ for all $n\in\mathbb{N}$. Clearly $\dim T_n=n$ and $\bar T=\complex^\infty$ for $T=\bigcup_{n\in\mathbb{N}} T_n$. Thus, according to our intermediate claim, there is a bounded operator $G$ such that $\f(\psi)=\langle \psi|G|\psi\rangle$ for all normalized $\psi\in T$; in particular, $\f(\zeta)=\langle\zeta|G|\zeta\rangle$. But since $S\subset T$, we also have $\f(\psi)=\langle \psi|G|\psi\rangle$ for all $\psi\in S$. But, according to our intermediate statement, $F$ is the unique bounded operator satisfying this equation, hence $F=G$.
		
		As a side remark, note that the operator sequence $F^{(n)}$ does not in general converge to $F$ in operator norm.
	\end{proof}

\begin{Theorem}\label{theo:maininfty}
	Suppose that for each finite $d$ all OPFs $\f \in \F_d$ are of the form \eqref{eqFormOfOPFs}.
	Then the ``closedness under system composition" assumption (Definition~\ref{def:closedness}) implies that all OPFs $\f \in \F_\infty$ are also of the form \begin{equation}
		\f(\psi) = 
		\langle\psi| F|\psi\rangle\ ,
		\label{eqFormOfOPF}
	\end{equation}
	where the $\complex^\infty$-operator $F$ satisfies $0\leq F \leq \unity$.
\end{Theorem}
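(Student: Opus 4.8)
The plan is to reduce the infinite-dimensional claim to the finite-dimensional hypothesis by means of Lemma~\ref{LemSubspace}, which already carries out all the analytic work. By that lemma it suffices to show that for \emph{every} finite-dimensional subspace $S\subset\complex^\infty$ there is a self-adjoint operator $F_S$ with $\Pi_S F_S\Pi_S=F_S$, $0\leq F_S\leq\unity$, and $\f(\psi)=\langle\psi|F_S|\psi\rangle$ for all normalized $\psi\in S$. So I fix $\f\in\F_\infty$ and such an $S$, and set $d:=\dim S$.

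The first step is to realize $S$ as a ``tensor slice''. Pick an orthonormal basis $|e_1\rangle,\ldots,|e_d\rangle$ of $S$, a unit vector $|\beta_0\rangle\in\complex^\infty$, and an orthonormal basis $\{|i\rangle\}_{i=1}^d$ of $\complex^d$. Since the orthogonal complements of $S\subset\complex^\infty$ and of $\mathrm{span}\{|i\rangle\otimes|\beta_0\rangle\}_{i=1}^d\subset\complex^d\otimes\complex^\infty$ are both countably-infinite-dimensional, there is a unitary $W:\complex^\infty\to\complex^d\otimes\complex^\infty$ with $W|e_i\rangle=|i\rangle\otimes|\beta_0\rangle$ for all $i$ (extend arbitrarily on the complements). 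Because the OPF set of a countably-infinite-dimensional Hilbert space is closed under $\U(\infty)$ and is insensitive to which concrete presentation of the space one uses, $\f\circ W^{-1}$ is an element of $\F_{d\infty}$, the OPF set of $\complex^d\otimes\complex^\infty$. Now I apply ``closedness under system composition'' (Definition~\ref{def:closedness}) with $a=d$, $b=\infty$, background state $\beta_0$: the function $\g:\mathrm P\complex^d\to[0,1]$ defined by $\g(\alpha)=(\f\circ W^{-1})(\alpha\otimes\beta_0)$ belongs to $\F_d$.

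Since $d$ is finite, the hypothesis of the theorem provides a self-adjoint $G$ on $\complex^d$ with $\g(\alpha)=\langle\alpha|G|\alpha\rangle$, and $0\leq G\leq\unity$ because $\g$ takes values in $[0,1]$. I transport $G$ back by setting $F_S:=\sum_{i,j=1}^d\langle i|G|j\rangle\,|e_i\rangle\langle e_j|$. Then $\Pi_S F_S\Pi_S=F_S$ by construction; $F_S$ is self-adjoint with $0\leq F_S\leq\unity$, since for every $\chi\in\complex^\infty$ one has $\langle\chi|F_S|\chi\rangle=\langle\tilde\chi|G|\tilde\chi\rangle$ with $|\tilde\chi\rangle:=\sum_i\langle e_i|\chi\rangle|i\rangle$ obeying $\|\tilde\chi\|\leq\|\chi\|$; and for normalized $\psi\in S$ we have $W\psi=|\tilde\psi\rangle\otimes|\beta_0\rangle$, hence $\f(\psi)=\g(\tilde\psi)=\langle\tilde\psi|G|\tilde\psi\rangle=\langle\psi|F_S|\psi\rangle$. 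This is exactly the second statement of Lemma~\ref{LemSubspace}, which then yields the first: there is a single self-adjoint $F$ with $0\leq F\leq\unity$ and $\f(\psi)=\langle\psi|F|\psi\rangle$ for all normalized $\psi\in\complex^\infty$.

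All the genuine analytic difficulties --- Cauchy convergence of the finite-rank approximants, norm density of the nested subspaces, uniqueness of the limiting operator --- are already packaged inside Lemma~\ref{LemSubspace}, so the residual work is essentially bookkeeping. The one point that needs care, and which I expect to be the main (though mild) obstacle, is justifying that $\f\circ W^{-1}$ really is an OPF of $\complex^d\otimes\complex^\infty$: this is precisely where one must invoke that the framework treats $\complex^d\otimes\complex^\infty$ and $\complex^\infty$ as the same system and that OPF sets are closed under the full unitary group $\U(\infty)$.
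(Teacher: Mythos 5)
Your proposal is correct and follows essentially the same route as the paper's own proof: identify $S$ with a tensor slice of $\complex^d\otimes\complex^\infty$ via a unitary (the paper uses an isometry $X$ in the reverse direction), pull $\f$ back to an OPF of $\F_d$ using closedness under composition with unitaries and under system composition, apply the finite-dimensional hypothesis to get $F_S$, and invoke Lemma~\ref{LemSubspace}. The details, including the verification of $0\leq F_S\leq\unity$, match the paper's argument.
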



\begin{proof}
	Let us fix  a finite-dimensional subspace $S\subset \complex^\infty$.
	Denote the dimension of $S$ by $d$. 
	Let us fix an orthonormal basis $\psi_1, \ldots, \psi_d$ of $S$, an orthonormal basis $\alpha_1, \ldots, \alpha_d$ of $\complex^d$, and a normalized vector $\beta \in \complex^\infty$. 
	
	The Hilbert spaces $\complex^\infty$ and $\complex^d \otimes \complex^\infty$ are isomorphic in a very non-unique way; so let $X: \complex^d \otimes \complex^\infty \to \complex^\infty$ be an isometry such that
	\begin{equation}
		X (\alpha_i \otimes \beta)
		= \psi_i \ ,
	\end{equation}
	for all $i=1, \ldots, d$ (this does not determine $X$ uniquely; we will pick any such $X$ arbitrarily).
	Hence, for any vector $\psi \in S$ there is $\alpha \in \complex^d$ such that $\psi = X (\alpha \otimes \beta)$.
	And for any OPF $\f$ of $\complex^\infty$,the OPF $G:=f\circ X$ must be well-defined, since $\mathcal{F}_\infty$ is closed under composition with unitaries. In particular,
	\begin{equation}
		\label{eq:f to g}
		\f(\psi) 
		= \f(X(\alpha \otimes \beta))
		= \g(\alpha \otimes \beta) \qquad\mbox{for all }\psi\in S.
	\end{equation}
	Note that due to the mentioned isomorphism both, $\f$ and $\g$, belong to $\F_\infty$.
	
	At this point we invoke ``closedness under system composition" (Definition~\ref{def:closedness}).
	This tells us that for any OPF $\g \in \F_\infty$ of $\complex^d \otimes \complex^\infty$ there is $\mathbf h \in \F_d$ such that
	\begin{equation}
		\label{eq:h to g}
		{\bf h} (\alpha) 
		= \g (\alpha \otimes \beta)
		\ ,
	\end{equation}
	for all $\alpha \in \mathrm P \complex^d$.
	This together with Corollary \ref{Coro:main} implies that there is a $\complex^d$-matrix $H$ such that $0\leq H\leq \unity$ and $\mathbf h (\alpha) = \langle \alpha| H|\alpha \rangle$.
	
	Next we decompose $H$ in the chosen orthonormal basis of $\complex^d$, obtaining $H = \sum_{i,j=1}^d h_{ij} |\alpha_i\rangle\! \langle \alpha_j|$. 
	Also, we use the coefficients $h_{ij}$ to define the $(\complex^d \otimes \complex^\infty)$-operator $F_S = \sum_{i,j=1}^d h_{ij} |\psi_i\rangle\! \langle \psi_j|$, which is supported on the subspace $S$ and satisfies $0\leq F_S \leq \unity$. 
	
	Finally, for any given normalized $\psi \in S$, we decompose it in the chosen $S$-basis $\psi = \sum_{i=1}^d x_i\, \psi_i$; it follows that $\alpha = \sum_{i=1}^d x_i\, \alpha_i \in \complex^d$. 
	Combining this with \eqref{eq:f to g} and \eqref{eq:h to g} we obtain
	\begin{align}
		\f(\psi)
		&=
		g(\alpha\otimes\beta)
		= \mathbf h(\alpha)
		= \langle\alpha| H |\alpha\rangle  \nonumber \\ 
		& =
		\sum_{ik} \bar x_i h_{ij} x_j
		= 
		\langle\psi| F_S |\psi\rangle\ .
	\end{align}
	In summary, for any given finite-dimensional subspace $S$, we have constructed a $\complex^\infty$-operator $F_S$ satisfying the premises of Lemma~\ref{LemSubspace}.
	This gives us the conclusion of Theorem~\ref{theo:maininfty}.
\end{proof}

\section{The post-measurement state-update Rule}
\label{app:post-measurement}

\noindent
Until now we have been concerned with the outcome probabilities of quantum measurements.
In this section, we characterize the transformation that the quantum state undergoes during the measurement process.

\begin{Lemma}[quantum post-measurement state-update rule]
	\label{lem:post}
	The only post-measurement state-update rule compatible with the quantum probability assignment (\ref{eqFormOfOPFs}-\ref{eqFormOfstar}) is such that each measurement outcome is represented by a completely-positive linear map $\Lambda$. The probability of this outcome is given by
	\begin{equation}
		\label{eq:Lambda prob}
		P(\Lambda|\psi)
		=
		\tr \Lambda (\proj\psi)
		\ ,
	\end{equation}
	and the post-measurement state after outcome $\Lambda$ is
	\begin{equation}
		\label{eq:Lambda sigma}
		\rho
		=
		\frac {\Lambda (\proj\psi)}
		{\tr \Lambda (\proj\psi)}\ .
	\end{equation}
\end{Lemma}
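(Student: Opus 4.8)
The plan is to run the Davies--Lewis argument \cite{DaviesLewis} inside the present operational framework, deriving linearity and complete positivity rather than assuming them, using only that ``which ensemble prepares a given mixed state'' and ``whether one describes the system alone or as part of a larger system'' are subjective choices that must not affect predictions. Throughout I would invoke the measurement theorem (Corollary~\ref{Coro:main}): OPFs and the $\star$-product are of the quantum form (\ref{eqFormOfOPFs}-\ref{eqFormOfstar}), each outcome carries an operator $Q$ with $0\leq Q\leq\unity$ whose probability on $\psi$ is $\langle\psi|Q|\psi\rangle$, and by Section~\ref{SubsecMixed} mixed states are exactly density matrices. First I would fix what a state-update rule can mean operationally: a prescription assigning to each outcome, and each input pure state $\psi$, the (in general mixed) state $\sigma(\psi)$ in which the system is left \emph{conditioned on that outcome}. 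Writing $p(\psi)=\langle\psi|Q|\psi\rangle$ I would package this as the sub-normalized matrix $\Lambda(\proj\psi):=p(\psi)\,\sigma(\psi)$, with $\tr\Lambda(\proj\psi)=p(\psi)$; by the laws of probability, post-selecting the outcome after preparing an ensemble $(\psi_r,p_r)$ produces the ensemble $\bigl(\sigma(\psi_r),\,p_r p(\psi_r)/P\bigr)$ with $P=\sum_r p_r p(\psi_r)=\tr(Q\rho)$ and $\rho=\sum_r p_r\proj{\psi_r}$, whose mixed state is $\bigl(\sum_r p_r\Lambda(\proj{\psi_r})\bigr)/P$.

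Next I would establish linearity. Since two ensembles with the same $\rho$ are indistinguishable, they must yield the same (i.e.\ equal-density-matrix) post-measurement state; as $P=\tr(Q\rho)$ already depends only on $\rho$, this forces $\sum_r p_r\Lambda(\proj{\psi_r})$ to depend only on $\rho$. Hence $\Lambda$ is well defined on density matrices and respects their affine combinations, so, via the same extension-by-linearity argument used in the earlier appendices, it extends uniquely to a linear map on the operator space of $\complex^d$ with $\Lambda(\proj\psi)$ the original sub-normalized post-state. It is positive (it sends density matrices to sub-normalized density matrices), satisfies $\tr\Lambda(\rho)=\tr(Q\rho)$ for all $\rho$, which is \eqref{eq:Lambda prob}, and renormalizing gives \eqref{eq:Lambda sigma}; and for a full measurement with outcome maps $\Lambda_1,\dots,\Lambda_n$ one gets $\tr\sum_i\Lambda_i(\rho)=\sum_i\tr(Q_i\rho)=\tr\rho$ from $\sum_i Q_i=\unity$, so $\sum_i\Lambda_i$ is trace-preserving.

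To obtain complete positivity I would bring in composition. Embed $\complex^d$ into $\complex^d\otimes\complex^b$ for an arbitrary finite $b$ (and $b=\infty$), and perform the same measurement on the $\complex^d$ factor; by (\ref{eqFormOfstar}) applied to the unit OPF of $\complex^b$ its probabilities are $\tr\bigl[(Q\otimes\unity)\rho_{\A\B}\bigr]$, and on an uncorrelated input $\rho_\A\otimes\sigma_\B$ the procedure touches only $\A$, leaving $\B$ in the unchanged, uncorrelated state $\sigma_\B$, so the sub-normalized joint post-state is $\Lambda(\rho_\A)\otimes\sigma_\B$. Since product density matrices span the operator space of $\complex^d\otimes\complex^b$, whose mixed states are again density matrices, repeating the previous step inside $\complex^d\otimes\complex^b$ forces the composite instrument to be the linear extension $\Lambda\otimes\mathrm{id}_b$, which must be positive because its outputs are sub-normalized density matrices on $\complex^{db}$. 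As this holds for every $b$, $\Lambda$ is completely positive, which together with the previous paragraph yields all assertions of Lemma~\ref{lem:post}.

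The main obstacle will be the two ``soft'' steps. First, promoting ``well defined and affine on the set of density matrices'' to a genuine linear map on the operator space with the claimed positivity and trace properties is conceptually routine but must be carried out carefully, reusing the $\mathbb R$-linear extension machinery already developed in this text. Second, and more delicate, is justifying that the instrument induced on $\complex^d\otimes\complex^b$ really equals $\Lambda\otimes\mathrm{id}_b$: one must argue, from the $\star$-product structure of the measurement theorem and the no-signalling it encodes, that an initially uncorrelated ancilla $\B$ is genuinely left untouched by a local measurement on $\A$ (steering would appear only for correlated inputs), so that the product-state formula above is legitimate, before invoking the linearity step inside the composite system. Extending complete positivity to $b=\infty$ is then immediate, since it already holds for every finite $b$.
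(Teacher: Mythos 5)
Your proposal is correct and follows essentially the same route as the paper's proof: package the sub-normalized post-state as $\Lambda(\proj\psi)$, derive linearity in $\rho$ from the fact that the statistics of any follow-up measurement must be determined by the input density matrix alone (the paper makes this explicit by noting that the sequential procedure is itself a valid measurement with some POVM $\{H_{j,i}\}$ and equating $\tr[H_{j,i}\rho]=\tr[G_j\Lambda_{F_i}(\rho)]$, which is precisely the justification your ``indistinguishable ensembles give equal post-states'' step needs), and obtain complete positivity by demanding that $\Lambda\otimes\mathcal I$ remain a valid outcome map when the system is viewed as part of $\complex^d\otimes\complex^b$. The two ``soft'' steps you flag are treated in the paper exactly as you anticipate, so there is no gap.
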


In this statement each outcome is characterized by a map $\Lambda$, while in Corollary \ref{Coro:main} each outcome is characterized by s POVM elements $F$.
This two mathematical descriptions of an outcome are connected via
\begin{equation}
	\tr \Lambda (\proj\psi)
	=
	\langle\psi |F| \psi\rangle\ ,
\end{equation}
for all $\psi$.

The {remainder} of this section constitutes the proof of Lemma \ref{lem:post}. 
While the mathematics of this proof is certainly not new, we give the details in terms of the context and formalism of this paper.

\begin{proof}
	Corollary~\ref{Coro:main} states that any measurement has OPFs $\{\f_i \}$ of the form $\f_i (\psi) = \tr[F_i \proj\psi]$, where $\{F_i\}$ are positive operators  satisfying $\sum_i F_i =\unity$, that is, a POVM.
	This implies that all the statistical information of any ensemble $(\psi_r, p_r)$ is given by the corresponding density matrix $\rho = \sum_r p_r \proj{\psi_r}$.
	The associated linear form $\Omega_\rho :\complex\F_d \to \complex$ is given by $\Omega_\rho (\f_i) = \tr(\rho F_i)$, relating the usual density matrix formalism to the general formalism of this paper.
	
	At this point we still have not said anything about the post-measurement state update rule.
	But whatever this rule is, let $\sigma (F_i,\rho)$ be the post-measurement state (that is, its density matrix) after outcome $F_i$, when the initial state is $\rho$.
	And define the map $\Lambda_{F_i}$ which takes the original state $\rho$ to the post-measurement state times its corresponding probability:
	\begin{equation}
		\Lambda_{F_i} (\rho) := 
		\sigma (F_i,\rho)\, \tr[F_i \rho]\ .
	\end{equation}  
	
	Next, consider another given measurement with POVM $\{G_j\}$, and define the POVM $\{H_{j,i}\}$ to be that corresponding to the successive implementation of the measurements $\{F_i\}$ and $\{G_j\}$. 
	(This must correspond to a valid measurement, because the whole point of talking about a post-measurement state is that one can make further measurements on it.) 
	Then, using the rules of probability calculus and the above formulas we obtain
	\begin{eqnarray*}
		\tr[H_{j,i} \rho]
		&=&
		P(j,i)
		=
		P(j|i)P(i)=
		\tr[G_j \sigma(F_i,\rho)]\tr[F_i\rho]\\
		&=&
		\tr[G_j \Lambda_{F_i} (\rho)] \ ,
	\end{eqnarray*}
	for all $i,j$ and $\rho$.
	This equation implies that the map $\Lambda_{F_i} (\rho)$ is linear in $\rho$.
	
	By definition, the map $\Lambda_{F_i}$ takes every valid density matrix to a non-negative multiple of another valid density matrix, hence, the map $\Lambda_{F_i}$ is positive and trace-non-increasing. 
	To recover formulas \eqref{eq:Lambda prob} and \eqref{eq:Lambda sigma} we use the fact that $\tr\, \sigma (F_i, \rho) =1$, which gives $\tr \Lambda_{F_i} (\rho) = \tr[F_i\rho] =P(F_i|\rho)$.
	In summary, the probability of an outcome is the trace of the unnormalized post-measurement state given by the map $\Lambda$ associated to the outcome $F$ under consideration.
	This allows to fully characterize an outcome with the corresponding map $\Lambda$, with no reference to a POVM element $F$.
	
	Finally, we show that each outcome map $\Lambda$ is not just positive, but completely positive.
	As argued in the main text, we use the fact that one can always regard a system $\complex^d$ as part of a larger system $\complex^d \otimes \complex^b$. Then, the outcome map $\Lambda$ must remain a valid outcome map when extended to the larger system $\Lambda \otimes \mathcal I$, where $\mathcal I$ is the identity map on the Hermitian operators acting on $\complex^b$. 
	This is the definition of complete positivity.
\end{proof}

\section{Technical result}
\label{app:Technical results}

\begin{Lemma}
	\label{lemma:kernel irrep}
	The kernel of the partial-trace map 
	\begin{align}
		\label{contraction map I 2}
		\tr_n : \M_n^d &\to \M_{n-1}^d\ ,
		\\ 
		M &\mapsto \tr_n M\ ,
	\end{align}
	is the $\SU(d)$ irrep with Dynkin diagram 
	\begin{itemize}
		\item $(2n)$ if $d=2$ (also known as spin=$n$),
		
		\item $(n, \underbrace{0,\ldots,0}_{d-3} ,n)$ if $d\geq 3$.
	\end{itemize}
	We denote these family if irreps by $\D_n^d$.
\end{Lemma}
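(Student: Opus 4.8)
\noindent\emph{Proof strategy.} The plan is to decompose $\M_n^d$ into $\SU(d)$-irreducibles by an elementary branching computation, and then trap $\ker\tr_n$ between a lower bound coming from an explicit element and an upper bound coming from dimension counting. The argument is self-contained; in particular it does not use Lemma~\ref{lemma: M decomp D}, whose proof appeals to the present statement.

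First, I would use the identification of $\SU(d)$-representations $\M_n^d\cong \mathrm{Sym}^n\complex^d\otimes\mathrm{Sym}^n\complex^{d*}$ --- a matrix supported on the symmetric subspace is an endomorphism of $\mathrm{Sym}^n\complex^d$ on which $\SU(d)$ acts by conjugation. Since $\mathrm{Sym}^n\complex^d=\V_{(n)}^d$ corresponds to a single-row Young diagram, Pieri's rule \cite{Fulton91} yields the multiplicity-free decomposition
\begin{equation*}
  \M_n^d\cong\bigoplus_{j=0}^n\D_j^d\ ,
\end{equation*}
where $\D_j^d$ is the irrep with Dynkin label $(2j)$ if $d=2$ and $(j,0,\ldots,0,j)$ if $d\geq 3$. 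The essential feature is that $\D_j^d$ depends on $j$ and $d$ only, never on $n$. Explicitly: for $d=2$ one has $s_{(n)}\cdot s_{(n)}=\sum_{j=0}^n s_{(n+j,\,n-j)}$ with $(n+j,n-j)$ the $\SU(2)$-diagram of $(2j)$; for $d\geq 3$ the $\SU(d)$-dual of $(n)$ is the diagram $(n^{d-1})$, and adding a horizontal $n$-strip to it produces exactly the partitions $(n+j,\,n^{d-2},\,n-j)$, $0\leq j\leq n$, whose Dynkin labels are $(j,0,\ldots,0,j)$. For $j=n$ this is precisely the irrep asserted in the lemma.

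Second, the partial trace $\tr_n:\M_n^d\to\M_{n-1}^d$ is $\SU(d)$-equivariant by \eqref{I commutativity}, so $\ker\tr_n$ is a subrepresentation; since $\M_n^d$ is multiplicity-free, $\ker\tr_n$ is the direct sum of some subset of the $\D_j^d$. Now take the element $N_{n,n}=P_+(|0\rangle\!\langle 1|^{\otimes n})P_+$ of \eqref{Def N generator}, choosing $|0\rangle$ and $|1\rangle$ to be the first and the last standard basis vectors of $\complex^d$. It is nonzero, it lies in $\ker\tr_n$ by \eqref{eq: Nnn in kernel} (tracing out the $n$-th factor produces the scalar $\langle 1|0\rangle=0$), and it is a weight vector of weight $n(\epsilon_1-\epsilon_d)$, which is the highest weight of $\D_n^d$ and is strictly larger (in dominance order) than the highest weight of every other summand $\D_j^d$ with $j<n$. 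Hence this weight occurs in $\M_n^d$ only inside $\D_n^d$, so $N_{n,n}$ has nonzero component there, and therefore $\D_n^d$ is one of the summands of $\ker\tr_n$. In particular $\D_n^d\subseteq\ker\tr_n$.

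Third, a dimension count closes the argument. Because the summands in step one are $n$-independent, $\dim\M_n^d-\dim\M_{n-1}^d=\dim\D_n^d$. On the other hand $\dim\ker\tr_n=\dim\M_n^d-\mathrm{rank}\,\tr_n\geq\dim\M_n^d-\dim\M_{n-1}^d=\dim\D_n^d$. Together with $\D_n^d\subseteq\ker\tr_n$ and finiteness of dimensions, this forces $\ker\tr_n=\D_n^d$ (and shows, as a by-product, that $\tr_n$ is surjective, which is what the proof of Lemma~\ref{lemma: M decomp D} relies on). For $d=2$ the Dynkin label $(2n)$ is the spin-$n$ representation, so both cases of the lemma are recovered. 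I do not anticipate a genuine obstacle: the only delicate points are carrying out the Pieri computation and reading off the Dynkin labels correctly, together with the observation that it suffices to work with $\SU(d)$ rather than $\U(d)$ since $\U(1)$ acts trivially by conjugation --- all routine.
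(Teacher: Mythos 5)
Your route is genuinely different from the paper's: you obtain the full multiplicity-free decomposition $\M_n^d\cong\bigoplus_{j=0}^n\D_j^d$ up front via Pieri's rule, whereas the paper builds that decomposition inductively \emph{from} the present lemma and closes the argument with an explicit binomial computation of $\dim\M_n^d-\dim\M_{n-1}^d$. Your first two steps are sound: the Pieri computation for $(n)\otimes(n^{d-1})$ does produce exactly the partitions $(n+j,n^{d-2},n-j)$ with Dynkin labels $(j,0,\ldots,0,j)$, and the weight argument correctly locates the nonzero element $N_{n,n}=|0\rangle\!\langle1|^{\otimes n}$ inside the summand $\D_n^d$, giving $\D_n^d\subseteq\ker\tr_n$.

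The gap is in your third step. The inequality $\dim\ker\tr_n\geq\dim\M_n^d-\dim\M_{n-1}^d=\dim\D_n^d$ is a lower bound, and $\D_n^d\subseteq\ker\tr_n$ is the same lower bound again; together they are perfectly consistent with, say, $\ker\tr_n=\D_0^d\oplus\D_n^d$, so they do not ``force'' equality. What you need is the complementary upper bound $\dim\ker\tr_n\leq\dim\D_n^d$, which is equivalent to surjectivity of $\tr_n$ --- precisely the fact you claim to obtain as a by-product. It has to be supplied as an input, not extracted as an output. The repair is short: $\tr_n\bigl(|\psi\rangle\!\langle\psi|^{\otimes n}\bigr)=|\psi\rangle\!\langle\psi|^{\otimes(n-1)}$, and the operators $|\psi\rangle\!\langle\psi|^{\otimes(n-1)}$ span $\M_{n-1}^d$ (a spanning fact the paper itself invokes elsewhere), so $\tr_n$ is onto and $\dim\ker\tr_n=\dim\M_n^d-\dim\M_{n-1}^d=\dim\D_n^d$ exactly. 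Equivalently, by Schur's lemma and multiplicity-freeness it suffices to exhibit, for each $j<n$, one element of the $\D_j^d$ summand with nonvanishing partial trace, e.g. $P_+\bigl(|0\rangle\!\langle1|^{\otimes j}\otimes\unity^{\otimes(n-j)}\bigr)P_+$. (For fairness: the paper's own closing step also relies tacitly on surjectivity, which it asserts separately in the proof of Lemma~\ref{lemma: M decomp D}; your proof should make it explicit since you present the argument as self-contained.)
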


\begin{proof}
	First we note that the element $N_{n,n} = |0\rangle\! \langle 1|^{\otimes n} \in \M_n^d$ satisfies $\tr_n N_{n,n} =0$, so it is contained in the kernel of the map \eqref {contraction map I 2}.
	Also, we note that the element $N_{n,n}$ is the highest weight vector of the irrep $\D_n^d$ having the Dynkin diagram specified in the statement of this lemma.
	Hence the irrep $\D_n^d$ is contained in the kernel.
	
	To complete the proof of this lemma we only need to show that $\D_n^d$ is the only irrep inside the kernel.
	This is equivalent to the dimensional matching
	\begin{equation}
		\label{eq:dimensional matching}
		{\rm dim}\M_n^d =
		{\rm dim} \D_n^d +
		{\rm dim}\M_{n-1}^d\ ,
	\end{equation}
	implied by the Isomorphism Theorem.
	In order to check the above identity we use the dimensional formula given in page 224 of \cite{Fulton91}, which tells us
	\begin{align}\label{dimension}
		{\rm dim} \M_n^d
		&= 
		\binom{d+n-1}{n}^2 ,
		\\
		{\rm dim} \D_n^d 
		&=  
		\left(\frac{2n}{d-1}+1\right) 
		\prod_{k=1}^{d-2} 
		\left(1 +\frac n k \right)^2 .
	\end{align}
	With some calculation we get
	\begin{align}
		\nonumber
		& \hspace{-5mm}
		{\rm dim}\M_n^d - {\rm dim}\M_{n-1}^d
		\\ \nonumber &=
		\binom{d+n-1}{n}^2 - 
		\binom{d+n-2}{n-1}^2 
		\\ \nonumber &= 
		\frac{(d+n-1)!^2}{(d-1)!^2\, n!^2} - 
		\frac{(d+n-2)!^2}{(d-1)!^2 (n-1)!^2} 
		\\ \nonumber &=
		\frac{(d+n-1)!^2 - n^2(d+n-2)!^2}{(d-1)!^2\,  n!^2} 
		\\ \nonumber &=
		\frac{(d+n-2)!^2}{ n!^2 (d-2)!^2}\ 
		\frac{(d+n-1)^2-n^2}{(d-1)^2}
		\\ \label{eq:dim kernel} 
		&=
		\frac{(d+n-2)!^2}{ n!^2 (d-2)!^2}\ 
		\frac{d-1+2n}{d-1}\ ,
	\end{align}
	and
	\begin{align}
		\nonumber
		{\rm dim} \D_n^d 
		= &
		\left(\frac{2n}{d-1}+1\right) 
		\prod_{k=1}^{d-2} 
		\left(1 +\frac n k \right)^2 
		\\ \nonumber = & \frac{2n + d-1}{d-1} 
		\left[(1+n)
		\left(1+\frac{n}{2} \right) \cdots 
		\left(1+\frac{n}{d-2} \right) \right]^{\!2}
		\\ \nonumber = &
		\frac{2n + d-1}{d-1} 
		\left[ \frac{n+1} 1
		\frac{n+2}{2}  \cdots 
		\frac{n+d-2}{d-2} \right]^2
		\\ \label{eq:dim D} = &
		\frac{(d+n-2)!^2}{n!^2 (d-2)!^2}
		\frac{2n + d-1}{d-1}\ .
	\end{align}
	This shows that the dimensional matching \eqref{eq:dimensional matching} holds.
\end{proof}

\section{Other work}\label{app:Other_work}

In this section we compare the theorem presented in this work with recent work in the same direction. 
	
Ref.~\cite{Cabello_the_2018}  considers probability assignments (i.e. correlation tables) for sets of measurements and their exclusivity relations. It is shown that the \emph{exclusivity principle} (derived from properties of ideal measurements), together with an assumption of composability of experiments, restricts those correlations to be exactly those allowed by quantum theory. However, this does not prove that outcome probabilities of measurements on quantum states must be given by the Born rule; states and unitaries do not play any role in~\cite{Cabello_the_2018}. This is a very different approach from the one in our paper. 
We do not assume the existence of ideal measurements, but show that the Born rule follows (under minimal operational assumptions) from the dynamical postulates of quantum theory.
	
In~\cite{Frauchiger_nonprobabilistic_2017} the Born rule is recovered from postulates which are non-probabilistic and the assumption that measurement outcomes correspond to projectors. This is comparable to the decision theoretic approach of Deutsch~\cite{Deutsch_quantum_1999} and Wallace~\cite{Wallace_how_2010} which also seeks to account for the existence of probabilities. This is in contrast to the present work, where we do not seek to explain the emergence of probabilities in quantum theory, nor do we associate measurement outcomes to projectors.

\end{document}